\documentclass[journal]{IEEEtran}

\usepackage{amsmath}
\usepackage{graphicx}
\usepackage{framed, color} 
\usepackage{pdflscape}
\definecolor{shadecolor}{rgb}{1, 0.8, 0.3}
\usepackage{epstopdf}

\usepackage{amssymb}
\usepackage{amsthm}

\usepackage{amsfonts}
\usepackage{cite}
\usepackage{ifthen}
\usepackage{epsfig}
\usepackage{subcaption}
\usepackage{array}
\usepackage{enumerate}
\usepackage{algorithm}
\usepackage[noend]{algpseudocode} 
\usepackage[all]{xy}
\usepackage{algpseudocode}

\usepackage{times}
\usepackage{tikz}
\usepackage{float}
\usetikzlibrary{shapes,positioning,arrows,automata}
\usepackage{caption}
\captionsetup{font=small}
\newtheorem{theorem}{Theorem}%[section]
\newtheorem{lemma}[theorem]{Lemma}

\newtheorem{corollary}[theorem]{Corollary}

\theoremstyle{definition}
\newtheorem{example}{Example}

\newenvironment{remark}[1][Remark]{\begin{trivlist}
\item[\hskip \labelsep {\bfseries #1}]}{\end{trivlist}}
\newenvironment{uni}[1][PhaseCode Algorithm]{\begin{trivlist}
\item[\hskip \labelsep {\bfseries #1}]}{\end{trivlist}}

\newcommand{\rr}[1]{\operatorname{Re}(#1)}
\newcommand{\ii}[1]{\operatorname{Im}(#1)}

\newcommand{\bi}{\ensuremath{\mathbf{i}}}
\newcommand{\om}{\ensuremath{\omega}}
\newcommand{\EE}{\ensuremath{\mathbb{E}}}
\newcommand{\PP}{\ensuremath{\mathbb{P}}}
\newcommand{\bd}{\ensuremath{\bar{d}}}

\newcommand{\beq}{\begin{equation}}
\newcommand{\eeq}{\end{equation}}
\newcommand{\bea}{\begin{eqnarray}}
\newcommand{\eea}{\end{eqnarray}}
\newcommand{\bean}{\begin{eqnarray*}}
\newcommand{\eean}{\end{eqnarray*}}
\newcommand{\bit}{\begin{itemize}}
\newcommand{\eit}{\end{itemize}}
\newcommand{\ben}{\begin{enumerate}}
\newcommand{\een}{\end{enumerate}}
\newcommand{\blem}{\begin{lem}}
\newcommand{\elem}{\end{lem}}
\newcommand{\bthm}{\begin{thm}}
\newcommand{\ethm}{\end{thm}}
\newcommand{\bpf}{\begin{IEEEproof}}
\newcommand{\epf}{\end{IEEEproof}}

\newcommand{\comment}[1]{}

\newcommand{\cm}[1]{\Comment{#1}}

\newcommand{\algorithmfootnote}[2][\footnotesize]{%
  \let\old@algocf@finish\@algocf@finish% Store algorithm finish macro
  \def\@algocf@finish{\old@algocf@finish% Update finish macro to insert "footnote"
    \leavevmode\rlap{\begin{minipage}{\linewidth}
    #1#2
    \end{minipage}}%
  }%
}

\newcommand{\vect}[1]{\boldsymbol{#1}}
\newcommand{\mat}[1]{\boldsymbol{#1}}
\newcommand{\SET}{\mathbb{S}}
\newcommand{\CMP}{\mathbb{C}}
\newcommand{\REAL}{\mathbb{R}}
\newcommand{\BIGO}{\mathcal{O}}
\newcommand{\IMG}{{\bf i}}
\newcommand{\TSP}{^{\rm T}}
\newcommand{\HET}{^{\rm H}}
\newcommand{\DIAG}[1]{{\rm diag}(#1)}
\newcommand{\PRO}[1]{\mathbb{P}\left\{#1\right\}}
\newcommand{\PROL}[1]{\mathbb{P}\{#1\}}
\newcommand{\EXP}[1]{\mathbb{E}\left[#1\right]}
\newcommand{\EXPL}[1]{\mathbb{E}[#1]}
\newcommand{\SEP}[1]{\left\|#1\right\|_{{\it \psi}_1}}
\newcommand{\SEPL}[1]{\|#1\|_{{\it \psi}_1}}
\newcommand{\LNR}{\mathcal{A}}
\newcommand{\SGS}[1]{\left\|#1\right\|_{{\it \psi}_2}}
\newcommand{\SGSL}[1]{\|#1\|_{{\it \psi}_2}}
\newcommand{\ABS}[1]{\left|#1\right|}
\newcommand{\ABSL}[1]{|#1|}
\newcommand{\FBN}[1]{\left\|#1\right\|_F}
\newcommand{\FBNL}[1]{\|#1\|_F}
\newcommand{\OPN}[1]{\left\|#1\right\|}
\newcommand{\OPNL}[1]{\|#1\|}
\newcommand{\TWON}[1]{\left\|#1\right\|_2}

\newcommand{\ONEN}[1]{\left\|#1\right\|_1}

\newcommand{\TR}[1]{{\rm tr}\left(#1\right)}

\newcommand{\SUPP}[1]{{\rm supp}(#1)}

\begin{document}

\title{{\bf{\em{PhaseCode}}}: Fast and Efficient Compressive \\Phase Retrieval based on Sparse-Graph Codes}

%\author{Ramtin Pedarsani, Dong Yin, Kangwook Lee, and Kannan Ramchandran\\ Dept. of Electrical Engineering and Computer Sciences\\ University of California, Berkeley\\ \{ramtin, dongyin, kw1jjang, kannanr\}@eecs.berkeley.edu}

\author{Ramtin Pedarsani, Dong Yin, Kangwook Lee, and Kannan Ramchandran\thanks{Ramtin Pedarsani is with the ECE Department at UC Santa Barbara. email: ramtin@ece.ucsb.edu.}\thanks{
Dong Yin and Kannan Ramchandran are with the EECS Department at UC Berkeley. email:\{dongyin,kannanr\}@eecs.berkeley.edu.}\thanks{
Kangwook Lee is with the EE Department at KASIT. email: kw1jjang@kaist.ac.kr.}\thanks{This paper was presented in part in Allerton 2015 and IEEE ISIT 2016.}}
%\date{}
\maketitle

\begin{abstract}
We consider the problem of recovering a complex signal $\vect{x} \in \mathbb{C}^n$ from $m$ intensity measurements of the form $\ABSL{\vect{a}_i\HET\vect{x}}, ~1 \leq i \leq m$, where $\vect{a}_i\HET$ is the $i$-th row of measurement matrix $\mat{A} \in \mathbb{C}^{m \times n}$. Our main focus is on the case where the measurement vectors are unconstrained, and where $\vect{x}$ is exactly $K$-sparse, or the so-called general compressive phase retrieval problem.  
We introduce {\bf \em{PhaseCode}}, a novel family of fast and efficient algorithms 
%(that includes { \bf \em{Unicolor PhaseCode}} and {\bf \em{Multicolor PhaseCode}}) 
that are based on a sparse-graph coding framework.  
% As one instance, our ``regular'' PhaseCode algorithm  can provably recover, with high probability, all but a tiny $10^{-7}$ fraction of the significant signal components, using at most $m=14K$ measurements, which is a small constant factor from the fundamental limit, with an optimal $\mathcal{O}(K)$ decoding time and an optimal $\mathcal{O}(K)$ memory complexity. 
We show that in the noiseless case, the PhaseCode algorithm can recover an arbitrarily-close-to-one fraction of the $K$ non-zero signal components using only slightly more than $4K$ measurements when the support of the signal is uniformly random, with order-optimal time and memory complexity of $\Theta(K)$\footnote{
Here, we define the notation $\BIGO(\cdot)$, $\Theta(\cdot)$, and $\Omega(\cdot)$. We have $f=\BIGO(g)$ if and only if there exists a constant $C_1>0$ such that $\ABS{f/g}<C_1$; $f=\Theta(g)$ if and only if there exist two constants $C_1,C_2>0$ such that $C_1<\ABS{f/g}<C_2$; and $f=\Omega(g)$ if and only if there exists a constant $C_1>0$ such that $\ABS{f/g}>C_1$.}. It is known that the fundamental limit for the number of measurements in compressive phase retrieval problem is $4K - o(K)$ for the more difficult problem of recovering the signal exactly and with no assumptions on its support distribution \cite{Tarokh,Heinosaari}. This shows that under mild relaxation of the conditions, our algorithm is the first constructive \emph{capacity-approaching} compressive phase retrieval algorithm: in fact, our algorithm is also order-optimal in complexity and memory. Further, we show that for any signal $\vect{x}$, PhaseCode can recover a random $(1 - p)$-fraction of the non-zero components of $\vect{x}$ with high probability, where $p$ can be made arbitrarily close to zero, with sample complexity $m = c(p)K$, where $c(p)$ is a small constant depending on $p$ that can be precisely calculated, with optimal time and memory complexity.  As a result, assuming that the non-zero components of $\vect{x}$ are lower bounded by $\Theta(1)$ and upper bounded by $\Theta(K^{\gamma})$ for some positive constant $\gamma < 1$, we are able to provide a strong $\ell_1$ guarantee for the estimated signal $\hat{\vect{x}}$ as follows: $\| \hat{\vect{x}} - \vect{x}\|_1 \leq p \| \vect{x} \|_1(1 + o(1))$, where $p$ can be made arbitrarily close to zero. As one instance, the PhaseCode algorithm can provably recover, with high probability, a random $1 - 10^{-7}$ fraction of the significant signal components, using at most $m = 14K$ measurements.

Next, motivated by some important practical classes of optical systems, we consider a ``Fourier-friendly" constrained measurement setting, and show that its performance matches that of the unconstrained setting, when the signal is sparse in the Fourier domain with uniform support. {In the Fourier-friendly setting that we consider, the measurement matrix is constrained to be a cascade of Fourier matrices (corresponding to optical lenses) and diagonal matrices (corresponding to diffraction mask patterns).}  %We also study the general non-sparse signal case, for which we propose a simple deterministic set of $3n-2$ measurements that can recover the $n$-length signal under some mild assumptions.
%Without these assumptions, our measurements will not guarantee a unique %solution for the phase retrieval problem, which is not surprising given the %conjecture in \cite{Bandeira} that $4n - \mathcal{O}(1)$ measurements is a %fundamental limit for \emph{injectivity} of measurements.}} 

Finally, we tackle the compressive phase retrieval problem in the presence of noise, where measurements are in the form of $y_i=\ABSL{\vect{a}_i\HET\vect{x}}^2+w_i,$
and $w_i$ is the additive noise to the $i$th measurement. We assume that the signal is quantized, and each non-zero component can take $L_m$ possible magnitudes and $L_p$ possible phases. We consider the regime where $K=\beta n^\delta$, $\delta\in(0,1)$.
We use the same architecture of {\em PhaseCode} for the noiseless case, and robustify it using two schemes: the almost-linear scheme and the sublinear scheme. We prove that with high probability, the almost-linear scheme recovers $\vect{x}$ with sample complexity
$\Theta(K \log(n))$ and computational complexity $\Theta(L_m L_p n \log(n))$, and the sublinear scheme recovers $\vect{x}$ with sample complexity $\Theta(K\log^3(n))$ and computational complexity $\Theta(L_m L_p K\log^3(n))$.

Throughout, we provide extensive simulation results that validate the practical power of our proposed algorithms for the sparse unconstrained and Fourier-friendly measurement settings, for noiseless and noisy scenarios.  
 %A key contribution of our work is the novel use of coding-theoretic tools like density evolution methods for the design and analysis of fast and efficient algorithms for compressive phase-retrieval problems.  This contrasts and complements popular approaches to the phase retrieval problem based on greedy algorithms, alternating-minimization, and semi-definite programming.
\end{abstract}

\section{Introduction}

\subsection{Phase Retrieval Problem}
Compressive sensing (CS) has recently emerged as a powerful framework for understanding the fundamental limits for signal acquisition and recovery \cite{Candes4,Donoho}.  The basic premise of CS is that a high-dimensional signal that is sparse in some basis can be recovered from linear projections of the signal with respect to an appropriate lower-dimensional measurement system.    A key attribute of CS is that the measurement system is linear and phase-preserving.  That is, the acquired samples, complex-valued in general,  contain both the magnitude and phase of the measurements.

In many applications of interest, e.g. related to optics \cite{Walther}, X-ray crystallography \cite{Milane,Harrison}, astronomy \cite{Dainty}, ptychography \cite{Rodenburg}, quantum optics \cite{Mohammad}, etc., the phase information in the measured samples is not available.   For example, in optical systems, 
one can measure only the intensity of the measurements as they relate to the photon count on a detector.  Thus, the phase of the measurements is lost.  Indeed, the problem of recovering a signal from only the magnitude of its Fourier transform has been a well-studied problem in the signal processing literature for several decades under the umbrella of phase retrieval \cite{Hayes}.  It has recently received renewed interest in the ``post-compressed-sensing'' era \cite{Baraniuk,Rangan,Sastry}, allowing for the  insights from compressive sensing to be incorporated into the phase retrieval problem when the signal of interest is sparse, and the measurement matrix is unconstrained. 

Concretely, consider a signal $\vect{x} \in \mathbb{C}^n$ and a measurement matrix $\vect{A} \in \mathbb C^{m \times n}$. The phase retrieval problem is to recover $\vect{x}$ from the observations $\vect{y} = \ABSL{\mat{A}\vect{x}}, \vect{x} \in {\mathbb C}^n$, where the magnitude is taken on each element of the vector $\mat{A}\vect{x}$. The compressive phase retrieval problem targets the case where $\vect{x}$ is $K$-sparse.

In this paper, we study the phase retrieval problem under the following settings:

\begin{itemize}
\item[(i)]    General compressive phase retrieval of sparse signals\footnote{This is easily extended, as is well known, to the case where the signal $\vect{x}$ is sparse w.r.t. some other basis, such as a wavelet, but in the interests of conceptual clarity, we will not consider such extensions in this work.}; and
\item[(ii)] ``Fourier-friendly'' compressive phase retrieval of signals having a sparse spectrum. 
%\item[(iii)] Phase-retrieval of signals that are not sparse, under both the general and Fourier-friendly measurement settings.
\end{itemize}

We now summarize these settings:

\begin{itemize}
\item[(i)]  {\bf General compressive phase retrieval of sparse signals}:   In this setting, we are free to design the measurement matrix $A$ without any constraints, and this represents the primary contribution of this paper.  We consider it for three reasons.

   {\bf (1)} It is of broadest theoretical interest, being the most general  compressive phase retrieval problem, for which we propose a sparse-graph coding framework that is a significant departure from currently popular approaches based on convex optimization, Semi-Definite Programming (SDP), alternating minimization, gradient descent, etc. \cite{Candes1,Sanghavi,Sastry,Li,Hassibi1,Hassibi2,Mahdi}.  

 {\bf (2)} It provides the intellectual insights and the foundational framework needed to address more constrained problems, such as those studied under the Fourier-friendly setting of category {\bf (ii)}.  

{\bf (3)} It is of independent interest in applications related to certain quantum optical systems.  For example, compressive sensing has been used in recent work involving quantum optics\cite{Mohammad} to measure the transverse wavefunction of a photon, where the design of the measurement matrix has no constraints.

\item[(ii)] {\bf Fourier-friendly compressive phase retrieval of signals having a sparse spectrum}: In this category, motivated by applications related to Fourier optical systems, the measurement matrix $A$ is constrained to be Fourier-friendly (see Section \ref{sec:practical} for a detailed treatment).  Concretely, 
$\mat{A}$ is constrained to be the cascade of (up to a couple of) stages of a diagonal matrix (corresponding to a so-called optical mask or coded diffraction pattern) and a Fourier transform (corresponding to an optical lens).  This constraint is motivated by practical optical systems \cite{Popov}, array imaging \cite{Bunk}, etc., as also addressed recently by \cite{Candes3}.

%\item[(iii)] {\bf Phase-retrieval of general non-sparse signals}: 
%\textcolor{red}
%{Finally, in the interests of completeness, we address the case where the signal of interest is not sparse; i.e., the classical phase-retrieval problem.  Under this category, we address both the general (unconstrained $A$) as well as the Fourier-friendly (constrained $A$) settings.  See Section \ref{sec:nonsparse} for details.}

%See Sections \ref{sec:nonsparse} and \ref{sec:practical} for details.}
\end{itemize}

\subsection{Main Contributions}
A key contribution of this work is in the introduction of modern coding theory techniques such as density evolution and sparse-graph codes \cite{RUbook} for the compressive phase retrieval problem.   Exploiting these techniques and a similar measurement system to \cite{Jaggi,Hassibi1} allows us to come up with the provably efficient and fast  PhaseCode algorithm that is order-optimal in terms of number of measurements needed, time-complexity, and memory-complexity, which are all $\mathcal{O}(K)$.   Furthermore, we provide precise constants for the number of measurements needed to achieve a targeted reliability.  
To the best of our knowledge, \emph{this is the first work that provides precise constants for the number of measurements}.  More specifically, the main contribution of this paper are the following:
\begin{itemize}
\item [(i)] For an arbitrary signal $\vect{x}$, the PhaseCode algorithm can provably recover a random fraction of at least $1 - 10^{-7}$ of the active signal components with $14K$ measurements, with optimal time and memory complexity $\Theta(K)$.  This is one instance of an entire family of trade-offs between the number of measurements needed and the fraction of non-zero signal components that can be recovered using PhaseCode. More precisely, we show that for any signal $\vect{x}$, PhaseCode can recover a random $(1 - p)$-fraction of the non-zero components of $\vect{x}$ with high probability, for arbitrarily-close-to-zero constant $p$ with sample complexity $m = c(p)K$, where $c(p)$ is a small constant depending on $p$ that can be precisely calculated. As a result, assuming that the non-zero components of $\vect{x}$ are lower bounded by $\Theta(1)$ and upper bounded by $\Theta(K^{\gamma})$ for some positive constant $\gamma < 1$, we are able to provide a strong $\ell_1$ guarantee for the estimated signal $\hat{\vect{x}}$ as follows: $\| \hat{\vect{x}} - \vect{x}\|_1 \leq p \| \vect{x} \|_1(1 + o(1))$, where $p$ can be made arbitrarily close to zero.

\item [(ii)] The PhaseCode algorithm can recover an arbitrarily-close-to-one fraction of the non-zero components of $\vect{x}$ using $4K(1+\epsilon)$ measurements for an arbitrarily small constant $\epsilon > 0$, when the support of the non-zero components of $\vect{x}$ is uniformly random, with optimal time and memory complexity of $\Theta(K)$.  It is well-known that $4K - o(K)$ measurements is the fundamental limit for unique recovery of $K$-sparse signals \cite{Tarokh,Heinosaari} for the more difficult problem of recovering the signal exactly with no assumptions on the support of the signal. This shows that under mild relaxation of the conditions, the PhaseCode algorithm is \emph{capacity-approaching}.  
%We note a few caveats, which are why we call our algorithm capacity-approaching.  First, PhaseCode has an arbitrarily small error floor, which is however still non-vanishing as $K$ goes to infinity.  Secondly, in order to prove that the algorithm can recover $x$, we need to assume that the signal has random support (of non-zero components). 
 
\item [(iii)] Another key contribution of this work is to adapt the PhaseCode algorithm to a more constrained Fourier-friendly setting that is useful in certain optical systems, when $\vect{x}$ has a sparse spectrum. Specifically, we show how it is possible to elegantly integrate the Chinese-Remainder-Theorem-centric framework of Pawar and Ramchandran \cite{Sameer} (that was used to find a fast sparse Discrete-Fourier-Transform) into our PhaseCode framework without any loss of system performance in terms of measurement cost or  computational complexity.  See Section \ref{sec:practical} for details.

\item [(iv)] We demonstrate that PhaseCode can be robustified in the presence of noise.
We use the same architecture of {\em PhaseCode} for the noiseless case, and robustify it using two schemes: the almost-linear scheme and the sublinear scheme. We assume that the signal is quantized, and each non-zero component can take $L_m$ possible magnitudes and $L_p$ possible phases. We prove that with high probability, the almost-linear scheme recovers $\vect{x}$ with sample complexity
$\Theta(K \log(n))$ and computational complexity $\Theta(L_m L_p n \log(n))$, and the sublinear scheme recovers $\vect{x}$ with sample complexity $\Theta(K\log^3(n))$ and computational complexity $\Theta(L_m L_p K\log^3(n))$.

\end{itemize}
%Next, we address the non-sparse case, and propose a set of $3n-2$ measurements that guarantee unique reconstruction of the signal under some mild assumptions.\footnote{ These mild assumptions ensure that there is no contradiction between our results and fundamental limits based on \emph{injectivity} requirements studied in the literature \cite{Bandeira}.} This set of measurements can also be achieved using only 3 diagonal matrices (diffraction masks) and Fourier blocks (optical lenses). 
%\textcolor{red}{{\bf Be a bit more descriptive?}}  
%See Section \ref{sec:nonsparse} for details.

We provide pseudocode of our algorithms (in Appendix \ref{app:pseudocode}) and an extensive set of simulation results for all of the above settings that validate our theoretical findings, and verify the close match between theory and practice.  

\subsection{Related Work}

The phase retrieval problem has been studied extensively over several decades.  We do not attempt to provide a comprehensive  literature review here; instead, we highlight here only some of the pertinent and diverse approaches to  this problem that we are aware of.  A large body of literature is dedicated to the phase retrieval problem for the case where the signal to be recovered has no structure and is not sparse. ``Phaselift" proposed by Candes \emph{et al.}  \cite{Candes1} and ``PhaseCut" proposed by Waldspurger \emph{et al.}  \cite{Waldspurger} are examples of convex optimization methods to solve the problem using semi-definite programming with $\Theta(n \log(n))$ measurements. While algorithms based on SDP provide theoretical performance guarantees and are robust to noise, they suffer from a high computational complexity of $\mathcal{O}(n^3)$ rendering them unsuited for many practical applications that require $n$ to scale.\footnote{This limits the use of SDP-based methods to small to moderate values of $n$ in practice.  In contrast, we show simulations in the paper where $n$ can be very large, even as large as $10^{10}$. See Figures \ref{fig:fig1_performance} and \ref{fig:fig3_time_complexity}.}
In \cite{Sanghavi}, the authors propose an algorithm based on alternating minimization that reconstructs the signal with $\Theta(n\log(n)^3)$ measurements. In \cite{Mahdi}, the authors propose a non-convex algorithm based on Wirtinger flow that reconstructs the signal with measurement and computational complexity of $\Theta(n \log n)$. 

In \cite{Balan,Bandeira,Bodmann,Heinosaari}, several sets of authors investigate the fundamental limits of phase retrieval problem, with the goal of finding necessary or sufficient conditions on the minimum number of measurements
needed to guarantee that the solution is unique.  In summary, $4n-4$ measurements are shown to be sufficient \cite{Bodmann}, and $4n - o(n)$ measurements are necessary \cite{Heinosaari} to reconstruct any signal perfectly.  
%\textcolor{red}{{\bf (Add note here on the 3n-2 results with i.i.d. Gaussian entries, etc.?)}}

We now review some relevant literature on compressive phase retrieval. To the best of our knowledge, the first algorithm for compressive phase retrieval was proposed by Moravec \emph{et al.} in \cite{Baraniuk}. This approach requires knowledge of the $\ell_1$ norm of the signal, making it impractical in most scenarios. The authors in \cite{Tarokh} showed that $4K-1$ measurements are theoretically sufficient to reconstruct the signal, but did not propose any low-complexity algorithm. This number was later improved to $4K -2$ in \cite{wang2014phase,akccakaya2015sparse}. The PhaseLift method is also proposed for the sparse case in \cite{Sastry} and \cite{Li}, requiring $\Theta(K^2\log(n))$ intensity measurements, and having a computational complexity of $\mathcal{O}(n^3)$, making the method less practical for large-scale applications. In \cite{bandeira2013near}, the authors propose an efficient algorithm based on polarization method that is able to stably
reconstruct any $K$-sparse vector from $\Theta(K \log(n))$ noisy intensity measurements with complexity polynomial in $n$.  The alternating minimization method in \cite{Sanghavi} can also be adapted to the sparse case with $\Theta(K^2\log(n))$ measurements and a complexity of $\mathcal{O}(K^3n \log(n))$.  Compressive phase retrieval  via generalized approximate message passing (PR-GAMP) is proposed in \cite{Rangan}, with good performance in both runtime and noise robustness shown via simulations without theoretical justification.  

A common attribute of all of the above-mentioned compressive phase retrieval references is that they assume that the measurement matrix can be designed freely.  This renders them inapplicable to many application-constrained settings
such as Fourier-optical systems.   In \cite{Candes3}, Candes \emph{et al.} consider measurement matrices that are Fourier-friendly as described in the previous subsection, but only for the non-sparse case.  They show that  PhaseLift is able to recover the signal with $\Theta(n \log(n)^4)$ measurements by using $\Theta(\log(n)^4)$ masks or coded diffraction patterns. For the sparse case,  Jaganathan \emph{et al.} consider the phase retrieval problem from Fourier measurements only \cite{Hassibi1,Hassibi2}. They propose an SDP-based algorithm, and show that the signal can be provably recovered with $\Theta(K^2 \log(n))$ Fourier measurements \cite{Hassibi1}. They also propose a combinatorial algorithm for the case where the measurement matrix can be designed
without constraints, and show that the signal can be recovered with $\Theta(K \log(n))$ measurements and time complexity of $\mathcal{O}(Kn \log(n))$ \cite{Hassibi1}. 

In the prior literature that we are aware of, the works which overlap the most in spirit with ours are (i) the recently proposed SUPER algorithm for compressive phase retrieval by Cai \emph{et al.} in \cite{Jaggi}; and (ii) the FFAST algorithm of Pawar and Ramchandran \cite{Sameer} which also features the use of coding-theoretic tools for efficiently computing  a sparse Discrete Fourier Transform.  With regard to the FFAST algorithm \cite{Sameer}, despite the common use of coding-theoretic tools, our problem formulation, analysis, and resulting algorithm are significantly different, mainly because our problem involves the loss of measurement phase, unlike that of FFAST. 

With regard to the SUPER algorithm of \cite{Jaggi}, again, while there are some similarities between the two approaches -- mainly to do with the use of certain system subcomponents such as a similar (but not identical) trigonometric-modulation method to resolve phase ambiguities, and the common use of a giant-component-cluster in the initial phase of our proposed PhaseCode algorithm (see Section \ref{sec:proof1} for details), our works are significantly distinct at many levels.  First, the SUPER algorithm targets only the general unconstrained compressive phase retrieval setting, whereas, as described earlier, we also target Fourier-friendly constrained settings that are applicable in optical systems.  Secondly, even in the unconstrained phase retrieval setting, there are significant distinctions between the two works with respect to theory, algorithm, and performance guarantees. As a quick  overview, the SUPER algorithm uses  $\Theta(K)$ measurements and features $\Theta(K\log(K))$ complexity with a zero-error-floor asymptotically.  In contrast, by trading off the zero-error-floor for an arbitrarily-small controllable error-floor, our solution features key advantages.  Specifically, this allows us to design a capacity-approaching measurement system that is based on a new and novel sparse-graph coding framework. The use of a sparse-graph coding framework in PhaseCode allows for  iterative message-passing operations between the left nodes (signal components) of the sparse-graph code and the right nodes or measurements (see Section \ref{sec:ed}).  This contrasts the more inefficient strictly ``one-way'' procedure in SUPER \cite{Jaggi} wherein measurements of different stages are processed sequentially rather than iteratively. Moreover, PhaseCode has an optimal $\Theta(K)$ decoding complexity with optimal $\Theta(K)$ memory requirements.  
We also demonstrate how PhaseCode can be robustified in the presence of noise, unlike the work of \cite{Jaggi}. We note that SUPER can also achieve $O(K)$ results with error floor. However, their approach is unable to characterize and optimize this error floor when the number of measurements is $cK$ for a specific constant $c$. Finally, we note that peeling-based algorithms and expander graphs have been used for compressive sensing \cite{XuHassibi}.

\subsection{Paper Organization}

The rest of the paper is organized as follows. 
In Section \ref{sec:formulation}, we define the general compressive phase retrieval problem. In Section \ref{sec:mainidea}, we explain the main idea of PhaseCode algorithm. We present PhaseCode algorithm in detail in Section \ref{sec:ed}. The main theoretical results of the paper are provided in Section \ref{sec:analysis}. Via extensive simulations, we evaluate PhaseCode algorithms, validating the theorem. In Section \ref{sec:practical}, we demonstrate how our proposed measurements can be adapted to a Fourier-friendly setting. In Section \ref{sec:noisy}, we show that PhaseCode can be robustified to noise. Finally, we conclude the paper in Section \ref{sec:con}.
%In Section \ref{sec:sparse}, we consider the general compressive phase retrieval where the signal $x$ is $K$-sparse. The Unicolor PhaseCode and Multicolor PhaseCode algorithms to recover $x$, are proposed in Subsection \ref{sec:ed}. The main theorem of the paper is also provided in this subsection. The analysis of Unicolor PhaseCode and the proof of the main theorem is provided in Subsection \ref{sec:analysis}. Via extensive simulations, we evaluate both Unicolor and Multicolor PhaseCode algorithms, validating the theorem. In Section \ref{sec:practical}, we demonstrate how our proposed measurements for the sparse case can be obtained in a Fourier-friendly setting. 
%%In Section \ref{sec:nonsparse}, we consider the case that $x$ is non-sparse, and provide a simple yet effective set of measurements to recover the signal, in both general and Fourier-friendly settings. 
%Finally, the paper is concluded in Section \ref{sec:con}.

%\section{General Compressive Phase Retrieval}\label{sec:sparse}

\section{Problem Formulation and Overview of the Main Result}\label{sec:formulation}
Consider a complex signal $\vect{x} \in \mathbb{C}^n$ of length $n$ which is exactly $K$-sparse; that is, only $K$ out of $n$ components of vector $\vect{x}$ are non-zero. Let $\mat{A} \in \mathbb{C}^{m \times n}$ be the measurement matrix that needs to be designed. The phase retrieval problem is to recover the signal $\vect{x}$ from magnitude measurements $y_i = \ABSL{\vect{a}_i\HET\vect{x}}$, where $\vect{a}_i\HET$ is the $i$-th row of measurement matrix $\mat{A} \in \mathbb{C}^{m \times n}$. Figure \ref{fig:formulation} illustrates the block diagram of our problem. 

\begin{figure}[h]
\centering
    \includegraphics[width= 0.45\textwidth]{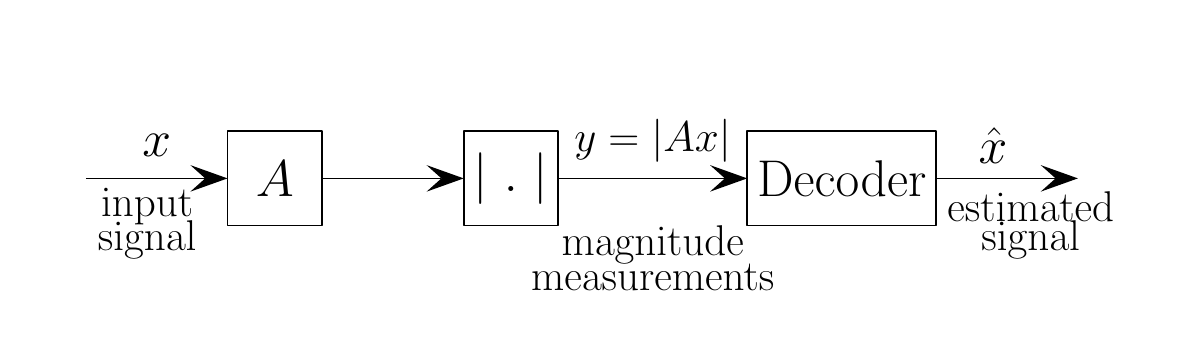}
  \caption{Block diagram of general compressive phase retrieval problem. The measurements are  $y_i = \ABSL{\vect{a}_i\HET\vect{x}}$, where $\vect{a}_i\HET$ is the $i$-th row of measurement matrix $\mat{A}$. The objectives are to design measurement matrix $\mat{A}$ and the decoding algorithm to guarantee high reliability, while having small sample complexity as well as small time and memory complexity. \label{fig:formulation}} 
\end{figure}

\begin{figure*}
\vspace{-0.2in}
\centering
\begin{subfigure}{0.4\textwidth}
\centering
\includegraphics[width= \textwidth]{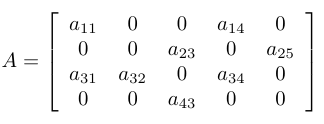}
%\vspace{0.1in|
\caption{\footnotesize{ Measurement matrix $A$.}}
\label{fig:bipartite_a}
\end{subfigure}
\qquad
\begin{subfigure}{0.2\textwidth}
\centering
\includegraphics[width= \textwidth]{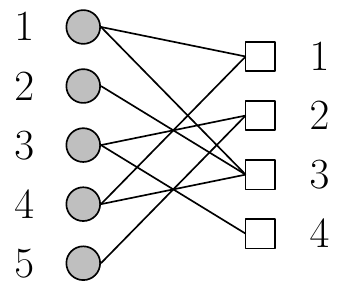}
\caption{\footnotesize {Bipartite graph $G$.}}
\label{fig:bipartite_g}
\end{subfigure}
\vspace{-0.1in}
\caption{\footnotesize{\textbf{Sparse graph codes.} The rows of $A$ (the measurements) correspond to right nodes in the bipartite graph $G$, while the columns of $A$ (the signal components) correspond to the left nodes of $G$.}}
\label{fig:ex_bipartite}
\vspace{-0.1in}
\end{figure*}

The main objectives of the general compressive phase retrieval problem is to design matrix $\mat{A}$, and the decoding algorithm to recover $\vect{x}$ such that 
\begin{itemize}
\item The number of measurements $m$ is as small as possible. Ideally, one wants $m$ to be close to the fundamental limit of $4K - o(K)$ \cite{Tarokh,Heinosaari}. 
\item The decoding algorithm is fast with low computational complexity and memory requirements. Ideally, one wants the time complexity and the memory complexity of the algorithm to be $\mathcal{O}(K)$, which is optimal. 
\item The reliability of the recovery algorithm should be maximized. Ideally, one wants the probability of failure to be vanishing as the problem parameters $K$ and $m$ get large.
\end{itemize}

\begin{remark}
In this work, we are interested in the asymptotic $K$ regime. However, even when $K$ is small, with proper modification of our algorithm, high reliability can be guaranteed when $m$ gets large. It is worth mentioning that in this case, the number of measurements will be larger than the  fundamental limit that is $4K(1 + o(1))$. We do not discuss this any further in the interest of presentation clarity.
\end{remark}

The main result of our paper is stated in the following (informal) theorem. 

\begin{theorem}
Consider a $K$-sparse signal $\vect{x} \in \mathbb{C}^n$, and the measurement matrix $\mat{A} \in \mathbb{C}^{m \times n}$ chosen by the PhaseCode algorithm. 
\begin{itemize}
\item [(i)] PhaseCode can recover a random $(1 - p)$-fraction of the non-zero components of $\vect{x}$ with high probability, for arbitrarily-close-to-zero constant $p$. The measurement complexity of the algorithm is $m = c(p)K$, where $c(p)$ is a small constant depending on $p$ that can be precisely calculated. The time and memory complexity of PhaseCode are also $\Theta(K)$. Further, for the estimated signal $\hat{\vect{x}}$, assuming that the non-zero components of $\vect{x}$ are lower bounded by $\Theta(1)$ and upper bounded by $\Theta(K^{\gamma})$ for some positive constant $\gamma < 1$, we have 
$$\| \hat{\vect{x}} - \vect{x}\|_1 \leq p \| \vect{x} \|_1 \left (1 + \Theta(\frac{1}{\log(K)}) \right ).$$ 
\item [(ii)] Assuming that the support of $\vect{x}$ is distributed uniformly at random, with high probability, PhaseCode can recover an arbitrarily-close-to-one fraction of the non-zero components with $m = 4K(1+\epsilon)$ measurements for arbitrarily small constant $\epsilon > 0$. 
\end{itemize}

These results are more precisely stated in Theorems \ref{thm:main} and \ref{thm:2} in Section \ref{sec:analysis}. See Table \ref{tab:1} for some selected values of $p$ and $m$.

\end{theorem}

\section{Main Idea of the PhaseCode Algorithm}\label{sec:mainidea}
We now describe the main idea behind PhaseCode.  As mentioned, the main novelty of our work is that we use sparse-graph codes, and the powerful tools of modern coding theory for design and analysis.

The design of an appropriate measurement matrix $\mat{A}$ for the compressive phase retrieval problem is equivalent to the design of an appropriate bipartite graph $G$, as for each measurement matrix, there exists a corresponding bipartite graph. 
Specifically, the rows of $\mat{A}$ (the measurements) are the right nodes in the bipartite graph $G$, while the columns of $\mat{A}$ (the signal components) are left nodes of $G$. We call the left nodes of $G$ that correspond to an active (non-zero) signal component as active left nodes. Left node $i$ is connected to right node $j$ if $a_{ji}$ is non-zero. The example shown in Figure \ref{fig:ex_bipartite} illustrates this connection.

%\begin{wrapfigure}{r}{0.2\textwidth}

%
%
%\begin{figure}[h]
%\end{figure}
%
%\centering
%\vspace{-0.1in}
%\includegraphics[width=0.2\textwidth]{figs/}
%\caption{Bipartite graph $G$.}
%\label{fig:ex_bipartite}
%\vspace{-1.3in}
%\end{wrapfigure}
%
%\begin{align*}
%A = \left [ \begin{array}{ccccc}
%a_{11} & 0 & 0 & a_{14} & 0 \\
%0 & 0 & a_{23} & 0 & a_{25} \\
%a_{31} & a_{32} & 0 & a_{34} & 0 \\
%0 & 0 & a_{43} & 0 & 0 \\
%\end{array} \right]
%\end{align*}
As
is well-known and also intuitive, in the phase-retrieval problem, the signal of interest can be recovered
only to within an unknown global phase. The idea of our iterative reconstruction algorithm is to detect a non-zero signal component,
give it global zero-phase, and align all other signal components with respect to it. This suggests
the intuition of building up one or more clusters of non-zero components, where in our terminology,
these clusters are identified by their colors; i.e. all the non-zero components belonging to a particular cluster have the same
color. Two (or more) non-zero components (active left nodes) can be colored with the same color if their components are known in location, magnitude and phase relative to each other.

Our goal in designing the measurement matrix of the sparse graph is to create \emph{iteratively decodable} right nodes (set of appropriately designed measurements). The key property of a right node that is conducive to our desired coloring operation is as follows.  
\emph{If a right node  is connected to one or more known components (colored active left nodes with the same color) and exactly one uncolored active left node (unresolved active signal component), then that component can be resolved, i.e. the uncolored active left node will be colored with the same color. } See Figure \ref{fig:multiton}.
%
%\begin{wrapfigure}{r}{0.25\textwidth}
%\centering
%\vspace{-.2in}
%\includegraphics[width= 0.25\textwidth]{}
%\caption{\footnotesize{\textbf{Coloring operation in the PhaseCode algorithm.} This figure illustrates when a bin (right node) contains one or more colored balls and {\em exactly one} uncolored ball, then the uncolored ball gets colored.}}
%\label{fig:multiton}
%\vspace{-0.2in}
%\end{wrapfigure}
%
%For now, let us assume the presence of a genie, which can be invoked as follows: we can ask the genie if a right node is indeed connected to one or more colored active left node and {\em exactly one uncolored active left node} (as in Fig. \ref{fig:multiton}). The genie can reliably answer yes or 
%no. If the answer is yes, the genie will color the uncolored active left node. (Physically, this means that the corresponding active signal component gets fully resolved (w.r.t. location index, magnitude, and phase), and hence gets colored.) 

\begin{figure}
\centering
    \includegraphics[width= 0.35\textwidth]{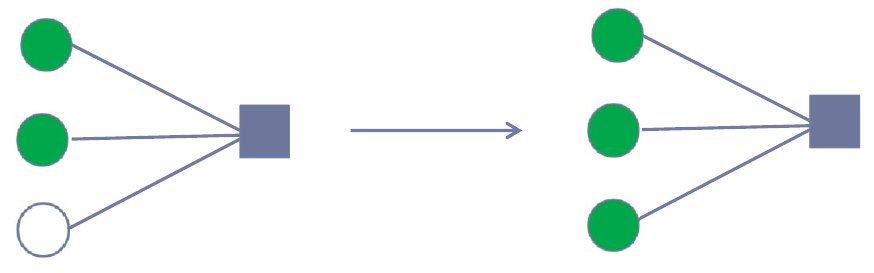}
  \caption{\textbf{Coloring operation.} The figure illustrates when a right node is connected to exactly one uncolored active left node, and the other active left nodes connected to the right node are colored with the same color, then the uncolored active left node is colored with that color. In the graph, we have shown only the active left nodes.\label{fig:multiton}} 
\end{figure}

Our idea is to make this coloring ``primitive operation" {\em iteratively trigger}
more such coloring primitive operations in the system.  
%This is induced by the multiple
%copies of the newly colored ball present in other system measurement bins, as per the bipartite graph design.  This creates a {\em domino effect of ball-coloring} that can be rigorously characterized as a phase-transition phenomenon, as in statistical physics, which we will describe shortly using a 
%density-evolution formalism. (See also Fig. \ref{fig:vis_sim} for a visualization of this.)  
Of course, the key is to design the graph efficiently to ensure that the domino-effect will continue till 
{\em all} the active left nodes are colored, while minimizing the number of right nodes needed to accomplish this (measurement cost). 

This is the high-level connection between the compressive phase retrieval problem and sparse-graph code design. Our recovery process is conceptually similar to the ``peeling" decoding of packets based on Low-Density-Parity-Check (LDPC) codes in packet-erasure communication systems, with the key distinction that {\em we cannot measure phase}.   
This makes our problem more challenging, therefore requiring a different analysis of the density evolution in the graph, as we will describe.  
But at a high level, our coloring primitive operation plays the analogous role of peeling in LDPC decoding. 

Of course, a natural question is how our measurement system detects if a right node is indeed connected to one or more colored active left node and exactly one uncolored active left node.  We can do so with a set of 4 cleverly designed ``trigonometric'' measurements that are part of each right node. 
%See Figure \ref{fig:ex_it} for the $4$ measurements in a toy example of $n=5$ and $K=2$. 
We will explain the trigonometric measurements in detail in Section \ref{sec:measurement}. 

\section{PhaseCode Algorithm}\label{sec:ed}
%Suppose that $x \in \mathbb{C}^n$ is exactly $K$-sparse. 
First we define $\mat{A} \in \mathbb{C}^{4M \times n}$ to be a ``row tensor product"\footnote{Here, we apologize for not following popular convention for the notation for tensor product of matrices; instead, we define our own notation that is convenient for our purpose, which should hopefully not cause any confusion.} of matrices $\mat{T}$ and $\mat{H}$, where $\mat{H} \in \{0,1\}^{M \times n}$ is a binary ``code" matrix, to be shortly explained, and $\mat{T} \in 4 \times n$ is the ``trigonometric modulation'' matrix that provides $4$ measurements per each row of $\mat{H}$. We define a row tensor product of matrices $\mat{T}$ and $\mat{H}$, $\mat{T} \otimes \mat{H}$, as follows. Let $\mat{A} = \mat{T} \otimes \mat{H} = [\mat{A}_1\HET, \mat{A}_2\HET, \ldots, \mat{A}_M\HET]\HET$ and $\mat{A}_i \in \mathbb{C}^{4 \times n}$. Then, $A_i(jk) = T_{jk} H_{ik}, ~ 1\leq j \leq 4, ~ 1 \leq k \leq n$. 

\begin{example}
Consider matrices 
$$
\mat{T} = \left [ \begin{array}{ccc}
0.1 & 0.2 & 0.3 \\
0.4 & 0.5 & 0.6 
\end{array} \right] 
 ~ \text{and} ~ 
\mat{H} = \left [ \begin{array}{ccc}
0 & 1 & 0 \\
1 & 1 & 0 \\
0 & 0 & 1
\end{array} \right]. 
$$
Then, our measurement matrix $\mat{A}$ is designed from:
$$
\mat{A} = \mat{T} \otimes \mat{H} = \left [ \begin{array}{ccc}
0 & 0.2 & 0 \\
0 & 0.5 & 0 \\
0.1 & 0.2 & 0 \\
0.4 & 0.5 & 0 \\
0 & 0 & 0.3 \\
0 & 0 & 0.6
\end{array} \right].
$$
\end{example}
Matrix $\mat{H}$ is constructed using a carefully chosen random bipartite graph model with $n$ left nodes and $m$ right nodes. Each left node refers to a component of $x$, and each right node refers to a set of 4 measurements. There are $K$ active left nodes corresponding to the $K$ non-zero components of $x$. The bipartite graph is constructed as follows. $H_{ij} = 1$ if and only if left node $j$ is connected to right node $i$, and $H_{ij} = 0$ otherwise. 
%Formally, we construct the ensemble of $d$-left regular degree bipartite graphs $\mathcal{C}^n(d,M)$, using a balls-and-bins model as follows. We construct a bipartite graph of $n$ left nodes and $M$ right nodes. 
%Each ball corresponds to a left node, and each bin corresponds to a right node. 
%When a ball goes to a bin, we construct an undirected edge between the corresponding left and right nodes in the bipartite graph. 

While we provide the details of how to design matrix $\mat{T}$ in Section \ref{sec:measurement}, for completeness of the description, we state it precisely here deferring explanation to Section \ref{sec:measurement}. Let $\om '$ be a uniformly random phase between $0$ and $2 \pi$. We design $\mat{T} \in \mathbb C^{4 \times n}$ to be 
\begin{equation}\label{eq:G}
T = \left( \begin{array}{cccc}
e^{\bi \om} & e^{\bi 2 \om} & \ldots & e^{\bi n \om} \\
e^{-\bi \om} & e^{-\bi 2 \om} & \ldots & e^{-\bi n \om} \\
\cos(\om) & \cos(2\om) & \ldots & \cos(n \om) \\
e^{\bi \om '} & e^{\bi 2 \om '} & \ldots & e^{\bi n \om '}
\end{array}
\right).
\end{equation}

\begin{table}
\centering
\begin{tabular}{ |c |  c| }
\hline
Notation & Description \\
\hline
$\vect{x}$ & complex signal of length $n$ \\
\hline
$K$ & sparsity of the signal \\
\hline
$n$ & length of the signal \\
\hline
$m$ & number of measurements \\
\hline
$M$ & number of the rows of the code matrix \\
\hline
$\mat{A}$ & measurement matrix \\
\hline 
$\mat{H}$ & code matrix \\
\hline 
$\mat{T}$ & modulation matrix \\
\hline
\end{tabular}
\caption{Table of Notation.}
\label{tab:notation}
\end{table}

%Let $\tilde{x} \in \mathbb C^K$ be the shortened vector that is constructed from the $K$ non-zero components of $x$ in the way that the order of these components' indices are maintained. Let $\tilde{H} \in \mathbb C^{M \times K}$ be the corresponding code matrix that is constructed from the active columns of $H$ in the trivial way. Let $\mathcal{C}^K_1(d,M)$ be the ensemble of bipartite graphs induced by $\tilde{x}$. Note that the induced graph has also a $d$-left regular degree, and when $K$ is large and $M/K$ is a constant, the weight of each row of matrix $H$ or the right-node degree approaches a Poisson random variable with parameter $\lambda = \frac{Kd}{M}$.

As in \cite{Sameer}, in the bipartite graph model, we use the following terminology extensively throughout the paper:
\begin{itemize}
\item \emph{Singleton:} A right node is a singleton if it is connected to exactly one \emph{active} left node.  
\item \emph{Doubleton:} A right node is a doubleton if it is connected to exactly two active left nodes. 
\item \emph{Multiton:} A right node is a multiton if it is connected to more than one active left node.\footnote{In our terminology, a doubleton is also a multiton.} 
\end{itemize}

%We propose two decoding algorithms called Unicolor PhaseCode and Multicolor PhaseCode. 
We now describe PhaseCode algorithm, and analyze it in Section \ref{sec:analysis}. 
%Next, we describe Multicolor PhaseCode that is more efficient, but whose analysis remains unresolved, and is part of ongoing work. Our decoding algorithms are based on the coloring of balls; and merging of different colors. 
With the aid of the carefully designed matrix $\mat{T}$, our decoder is capable of performing the following functions:

%\begin{itemize}
%\item {\bf Singleton Process} When only one ball is in some bin, such bin is  detected. Moreover, the decoder can find and color the ball in it. Figure \ref{fig:singleton} illustrates this operation. 
%\item {\bf Mergeable Multiton Process} When all balls of some bin are colored with either of exactly two colors, 
%all balls are colored either of two colors 
%When \emph{all} the balls in a doubleton or multiton bin (that is a bin with multiple balls) are colored, and the number of colors in that multiton bin is exactly two, then those two colors can be combined into a single composite color. Figure \ref{fig:merge} illustrates this operation. 
%\item When a multiton bin consists of exactly one \emph{uncolored ball}, and the other non-empty set of balls in the bin have all the same color (let's say red), then the uncolored ball is colored with that color (i.e. it becomes red). Figure \ref{fig:multiton} illustrates this operation. 
%\end{itemize}
\begin{itemize}
\item When an active left node is connected to a singleton right node, the active left node can be colored with a new color. That is, the non-zero component can be found in magnitude and location. However, the relative phase of the component with respect to other resolved components cannot be recovered. Figure \ref{fig:singleton} illustrates this operation. 

Note that in our terminology, each color refers to a local coordinate with a local phase, for example, the red coordinate, blue coordinate, etc. Then, the relative phase of two non-zero components that are colored as red is known. However, the relative phase of a blue component and a red component is not known. 

\item When a right node is connected to exactly one \emph{uncolored} active left node, and the other non-empty set of active left nodes connted to the right node have all the same color (let's say green), then the uncolored active left node is colored with that color (i.e. it becomes green). Figure \ref{fig:multiton} illustrates this operation. 

\item When \emph{all} the active left nodes connected to a right node are colored, with exactly two colors, then those two colors can be combined into a single composite color. Figure \ref{fig:merge} illustrates this operation.\footnote{We use this operation only in the second iteration of PhaseCode.} 
\end{itemize}

\begin{figure}
\centering
    \includegraphics[width= 0.35\textwidth]{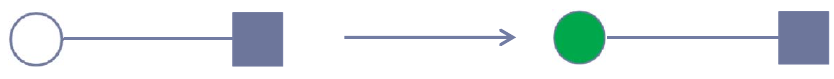}
  \caption{\textbf{Singleton coloring operation.} The figure illustrates when a right node is a singleton, the corresponding active left node gets colored with a new color. In the graph, we have not showed the left nodes corresponding to 0 signal components.\label{fig:singleton}} 
\end{figure}

\begin{figure}
\centering
    \includegraphics[width= 0.35\textwidth]{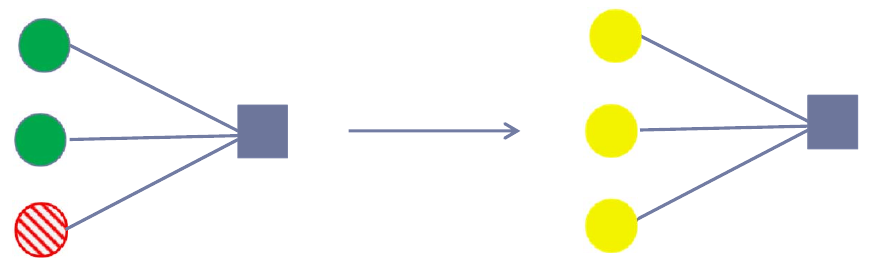}
  \caption{\textbf{Combining colors.} The figure illustrates when a right node is connected to only colored active left nodes with two colors, then the colors can be combined. \label{fig:merge}} 
\end{figure}

\begin{uni}
In the first iteration of the algorithm, all the left nodes connected to singletons are colored. In the second iteration, all the doubletons that are connected to two colored active left nodes from the first iteration (strong doubletons), are detected, and their colors are combined. Then, the \emph{largest} set of active left nodes having the same color\footnote{Whenever two active left nodes having colors $C_1$ and $C_2$ are combined, they get the same composite color $C_{12}$, and all other active left nodes with colors $C_1$ and $C_2$ are also recolored to $C_{12}$.} is selected, and \emph{every other colored active left node gets uncolored}. %\footnote{The initial phase of our algorithm is similar to the one in \cite{Jaggi}.} 
At this point, there is only \emph{one} color and no new colors are added to the system. 
%Hence, we use the terminology \emph{Unicolor} for this algorithm. 
In the following iterations, if a right node is connected to exactly one uncolored active left node and at least one colored active left node, then that uncolored active left node gets colored. (See Figure \ref{fig:multiton}.) The algorithm continues until no more active left nodes can be colored.
\end{uni}
%\begin{multi}
%In the first iteration of the algorithm, all the singletons are colored. In the following iterations, the decoder checks all the non-singleton bins. If they consist of only $2$ colors, those colors are combined. (See Figure \ref{fig:merge}.) If the colored balls in the bin all have the same color, and if there is only one uncolored ball in that bin, then that ball gets colored. (See Figure \ref{fig:multiton}.) The algorithm continues until no more balls can be colored, or no more colors can be merged. 
%\end{multi}
% We provide pseudocode of both algorithms in Appendix \ref{app:pseudocode}.
We provide the pseudocode of the algorithm in Appendix \ref{app:pseudocode}.

\begin{remark}
PhaseCode has $\Theta(K)$ time and memory complexity. 
\end{remark}
%
%Note that in both algorithms, the recovered balls are the \emph{largest} set of balls having the same color. Intuitively, it is clear that the Unicolor PhaseCode is less expressive, since does not exploit the ability to combine colors after the second iteration. The following example illustrates the two algorithms and illustrates why Unicolor PhaseCode is suboptimal compared to Multicolor PhaseCode.

\begin{example}
Let $K = 4$, $M = 5$ and $d = 2$. Without loss of generality, label the active left nodes by $1$ to $4$. Suppose that the bipartite graph is such that the right nodes are connected to $\{1\}$, $\{1,2\}$, $\{3\}$, $\{1,3\}$, and $\{2,3,4\}$. In the first iteration,  $1$ and $3$ are colored, let us say by red and blue, respectively since these active left nodes are connected to singletons. 
In the second iteration, 
%PhaseCode can color ball $2$ as red using $\{1,2\}$, and ball $4$ as blue using $\{3,4\}$. 
PhaseCode finds a strong doubleton, $\{1,3\}$, that is connected to colored left nodes $1$ and $3$. Thus, their colors are combined to a composite color, let us say green, which will be the only color of the system after this iteration. In the third iteration, left node $2$ is colored through the right node $\{1,2\}$, since $2$ is the only uncolored left node connected to this right node. Finally, in the forth iteration, left node $4$ is colored through right node $\{2,3,4\}$. This completes the successful decoding of PhaseCode algorithm.  
\end{example}

\subsection{Measurement Design: ``Trig-Modulation''}\label{sec:measurement}
In this section, we will explain the choice of the measurement matrix $\mat{T}$. Our design of $\mat{T}$ draws heavily from the proposed trigonometric subsystem in \cite{Jaggi} with proper modifications to better match our sparse-graph code subsystem, $\mat{H}$, that is distinct from \cite{Jaggi}. We also show that one can decrease the number of these trig-based measurements from $5$ per right node as proposed in \cite{Jaggi} to $4$ per right node as we describe that is crucial in designing a capacity-approaching scheme. 

Define the length-$4$ vector $\vect{y}_i$ to be the measurement vector corresponding to the $i$-th row of matrix $\mat{H}$ for $1 \leq i \leq M$. Then $\vect{y} = [\vect{y}_1^T, \vect{y}_2^T, \ldots, \vect{y}_M^T]^T$, where $\vect{y}_i = [y_{i,1},y_{i,2},y_{i,3},y_{i,4}]^T$. Let $\om = \frac{\pi}{2n}$. We design the measurement matrix $\mat{T} = [t_{j\ell}]$ as follows. For all $\ell, ~ 1 \leq \ell \leq n$,
\begin{align}
t_{1\ell} &= e^{\bi \om \ell}, \\
t_{2\ell} &= e^{-\bi \om \ell},  \\
t_{3\ell} &= 2 \cos(\om \ell), \\
t_{4\ell} &= e^{\bi \om' \ell},
\end{align}
where as mentioned in Section \ref{sec:ed}, $\om '$ is a random phase uniformly distributed between $0$ and $2 \pi$.

As mentioned in Section \ref{sec:ed}, the measurement matrix should enable us to do the following operations: (1) Detect whether we have a singleton right node, and if yes, what the location index and magnitude of the corresponding active left node are (See Figure \ref{fig:singleton}); (2) detect if a multiton right node is connected to colored active left nodes having exactly two unique colors, and if yes, what the relative phase of the colored components is. We call these as mergeable multitons (See Figure \ref{fig:merge});  (3) detect if a multiton right node is connected to colored active left nodes with the same color and only one uncolored active left node, the measurement system should be able to find the index, magnitude, and relative phase of the uncolored active left node. We call these right nodes resolvable multitons as in \cite{Jaggi} (See Figure \ref{fig:multiton}). In the following, we show how each of these detections can be accomplished using ``guess and check" approach. We provide pseudocode of these detection procedures in Appendix \ref{app:pseudocode}.

\begin{itemize}
\item [(i)] {\bf Singletons}: Suppose that we want to check the hypothesis that the $i$-th right node is a singleton. If the right node is a singleton, only one non-zero component of $\vect{x}$, let's say $x_\ell$, is present in vector $y_i$, that is $y_{i,1} = |x_\ell e^{\bi \om \ell}|$, $y_{i,2} = |x_\ell e^{-\bi \om \ell}|$, and so on. Thus, the $i$-th right node is a singleton only if $y_{i,1} = y_{i,2} =  y_{i,4}$. The event that $i$ is not a singleton, and all these measurements are equal has measure $0$ since $\om'$ is a uniformly random phase.\footnote{In practice, every measurement system has a finite precision level. Moreover, practical systems suffer from the presence of noise. The measurement system introduced here is clearly not robust to noise and finite precision of the measurement matrix, but we will show in Section \ref{sec:noisy} that PhaseCode can be robustified to noise while maintaining its iterative decoding architecture.} In order to find the index $\ell$, one uses $y_{i,3}$ to get
$$
\ell = \frac{1}{\om}\cos^{-1}\left(\cos(\om \ell)\right) = \frac{1}{\om}\cos^{-1}\left (\frac{y_{i,3}}{2y_{i,1}} \right).
$$
Note that $\cos(\om \ell)$ is positive if $0 \leq \om \leq \frac{\pi}{2n}$ for all $\ell, ~1 \leq \ell \leq n$.

\item [(ii)] {\bf Mergeable multitons}: Consider a right node $i$ as in Figure \ref{fig:merge}, which is already known to be connected to some (say, red) active left nodes (non-empty set $\mathcal{R}$) and some (say, blue) active left nodes (non-empty set $\mathcal{B}$). This means that the red (or blue) signal components are known in location, magnitude, and phase relative to each other. However, the relative phase of blue and red components' coordinate systems is not known. If there is no other active left node connected to $i$, we show that the relative phase can be found. Thus, the colors can be combined. (We again deploy a guess and check strategy.) First, we guess that right node $i$ is connected to no other active left nodes. Then, we have access to measurement 
$$
y_{i,1} = |r + b|,
$$
where $r = \sum_{\ell \in \mathcal{R}} x_j e^{\bi \om \ell}$ is the sum of complex numbers corresponding to the red components, and $b = \sum_{\ell \in \mathcal{B}} x_\ell e^{\bi \om \ell}$ is the sum of complex numbers corresponding to the blue components. Since red components are known up to a local phase, $|r|$ is known. Similarly, $|b|$ is also known. Without loss of generality, pick some $\ell_r \in \mathcal{R}$ and set the phase of $x_{\ell_r}$ to $0$ to form the local coordinate for red components. Furthermore, pick some $\ell_b \in \mathcal{B}$ and set the phase of $x_{\ell_b}$ to $0$ to form the local coordinate for blue components. Given the local coordinates, $r = |r| e^{\bi \phi_r}$ and $b = |b| e^{\bi \phi_b}$ are known. By the cosine law, the true relative phase between $r$ and $b$ can be found as
\begin{equation}\label{eq:cosine}
\theta = \cos^{-1}\left (\frac{|r|^2 + |b|^2 - y_{i,1}^2}{2 |r| |b|} \right),
\end{equation}
up to a plus-minus sign. Assuming that the plus sign is true, we can merge these components as follows. Without loss of generality, we set the phase of $x_{\ell_r}$ to $0$. Thus, $r = |r| e^{\bi \phi_r}$ and $b = |b|e^{\bi (\phi_r + \theta)}$. This shows that the local coordinate in $\mathcal{B}$ should be rotated by an angle $\theta + \phi_r - \phi_b$ to match with the new coordinate. Hence, we recover all the blue components with respect to the coordinate of red components, and the colors can be combined. A similar procedure can be done for the solution of $\theta$ with a minus sign. Now we again use the check equation to find whether one of these relative phases passes the check equation. If none of them passes, our guess is wrong, and right node $i$ is not a mergeable multiton. Thus, we need to check whether
$$
|\sum_{\ell \in \mathcal{R} \cup \mathcal{B}} x_\ell e^{\bi \om' \ell}| = y_{i,4} 
$$
is satisfied or not for the $2$ values of $\theta$ derived in \eqref{eq:cosine}. If the guess is correct, the probability that the check fails is $0$ since $\om'$ is random. Moreover, if the guess is not correct, the probability that the check passes is $0$. 

\item [(iii)] {\bf Resolvable multitons}: Consider a right node $i$, for which we know that it is connected to some known active left nodes that have the same color. We want to check if $i$ is connected to exactly one other active left node; i.e. one unknown non-zero component of $x$, say $x_\ell$, as in Figure \ref{fig:multiton}. We now describe our guess and check strategy to check if right node $i$ is indeed a resolvable multiton, and if so, to find $\ell$ and $x_\ell$. If our guess is correct, we have access to measurements of the form:
\begin{align}\label{eq1}
y_{i,1} &= |a + e^{\bi \om \ell} x_\ell| = |u|,\\ \label{eq2}
y_{i,2} &= |b + e^{-\bi \om \ell} x_\ell| = |v|, \\ \label{eq3}
y_{i,3} &= |c + 2\cos(\om \ell) x_\ell| = |w|, \\ \label{eq4}
y_{i,4} &= |d + e^{\bi \om' \ell} x_\ell|,
\end{align}
where complex numbers $a$, $b$, $c$ and $d$ are known values that depend on the values and locations of the known colored active left nodes. For the purpose of readability, we show the calculations of how to solve the system of equations \eqref{eq1}-\eqref{eq4} in Appendix \ref{app:quad}.

\end{itemize}

\section{Main Result}\label{sec:analysis}
In this section, we analyze the performance of PhaseCode and provide the main theoretical results of this paper. 

\begin{table*}[t] 
\centering
\begin{tabular}{ c |  c  c  c  c c c}
$d$ & $5$ & $6$  &$7$   &$8$ & $9$ & $10$ \\
\hline
$m(p)$  &  $12.44K$ &$12.72K$  &$13.28K$   &$\mathbf{13.92K}$ & $14.64K$ & $15.4K$ \\
\hline
$p$ &   $1.1 \times 10^{-3}$ & $8 \times 10^{-5}$&  $3.2 \times 10^{-6}$ &  $\mathbf{1 \times 10^{-7}}$   & $2.9 \times 10^{-9}$ & $7 \times 10^{-11}$
\end{tabular}
\caption{Family of trade-offs between error floor and number of measurements for Phasecode. The table shows that to achieve higher reliability, i.e. smaller error floor, the number of measurements $m$ should be increased.}
\label{tab:1}
\end{table*}

\subsection{Bipartite Graph Construction}\label{sec:graph}
As mentioned earlier, we design our code matrix based on a random bipartite graph model. Given a bipartite graph with $n$ left nodes and $M$ right nodes, define the pruned bipartite graph corresponding to $\vect{x}$ to be a bipartite graph with $K$ left nodes corresponding to the non-zero components of $\vect{x}$ and $M$ right nodes, such that all the left nodes corresponding to the zero components and their connected edges are deleted. From now on, we consider the pruned graph for analysis. Moreover, from now on, by a left node (of the pruned graph), we refer to an active left node.

We first define the left and right edge degree distribution of the random bipartite graph.
Define $\rho_i$ to be the probability that a randomly selected edge is connected to a right node of degree $i$, and $\lambda_i$ to be the probability that a randomly selected edge is connected to a left node of degree $i$. Define the edge degree distributions or edge degree polynomials of right and left nodes as follows.
\begin{align*}
\rho(x) &= \sum_{i \geq 1} \rho_i x^{i-1}; \\
\lambda(x) &= \sum_{i \geq 1} \lambda_i x^{i-1}.
\end{align*}

We construct two random bipartite graph models as follows:
\begin{itemize}
\item[(i)] Regular left degree: In this construction, each left node is connected to $d$ right nodes randomly, where $d$ is a constant to be chosen. Thus, the degree of all left nodes are $d$. More formally, let $\mathcal{C}^K(d,M)$ be the ensemble of regular left degree bipartite graphs with $K$ left nodes, $M$ right nodes, and left degree $d$. We pick a bipartite graph uniformly at random from this ensemble. When $M$ and $K$ get large, the degree of a random right node is Poisson distributed with parameter $\eta = \frac{Kd}{M}$. Note that the degree of a right node in the pruned graph is the number of active left nodes connected to it. Since $\rho_i$ is the fraction of edges that are connected to a right node of degree $i$, we have
\begin{align*}
\rho_i &= \frac{iM}{Kd} \PP(\text{random right node has degree}~i)  \\
 &= \frac{i}{\eta} \frac{\eta^i e^{-\eta}}{i!} \\
 &= \frac{\eta^{i-1}e^{-\eta}}{(i-1)!}.
\end{align*}
Then, the left edge and right edge degree distributions are
\begin{align}
\lambda(x) &= x^{d-1} \\ \label{eq:edge}
\rho(x) &= e^{-\eta(1-x)}.
\end{align}
\item[(ii)] Irregular left degree: In this construction, we design the left degree distribution $\lambda(x)$ based on a truncated harmonic distribution as follows. 
Let $h(x) = \sum_{i=1}^x 1/i$. Then,
\begin{align}\label{eq:irr}
\lambda_i = \frac{1}{i-1} \times \frac{1}{h(D-1)}, ~ 2 \leq i \leq D,
\end{align}
where $D$ is a (large) constant to be determined. The harmonic distribution for irregular LDPC codes is well-known to be capacity-achieving for BEC channels \cite{Luby2}. 
%Let $\bd = \frac{1}{\sum_i \lambda_i/i}$ be the average degree of left nodes \cite{Luby2}. Then, the degree of right nodes is Poisson distributed with parameter $\eta = \frac{K \bd}{M}$.\footnote{With some abuse of notation, we use $\eta$ as the average degree of right nodes for both regular and irregular construction.}
\end{itemize}
The main theoretical results of this paper for the noiseless case are as follows.
 
\begin{theorem}\label{thm:main}
Let $\mat{A} = \mat{T} \otimes \mat{H}$ be the measurement matrix, where $\mat{H}$ is chosen uniformly at random from the ensemble $\mathcal{C}^n(d,M)$ and $\mat{T}$ is the modulation matrix defined in \eqref{eq:G}. Using the $m$ measurements $\vect{y}=|\mat{A}\vect{x}|$, for any $p > 0$, Regular PhaseCode can recover at least a $1-p$ fraction of the non-zero components of $x$ chosen uniformly at random, where $m = c(p) K$ and tabulated in Table \ref{tab:1} for selected values. 
%Regular PhaseCode is able to recover a fraction $1 - p^*(m)$ of non-zero components of $x$ with probability $1 - \mathcal{O}(1/m)$, where $m = cK$ and $p^*(m)$ form a family of trade-offs as shown in Table \ref{tab:1} for selective operating points. 
As a particular operating point, Regular PhaseCode is able to recover a random fraction $1 - 10^{-7}$ of non-zero components of $\vect{x}$ with $14K$ measurements with probability  $1 - \mathcal{O}(1/m)$.  Furthermore, the decoding complexity of the algorithm is $\Theta(K)$ which is order-optimal. 
\end{theorem}

\begin{theorem}\label{thm:2}
Let $\mat{A} = \mat{T} \otimes \mat{H}$ be the measurement matrix, where $\mat{H}$ is chosen according to the irregular construction in \eqref{eq:irr}, and $\mat{T}$ is the modulation matrix defined in \eqref{eq:G}. Under the assumption that the support of the sparse signal is uniformly random, using $m = 4K(1+ \epsilon)$ measurements $\vect{y}=|\mat{A}\vect{x}|$ for arbitrarily small $\epsilon > 0$, Irregular PhaseCode is able to recover all but an arbitrarily small random fraction of the non-zero components of $\vect{x}$ with probability $1 - \mathcal{O}(1/m)$. Furthermore, the decoding complexity of the algorithm is $\Theta(K)$ which is order-optimal. 
\end{theorem}
%\begin{proof}
%See Section \ref{sec:analysis}.
%\end{proof}
We provide the proofs in Sections \ref{sec:proof1} and \ref{sec:proof2}.
\begin{corollary}\label{cor:l1}
Suppose that for a particular choice of parameters, PhaseCode has error floor $p$. For any signal $\vect{x} \in \mathbb{C}^n$, assuming that the non-zero components of $\vect{x}$ are lower bounded by $\Theta(1)$ and upper bounded by $\Theta(K^{\gamma})$ for some positive constant $\gamma < 1$, we have 
$$
\| \hat{\vect{x}} - \vect{x} \|_1 \leq p\| \vect{x} \|_1(1 + \Theta(\frac{1}{\log(K)})),
$$ 
with probability $1 - \mathcal{O}(K^{\frac{1+\gamma}{2}}e^{-\frac{2K^{(1-\gamma)/2}}{\log^2(K)}})$ over the randomized choice of $\mat{A}$.
\end{corollary}
We provide the proof of Corollary  \ref{cor:l1} in Appendix \ref{app:l1}.

\begin{figure}
    \centering
    \includegraphics[width=0.4\textwidth]{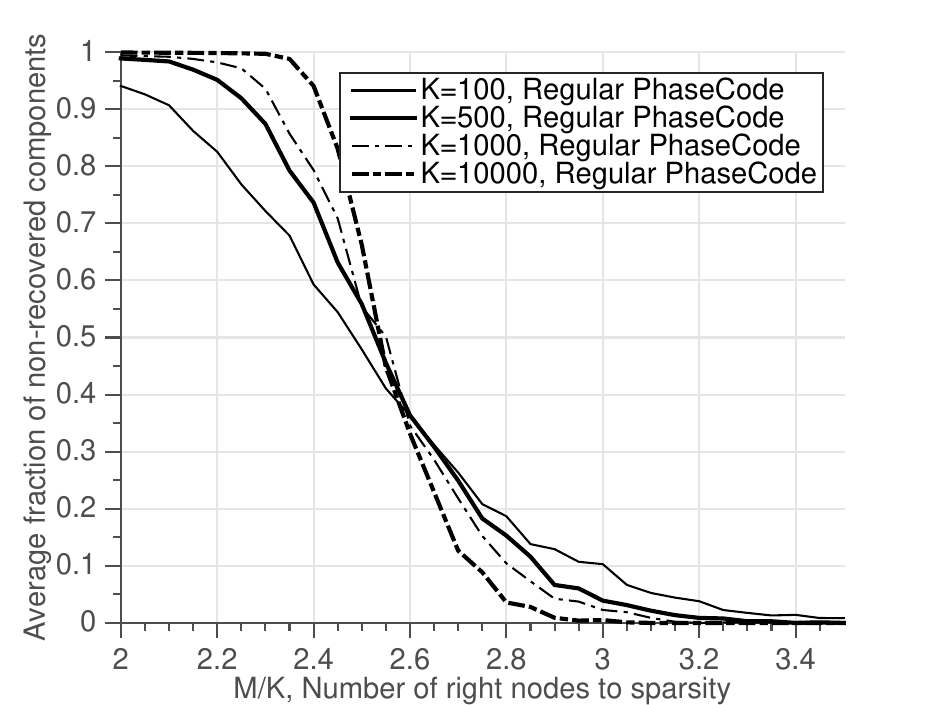}
     \caption{\textbf{Performance of regular PhaseCode Algorithm.} We evaluate Regular PhaseCode algorithm via simulations. We chose the $3$rd column of the table as an operating point, i.e., $(d, m, p^*(m)) = (7, 13.28K, 3.2\times 10^{-6})$. PhaseCode algorithm successfully recovers almost all active signal components with high probability when $m = 4 \times 3.32 K = 13.28K$.} 
    \label{fig:fig1_performance}
\end{figure}
\begin{figure}
    \centering
    \includegraphics[width=0.4\textwidth]{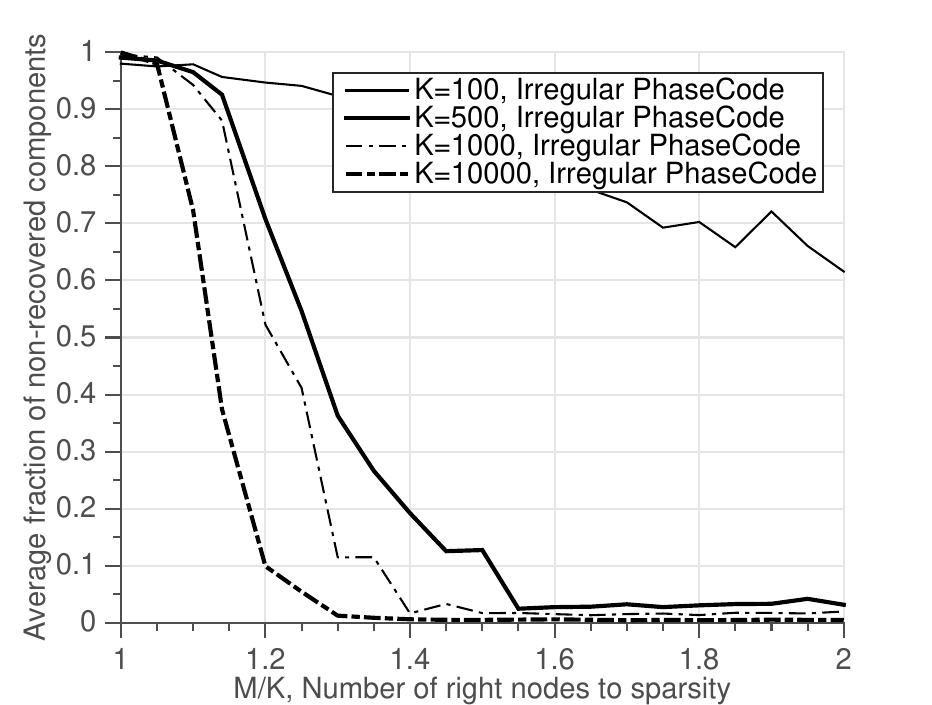}
     \caption{\textbf{Performance of PhaseCode Algorithm.} We also evaluate Irregular PhaseCode, which demonstrates that it is capacity-approaching. We observe that for $K=10000$ irregular PhaseCode can recover almost all the non-zero signal components with $m = 4 \times 1.3K$ measurements.} 
    \label{fig:irregular}
\end{figure}
\begin{figure}
    \centering
    \includegraphics[width=0.4\textwidth]{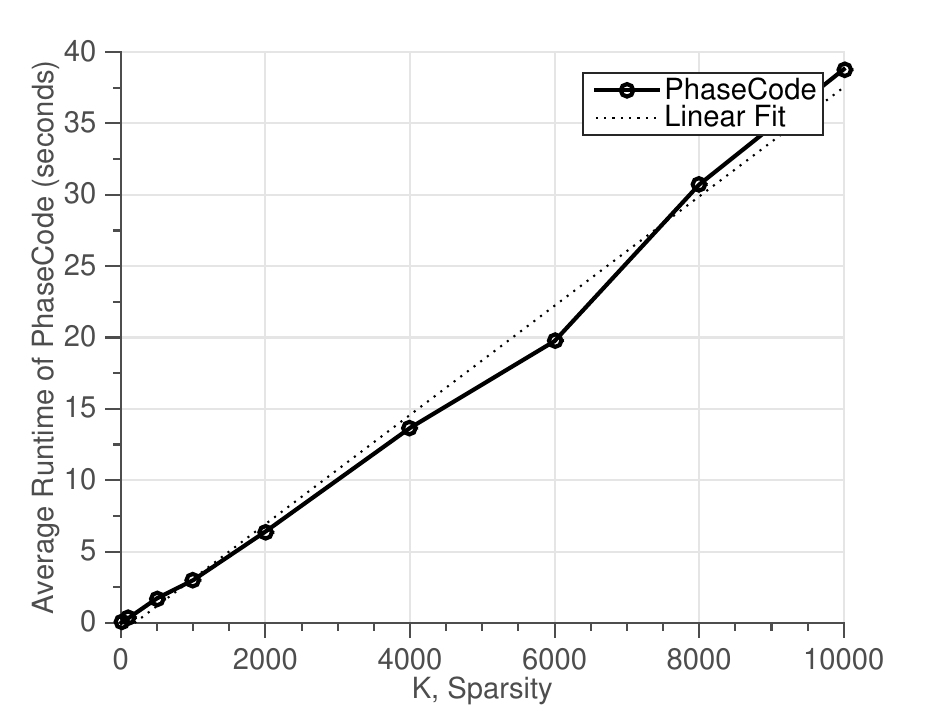}
    \caption{\textbf{Time Complexity of PhaseCode.} We measure run-time of PhaseCode algorithm. We choose $n=10^{10}$ and vary $K$. 
    %Since both computational complexity and memory complexity depend only on $K$ not $n$, we can easily simulate arbitrarily large $n$ such as $10^{10}$. Plotted is the average runtime of Unicolor PhaseCode algorithm, and it can observed that the average runtime of the algorithm increases linearly in $K$.
    }
    \label{fig:fig3_time_complexity}
\end{figure}

Before presenting the proof of the main theorems, we illustrate the performance of regular and irregular PhaseCode via simulations in Figures \ref{fig:fig1_performance} and \ref{fig:irregular}. Theorem \ref{thm:main} guarantees that regular PhaseCode recovers a fraction $p^*(m)$ of $\vect{x}$ with $m$ measurements with high probability, where $(d, m, p^*(m))$ can be chosen from Table \ref{tab:1}. We choose the $3$-rd column of the table as an operating point, i.e., $(d, m, p^*(m)) = (7, 13.28K, 3.2\times 10^{-6})$ for regular PhaseCode. 
%\footnote{It will be explained in the following section how one can choose an operating point. For these simulations, we set the left degree as $7$, i.e., $d=7$.} Thus, we expect that the Unicolor PhaseCode algorithm will recover a fraction $1-3.2\times 10^{-6}$ of $K$ active symbols with high probability when $m = 13.28K$. 
We define the error probability to be the fraction of non-zero components of $\vect{x}$ that are not recovered.
We measure the error probability while $m$ is varied between $8K$ and $14K$ by averaging over $1000$ simulation runs. 
%Note that the error probability is defined as probability of not recovering a fraction $p^*(m)$ or more of nonzero components of $x$. 
We repeat the same procedure for several values of $K$.
As expected, the PhaseCode algorithm successfully recovers essentially all the signal components when $m = 13.28K$.
We also show simulation results for irregular PhaseCode in Fig. \ref{fig:irregular} that support Theorem \ref{thm:2}. For example, when $K = 10000$, the coloring algorithm successfully recovers the signal with only $4 \times 1.3K = 5.2K$ measurements. From the simulations, it is clear that to operate close to capacity, one needs large asymptotics for $K$.
%It is also observed that the error probability of larger $K$ is lower than that of smaller $K$. 
%The simulation results support the theoretical result.

Theorems \ref{thm:main} and \ref{thm:2} also state that the decoding complexity of PhaseCode is $\Theta(K)$, which is order-optimal. In addition to that, its memory complexity is $\Theta(K)$, which is also order-optimal. In order to corroborate the claims, we measure the running time of the PhaseCode Algorithm. We choose the same operating point for regular PhaseCode as in the above simulations. 
%Indeed, we chose to add some measurements, $M=14K$, in order to ensure a zero error probability. 
We randomly generate signals of length $n=10^{10}$, and increase the sparsity $K$ up to $10^4$ to see how the average runtime scales. The results are plotted in Figure \ref{fig:fig3_time_complexity}; as $K$ increases, the measured decoding time linearly increases. Indeed, PhaseCode successfully recovers $K=10^4$ non-zero components in less then $40$ seconds. The exact runtime can be further improved considering that the simulator is written in Python and is not fully optimized, and that the simulation is done on a normal laptop.\footnote{For the measurements, we used a laptop with 2GHz Intel Core i7 and 8GB memory.}

%\begin{figure*}[t]
%\centering
%    \includegraphics[width= 0.8\textwidth]{}
%  \caption{This figure illustrates that the giant component grows at each iteration of the Unicolor PhaseCode algorithm. The giant component keeps growing until almost all the balls are colored, that is almost all the balls will become part of the giant component. I.e. all the signal is almost perfectly resolved to within a global phase uncertainty. \label{fig:GC}} 
%\end{figure*}

%We use \emph{density evolution} techniques that are  integral parts of modern coding theory\cite{RU01,Luby2}. Density evolution is a technique to analyze the performance of message-passing algorithms on sparse-graph codes. Density evolution computes the average message error probability of edges on the graph at each iteration of the algorithm. Our arguments are similar to that in \cite{RU01}. We find a recursion relating the probability that a randomly chosen ball or left node in the graph is not colored after $j$ iterations of the algorithm, $p_{j}$ to the same probability after $j+1$ iterations of the algorithm, $p_{j+1}$. {\em Our density evolution equation is however, different from the density evolution equation for phase-aware systems having similar graph ensemble, and therefore requires a different analysis.} Furthermore, our graph ensemble $\mathcal{C}^K_1(d,M)$ is different from the one in \cite{RU01} as left nodes have regular degree, while right nodes have irregular degree in our ensemble. (In \cite{RU01}, both left and right nodes have regular degree.)

\subsection{Proof of Theorem \ref{thm:main}}\label{sec:proof1}
We first provide a brief outline of the proof elements, highlighting the main technical components needed to show that PhaseCode recovers an arbitrarily-close-to-one fraction of non-zero signal components with high probability. 
\begin{itemize}
\item \emph{Density evolution:} We analyze the performance of PhaseCode on a typical random bipartite graph (regular or irregular), for a fixed number of iterations, $\ell$. First, we assume that a local neighborhood of depth $2 \ell$ of
every edge in the graph is tree-like, i.e., cycle-free. Under this assumption, all
the messages between right and left nodes, in the first $j$ iterations of the algorithm, are independent. Using this independence assumption, we derive a recursive equation that represents the evolution of the expected number of unresolved components at each iteration.
\item \emph{Convergence to the cycle-free case:} : Using a Doob martingale argument as in \cite{RU01}, we show that the $2 \ell$ neighborhood of most of the edges of a randomly chosen graph from the ensemble is cycle-free with high probability. This proves that PhaseCode decodes all but a small fraction of the left nodes with high probability in a constant number of iterations. The main difference of our convergence analysis compared to \cite{RU01} is that the right edge degree distribution in our graphs is Poisson distributed, while the right degree is regular in \cite{RU01}. 
\end{itemize}

\begin{table}
\centering
\begin{tabular}{ |c |  c| }
\hline
Notation & Description \\
\hline
$p_j$ & average fraction of non-recovered significant components at iteration $j$ \\
\hline
$\eta$ & average degree of right nodes \\
\hline
$d$ & degree of left nodes in $d$-regular construction \\
\hline
$D$ & truncation level for the harmonic distribution \\
\hline
$\lambda(x)$ & left edge degree polynomial \\
\hline
$\rho(x)$ & right edge degree polynomial \\
\hline 
$Z$ & number of uncolored edges after $\ell$ iterations \\
\hline
\end{tabular}
\caption{Table of Notation for Sections \ref{sec:proof1} and \ref{sec:proof2}.}
\end{table}

At each iteration of PhaseCode, we call the giant component as the largest set of signal components (left nodes) that have been resolved relative to each other. The algorithm follows $3$ major steps to recover the active left nodes by coloring them.  
\begin{itemize}
\item \emph{Step $1$:} All the singleton right nodes and their corresponding left nodes are detected.
\item \emph{Step $2$:} Strong doubletons are detected, and the color of the corresponding 2 left nodes get merged. We call the largest set of left nodes that {\em chain} hands together through these strong doubletons to be the giant component at this step.
\item \emph{Step $3$:} After the initial giant component is formed, at each iteration of the algorithm, left nodes are colored one at a time through resolvable multitons, and become part of the giant component. %See Figure \ref{fig:GC} for an illustration of this step.
\end{itemize}

Now we analyze the message passing algorithm. A left node $v$ passes a $0$ message to neighbor right node $c$ if it is not colored (i.e. it is not part of the giant component). Let $p_j$ be the probability that a random message sent from a left node to a right node is $0$, at iteration $j$ of the algorithm. The density evolution equation is an equation relating $p_j$ to $p_{j+1}$. Similarly, a right node $c$ passes a message $0$ to neighbor left node $v$ if it can not get colored (become part of the giant component). Let $q_j$ be the probability that a random message sent from a right node to a left node is $0$, at iteration $j$ of the algorithm. Under the tree-like assumption, and for $j \geq 2$ one has
\begin{equation}\label{eq:density}
p_{j+1} = (1 + e^{-\eta} - e^{-\eta p_j})^{d-1}.
\end{equation}
Here is a proof of Equation \eqref{eq:density}. A left node $v$ passes a $0$ message to right node $c$ at step $j+1$, if all of the other $d-1$ neighbor right nodes of $v$ pass message $0$ to $v$ at step $j$. That is $p_{j+1} = q_j^{d-1}$. Note that for $j \geq 2$, if a right node is a singleton, it passes message $0$ to neighbor left nodes, since in PhaseCode only resolvable multitons can color active left nodes after the second step of the algorithm. 

%This is a fundamental difference of our decoding process compared to that of conventional peeling-based decoders such as the LDPC decoder for erasure channel \cite{Luby2}. In LDPC decoding, since there is no phase ambiguity, as soon as a singleton is detected, the corresponding left node is recovered and it is peeled from the graph. 
%%However, left nodes in singletons cannot be peeled in our setting. 
%However, our problem has the peculiar attribute that singletons, while critical to initiating the growth of the giant component at the outset, are not useful once a giant component is formed, and too many singletons actually hurt the system performance by featuring useless isolated measurements. This is a significant departure from ``phase-aware'' measurement system like LDPC codes.

We calculate $q_j$ as follows. A right node $c$ sends a message to a left node $v$ that it is part of the giant component if $c$ is connected to a non-empty set of left nodes other than $v$, and those left nodes are all in the giant component. Thus, 
\begin{align*}
1 - q_j = \sum_{i = 2}^\infty \rho_i (1-p_j)^{i-1} &= \rho(1-p_j) - \rho_1 \\
&= e^{-\eta p_j} - e^{-\eta}.
\end{align*}
\begin{figure}
\centering
    \includegraphics[width= 0.25\textwidth]{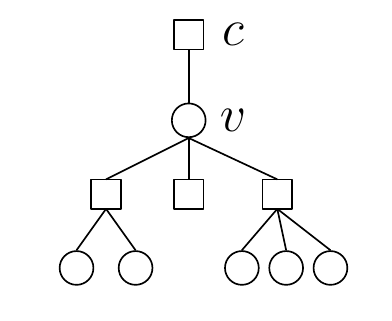}
  \caption{Length-$2$ tree-like neighborhood of $(v,c)$ for $d=4$. The neighborhood is the subgraph of all the edges and nodes of paths having length less than or equal to $2$, that start from $v$ and the first edge of the path is not $(v,c)$.\label{fig:tree}} 
\end{figure}
This proves \eqref{eq:density}. See Figure \ref{fig:tree} for an illustration of the proof for the case $d = 4$.  
%Note that in Multicolor PhaseCode, another possibility of joining the giant component is that $v$ is colored, let's say as red, with the color of the giant component being another color, say blue, and a neighbor bin of $v$ contains only blue and red balls. This will boost system performance by accelerating the coloring and merging process, but is hard to analyze precisely. Therefore, for analytical purposes only, we do not allow this opportunity to be exploited by Unicolor PhaseCode. 

\begin{remark}
Note that if a right node is a singleton, it cannot recover the corresponding active left node in both phase and magnitude. This is a fundamental difference of our decoding process compared to that of conventional peeling-based
decoders such as the LDPC decoder for erasure channel \cite{RUbook}. In LDPC decoding, since there is no phase ambiguity, as soon as a singleton is detected, the corresponding non-zero component is recovered and it is peeled
from all other right nodes that are connected to that component. However, active left nodes in singletons cannot be peeled in our setting. Indeed, our problem has the peculiar attribute that singleton right nodes, while critical to initiating the growth of the giant component at the outset, are not useful once a giant component is formed, and too many singletons actually hurt the system performance by featuring useless isolated measurements. This is a significant departure from ``phase-aware'' measurement systems like LDPC codes. This is also the key reason to why our density evolution equation in \eqref{eq:density} differs from that of linear measurement systems \cite{Sameer,RUbook}, which is
$
p_{j+1} = (1 - e^{-\eta p_j})^{d-1}.
$
\end{remark}

An interesting but unfortunate fact is that $p_0 = 1$ is a fixed point of the density evolution equation. Thus, one cannot use \eqref{eq:density} at the outset to follow the evolution of $p_j$, and to argue that it goes close to $0$, since $p_j$ can get stuck at $1$. To use Equation \eqref{eq:density}, we need a more careful characterization of the first two steps of the algorithm that form the giant component. At the first iteration, all the (active) left nodes that are connected to at least one singleton right node are found. Since the relative phase of these signal components is not known, no giant component is formed yet; thus, $p_1 =1$. At the second iteration, the giant component is formed by merging the colors of left nodes in strong doubletons. Recall that a strong doubleton right node is a right node that is connected to two colored left nodes. After the giant component is formed in the second iteration, the probability that a randomly chosen left node is not part of the giant component is $p_2$. If one can show that $p_2$ is small enough such that after a fixed number of iterations $p_j$ gets close to $0$, then concentration bounds can be used to show that the number of left nodes not being in the giant component is indeed highly concentrated around its mean after $\ell$ iterations, that is $K p_\ell$. In Lemma \ref{lem:unstable}, we show that if $p_2 = 1- \delta$ for some arbitrary constant $0 < \delta < 1$ independent of $K$, $p_j$ gets close to $0$ after a constant number of iterations. Clearly $p_2 = 1 - \delta$ if there exists a giant component of size linear in $K$ after the second step. In Lemma \ref{lem:giant}, we find the condition for left-regular bipartite graph under which a linear size giant component will be formed after the second step of the algorithm. 

\begin{lemma}\label{lem:giant} 
There exist operating points $(d,M = cK)$ for which with probability $1 - \mathcal{O}(1/M)$, a giant component of size linear in $K$ is formed after the second step of PhaseCode. In particular, $(d = 5,3.11 \leq c \leq 19.24)$ and $(d=8, 3.48 \leq c \leq 55.36)$ are two of these operation points. 
\end{lemma}
See Appendix \ref{app:gc} for the proof.

\begin{remark}
Lemma \ref{lem:giant} shows that for large enough $m$, a positive fraction of the signal components can get recovered after the second iteration of the algorithm. Thus, PhaseCode gets a proper \emph{jump-start}, which is essential for proving that the algorithm terminates after a constant number of iterations, and successfully recovers an arbitrarily-close-to-one fraction of the signal components.
\end{remark}

\begin{remark}
As one observes in Lemma \ref{lem:giant}, if $M$ is larger than some threshold (which corresponds to more measurements), the giant component will not get formed. At a first glance, this sounds counter-intuitive since having more right nodes seems to only help. However, one should keep in mind that in the statement of the lemma, the left degree $d$ is kept fixed. Intuitively, when $M$ is too large, for a fixed small $d$, the bipartite graph (with active left nodes) becomes so sparse that there are too few doubletons to form a giant component.
\end{remark}
%Recall that $p_j$ is the probability that a randomly chosen ball is not part of the giant component at iteration $j$. By Lemma \ref{lem:giant}, for proper choices of $d$ and $M$, there exists a linear-size giant component, let's say of size $K_s = \delta K$ for some constant $0 < \delta < 1$, at step $2$. Therefore, $p_2 = \frac{K - K_s}{K} = 1 - \delta$. The following corollary is an immediate result of Lemma \ref{lem:giant}.

\begin{corollary}\label{cor:giant}
There exists a constant $0 < \delta < 1$ independent of $K$, such that $p_2 = 1 - \delta$. 
\end{corollary}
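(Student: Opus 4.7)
The plan is to simply unpack what Lemma~\ref{lem:giant} actually gives us and translate it into a statement about $p_2$. Recall from the setup that the giant component at step~2 is constructed by taking all $K_s$ balls that land in at least one singleton bin and then placing an edge between two such balls whenever they co-occur in a doubleton bin (both of whose balls are also in singletons). The size of this giant component is a random variable $Z$, and $p_2$ is by definition
\[
p_2 \;=\; \frac{K - Z}{K}.
\]
So the task is to show that $Z/K$ is bounded below by a constant with high probability.

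First I would fix a left-degree $d$ and a measurement constant $c = M/K$ in one of the ranges supplied by Lemma~\ref{lem:giant} (for instance $d=5$ and $c \in (3.11,19.24)$), so that we are in the supercritical Erd\H{o}s--R\'enyi regime $K_s p_s > 1$. The lemma then gives the concentration statement
\[
\PP\!\left(\left|\tfrac{Z}{K_s} - \zeta\right| < \varepsilon\right) \;=\; 1 - \mathcal{O}\!\left(\tfrac{1}{\varepsilon^2 K_s}\right),
\]
where $\zeta \in (0,1)$ is the unique positive root of $\zeta + e^{-2\zeta M_s/K_s} = 1$. The key observation is that $\zeta$ depends only on the ratio $M_s/K_s$, which in turn is determined entirely by the constants $d$ and $c=M/K$ via the closed-form expressions in \eqref{eq:qs} and \eqref{eq:M'}; in particular, $\zeta$ is independent of $K$. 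Similarly, $K_s/K \to q_s = 1 - (1-\rho_1)^d$ as $K \to \infty$, and $q_s$ is a strictly positive constant (since $\rho_1 = \lambda e^{-\lambda} < 1$).

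Combining these two facts, I would take $\varepsilon$ to be any small constant and write
\[
\frac{Z}{K} \;=\; \frac{Z}{K_s}\cdot\frac{K_s}{K} \;\geq\; (\zeta - \varepsilon)\, q_s
\]
with probability $1 - \mathcal{O}(1/K)$. Setting $\delta \triangleq (\zeta - \varepsilon)\, q_s$ yields a constant strictly between $0$ and $1$ (strictly positive because $\zeta > 0$ and $q_s > 0$; strictly less than $1$ because $\zeta < 1$ and $q_s < 1$, and we choose $\varepsilon$ small enough to preserve positivity). Therefore $p_2 \leq 1 - \delta$ with high probability, which is exactly the statement of the corollary.

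There is no real obstacle here beyond bookkeeping. The only subtle point is to make sure the concentration $K_s/K \to q_s$ is also a high-probability event with the same $\mathcal{O}(1/K)$ error as the Erd\H{o}s--R\'enyi giant-component bound, which follows from a standard Chernoff bound on the binomial-like count of balls with at least one singleton neighbor, and then to absorb both deviations into a single constant $\delta$. Consequently the conclusion $p_2 = 1 - \delta$ holds with probability $1 - \mathcal{O}(1/K)$, as needed to feed into the subsequent density-evolution recursion \eqref{eq:density}.
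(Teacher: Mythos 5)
Your proposal is correct and follows essentially the same route as the paper, which treats Corollary~\ref{cor:giant} as an immediate consequence of Lemma~\ref{lem:giant} with no further argument supplied. You are simply making explicit the bookkeeping the paper leaves implicit: factoring $Z/K = (Z/K_s)(K_s/K)$, invoking the concentration of $Z/K_s$ near $\zeta$ from Lemma~\ref{lem:giant} together with the concentration of $K_s/K$ near $q_s$, and taking $\delta$ to be (a slight deflation of) the product $\zeta q_s$.
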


Due to the formation of a linear-size giant component in step $2$ of the algorithm, we can revisit the density evolution equation \eqref{eq:density}:
$$
p_{j+1} = (1+e^{-\eta} - e^{- \eta p_j})^{d-1},
$$ 
with the aid of Corollary \ref{cor:giant}, which guarantees that $p_2$ is strictly smaller than $1$. Recall that $p_0 = 1$ is a fixed point of \eqref{eq:density}. But with the giant component formation, we can break away from the shackles of ``being stuck'' at $p_0 = 1$. With $p_2 < 1$, we hope to find a better fixed point of \eqref{eq:density} to which our density evolution will converge.
  
Towards this end, ideally one wants Equation \eqref{cor:giant} to have the property 
\begin{equation}\label{eq:decreasing}
p_{j+1} = (1 + e^{-\eta} - e^{-\eta p_j})^{d-1} < p_j,
\end{equation}
for all $p_j \in (0,1)$. Let's take a closer look at the fixed point equation
\begin{equation}\label{eq:fixed}
t = f(t) = (1 + e^{-\eta} - e^{-\eta t})^{d-1}.
\end{equation}

\begin{table*}[t]
\centering

\begin{tabular}{ c | c  c  c  c  c c c}

$d$ & $4$ & $\mathbf{5}$ & $6$&  $7$ &   $\mathbf{8}$ & $9$ & $10$ \\
\hline
$p^*$ &  $2.7 \times 10^{-2}$ &   $1.1 \times 10^{-3}$ & $8 \times 10^{-5}$&  $3.2 \times 10^{-6}$ &  $1 \times 10^{-7}$   & $2.9 \times 10^{-9}$ & $7 \times 10^{-11}$\\ 
\hline
$c$  & $3.31$ &  $\mathbf{3.11}$ &$3.18$  &$3.32$   &$3.48$ & $3.66$ & $3.85$
\end{tabular}
\caption{The table shows how the error floor, $p^*$, and $c = M/K$ (which indirectly determines the number of measurements) vary for different values of left degree, $d$.  The minimum value of $c$ is $3.11$ that is achieved when $d=5$. Moreover, one can see that $p^*$ decreases as $d$ increases.}
\label{tab:2}
\end{table*} 

As mentioned, one solution is $t^*_1 = 1$. As we can break away from $t_1^*$, fortunately there exists another solution approximately at $t^*_2 \simeq e^{-\eta (d-1)}$ which is close to $0$. 
To see this, consider the equation $y = (1 + e^{-\eta} - e^{-\eta x})^{d-1}$. Suppose that $0 < x = e^{- \eta (d-1)} \ll 1$. Then, $e^{-\eta x} \simeq 1$ and $1 + e^{-\eta} - e^{-\eta x} \simeq e^{-\eta}$. Thus, $y = x$ which shows that $e^{-\eta (d-1)}$ is approximately another fixed point of \eqref{eq:density}.\footnote{Of course, one can easily find the exact solution to \eqref{eq:fixed}, using numerical methods for given values of $d$ and $\eta$.} From now on, we will refer to this fixed point as the error floor $p^*$.

\begin{lemma}\label{lem:unstable}
Let $d=5$. If $2.33K \leq M \leq 13.99K$, then the fixed point equation \eqref{eq:fixed} has exactly 2 solutions for $t \in [0,1]$: $t^*_1 = 1$ and $t^*_2 \simeq e^{-\eta (d-1)}$ (See Figure \ref{fig:density}). For $d = 8$, a similar result holds if $2.63K \leq M \leq 47.05K$. 
\end{lemma}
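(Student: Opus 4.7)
The plan is to study $g(t) := f(t) - t$ for $f(t) = (1+e^{-\lambda}-e^{-\lambda t})^{d-1}$ and count its zeros in $[0,1]$. A direct computation gives
\[
f''(t) = (d-1)\lambda^{2} e^{-\lambda t}\bigl(1+e^{-\lambda}-e^{-\lambda t}\bigr)^{d-3}\bigl[d\,e^{-\lambda t}-(1+e^{-\lambda})\bigr].
\]
The bracket is strictly decreasing in $t$ and positive at $t=0$, so either $f$ is convex throughout $[0,1]$ (when the bracket is also positive at $t=1$) or $f$ is convex on $[0,t_c]$ and concave on $[t_c,1]$ for a unique inflection point $t_c\in(0,1)$. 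In either situation a standard Rolle-type argument (applied twice to $g$ and $g''$) shows that $g$ can have at most three zeros on $[0,1]$.

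I would then refine this bound using $g'=f'-1$. Since $g'$ inherits the convex-then-concave shape of $f$, it is first increasing and then decreasing on $[0,1]$, with at most two zeros. For the parameter range of interest one checks that $g'(0)=(d-1)\lambda e^{-\lambda(d-2)}-1<0$. The pivotal quantity is then $g'(1)=(d-1)\lambda e^{-\lambda}-1$. When $g'(1)>0$, $g'$ rises from its negative value at $0$, crosses zero once at some $t_a\in(0,t_c)$, and (on the concave stretch) decreases but stays positive on $[t_c,1]$; hence $g$ is strictly decreasing on $[0,t_a]$, strictly increasing on $[t_a,1]$, and U-shaped overall — dropping from $g(0)=e^{-\lambda(d-1)}>0$ through a single negative minimum before returning to $g(1)=0$. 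This geometry forces exactly two zeros of $g$: one interior crossing on the descending arm and the endpoint zero at $t=1$. When $g'(1)\le 0$, $g'$ acquires a second zero in $(t_c,1)$, producing an interior local maximum of $g$ and admitting a third fixed point — the ``bad case'' illustrated in Figure~\ref{fig:convex2}.

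To close the argument I would identify the descending-arm zero with $t_2^{*}\simeq e^{-\lambda(d-1)}$ by the Taylor expansion $e^{-\lambda t}\approx 1-\lambda t$ at small $t$, which when substituted into $f(t)=t$ recovers $t\simeq e^{-\lambda(d-1)}$ up to a $1+o(1)$ factor (as sketched in the text preceding the lemma). Finally, I would convert $g'(1)>0$, namely $(d-1)\lambda e^{-\lambda}>1$, into a range for $M$ via $\lambda=Kd/M$. For $d=5$ this becomes $4\lambda e^{-\lambda}>1$; the two roots of $4\lambda e^{-\lambda}=1$ (obtainable via the Lambert-$W$ function or numerical root finding) are $\lambda\approx 0.358$ and $\lambda\approx 2.153$, giving $M/K\in(2.33,13.98)$. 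An identical computation with $d=8$ and $7\lambda e^{-\lambda}=1$ yields $M/K\in(2.63,47.05)$.

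The main obstacle I anticipate is making the shape argument uniform across the entire $\lambda$ interval rather than at a single representative value: one must verify that $g'(0)<0$ holds throughout, that the bracket of $f''$ has the claimed sign pattern uniformly (so $t_c$, when it exists, truly lies in $(0,1)$), and that the small-$t$ Taylor approximation pinning $t_2^{*}$ near $e^{-\lambda(d-1)}$ incurs only $o(1)$ relative error across the whole range. Each of these reduces to an elementary but somewhat tedious scalar inequality in $\lambda$ and $d$.
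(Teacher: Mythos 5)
Your proposal is correct and reaches the same conclusion via a genuinely different route. The paper, after deriving the instability condition $f'(1)=(d-1)\lambda e^{-\lambda}>1$ (which you also obtain), shows that $f'(t)=1$ has at most one solution on $(t_2^*,1)$ by an algebraic substitution: it rewrites $f'(t)=1$ in the form $(\lambda(d-1))^{-1/(d-2)}e^{\lambda t(1+1/(d-2))} = e^{\lambda t}(1+e^{-\lambda})-1$ and then appeals to the faster-growing exponential on the left. You instead compute $f''$ directly, deduce that $f$ is convex then (possibly) concave with a single inflection point, and hence that $g'=f'-1$ is unimodal; combined with $g'(0)<0$ and $g'(1)>0$ this forces a U-shape on $g$ with exactly two zeros. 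Your shape argument is more transparent and avoids the paper's $e^{az}=be^z-c$ step, whose "clearly at most one solution" claim is not, strictly speaking, obvious for general $a>1,\,b,c>0$ (such equations can have two solutions); the intended conclusion in the paper is really that $g'$ changes sign at most once on the interval, which is exactly what your monotonicity-of-$g'$ analysis delivers cleanly. Both approaches then translate $(d-1)\lambda e^{-\lambda}>1$ into the stated $M/K$ windows via $\lambda=Kd/M$, and both rely on the Taylor heuristic from the text to identify $t_2^*\simeq e^{-\lambda(d-1)}$.

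One small computational slip: the bracket in your $f''$ should read $(d-1)e^{-\lambda t}-(1+e^{-\lambda})$, not $d\,e^{-\lambda t}-(1+e^{-\lambda})$; this does not affect your qualitative conclusion since the corrected bracket is still decreasing in $t$ and positive at $t=0$ whenever $d\geq 3$. You flag $g'(0)<0$ as needing verification across the $\lambda$ range; it in fact holds unconditionally for the cases of interest, since $\max_\lambda(d-1)\lambda e^{-\lambda(d-2)} = \tfrac{d-1}{(d-2)e}<1$ for $d\geq 5$. With those two items tightened, your argument is a complete and arguably cleaner proof than the one in the paper.
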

See Appendix \ref{app:fp} for the proof.

\begin{figure*}[t]
        \centering
        \begin{subfigure}{0.35\textwidth}
                \includegraphics[width=7.5cm,height=5cm]{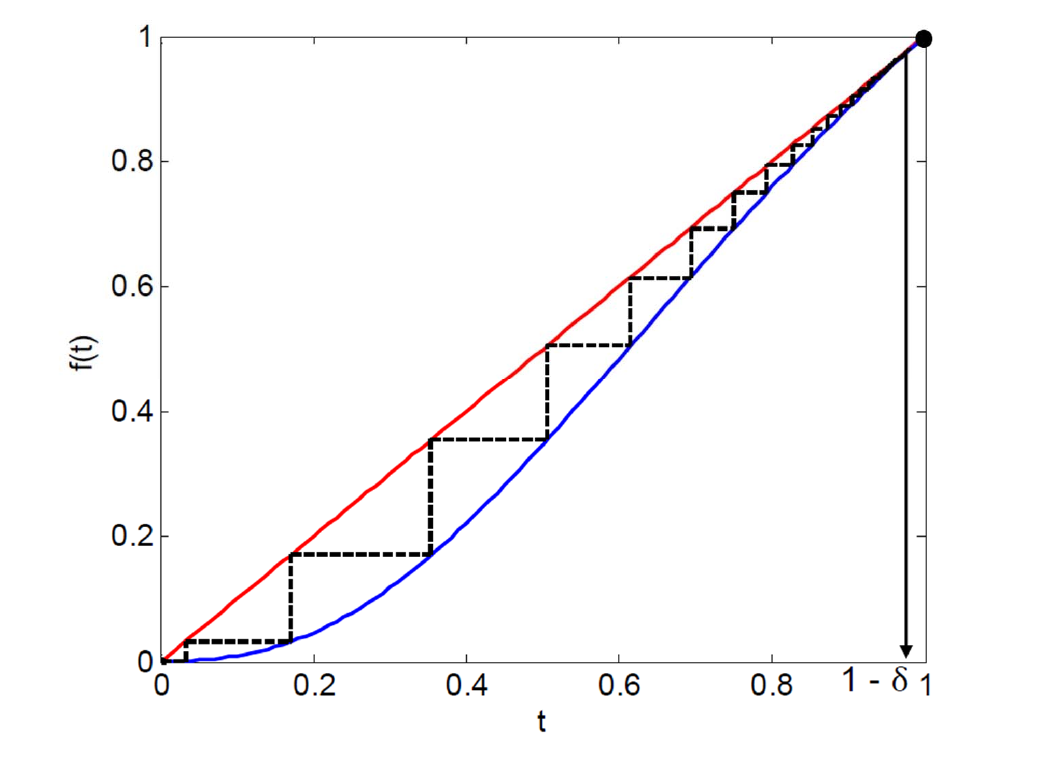}
                \caption{The density evolution curve \\for parameters $d=5$ and $\eta=2$.}
                \label{fig:de1}
        \end{subfigure}
        \qquad
        \begin{subfigure}{0.35\textwidth}
                \includegraphics[width=7.5cm,height=5cm]{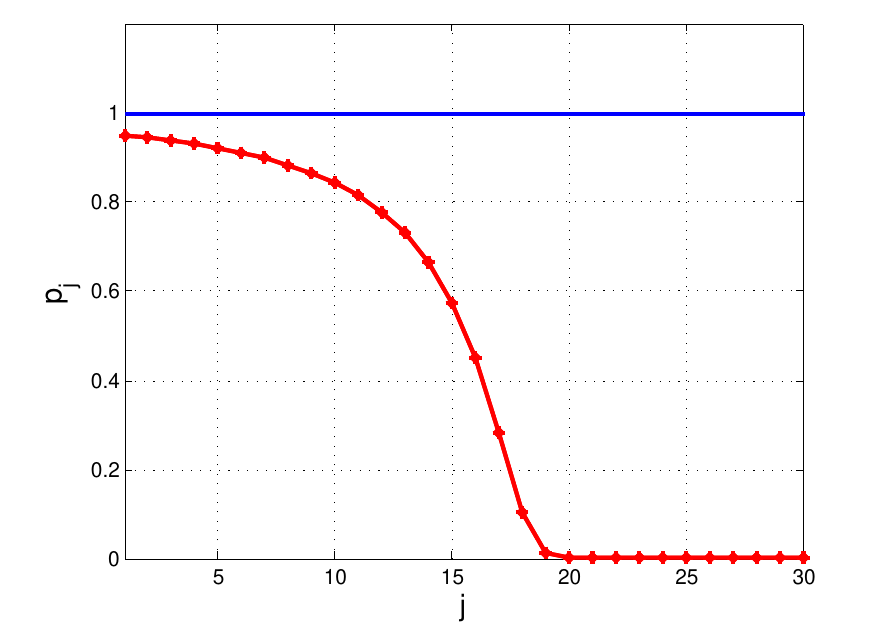}
                \caption{The evolution of $p_j$ after each \\iteration for $d=5$ and $\eta = 2$.}
                \label{fig:de2}
        \end{subfigure}
        \caption{Figure $(a)$ illustrates the density evolution equation, $p_{j+1} = f(p_j)$, for Regular PhaseCode. In order to track the evolution of $p_{j}$, pictorially, one draws a vertical line from $(p_j,p_j)$ to $(p_j,f(p_j))$, and then a horizontal line between $(p_j,f(p_j))$ and $(f(p_j),f(p_j))$. Since the two curves meet at $(1,1)$ if $p_0 =1$, then $p_j$ gets stuck at $1$. However, if $p_0 = 1 - \delta$, $p_j$ decreases after each iteration, and it gets very close to $0$. Figure $(b)$ illustrates the same phenomenon by showing the evolution of $p_j$ versus the iteration, $j$. Note that in this example, $p_j$ gets very close to $0$ after only $20$ iterations. }\label{fig:density}
\end{figure*}

%With the aid of Lemma \ref{lem:unstable}, we can show that $p_j$ gets very close to fixed point $p^* = t^*_2$ after a constant number of iterations. Note that $p_j,~j\geq 1$ is a decreasing sequence which is lower bounded by $p^*$. Thus, it will converge to $p^*$.

%However, convergence to $p^*$ is not sufficient for us. To formally prove the corollary, we need to show that after each iteration, the probability of not being part of the giant component decreases by a constant amount, that is a function of $\epsilon$ but is not a function of $K$. Let $p_m(\epsilon) = \arg \min_{p \in [p^* + \epsilon,p_2]} p - f(p)$. Let $\xi(\epsilon)= p_m - f(p_m)$. Then, 
%$$p_{j+1} - p_j = f(p_j) - p_j \leq - \xi.$$ 
%Therefore, it takes at most $\ell(\epsilon) = \frac{1 - p^*}{\xi}<\frac{1}{\xi}$ iterations to reach $p^* + \epsilon$. 

The following corollary is a direct result of Lemma \ref{lem:unstable}.

\begin{corollary}\label{cor:de}
For any $\epsilon > 0$, there exists a constant $\ell(\epsilon)$ such that $p_\ell \leq p^* + \epsilon$. 
\end{corollary}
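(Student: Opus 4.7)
The plan is to view the sequence $\{p_j\}_{j\geq 2}$ as the orbit of the deterministic one-dimensional dynamical system $p_{j+1} = f(p_j)$, where $f(t) = (1 + e^{-\lambda} - e^{-\lambda t})^{d-1}$, started from $p_2 = 1 - \delta$ (Corollary \ref{cor:giant}). Since the recursion $f$ and the initial value $1-\delta$ are constants depending only on $d$ and $\lambda$ (not on $K$), any bound of the form $\ell(\epsilon)$ we extract will automatically be $K$-free, as required.

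First I would record three structural facts about $f$ on $[0,1]$: (i) $f$ is continuously differentiable and strictly increasing, since $f'(t) = \lambda(d-1)e^{-\lambda t}\bigl(1+e^{-\lambda}-e^{-\lambda t}\bigr)^{d-2} > 0$; (ii) by Lemma \ref{lem:unstable}, $f$ has exactly the two fixed points $p^*$ and $1$ in $[0,1]$; (iii) on the open interval $(p^*,1)$, $f(t) < t$. Fact (iii) is precisely the ``good case'' conclusion established inside the proof of Lemma \ref{lem:unstable} (the argument that $f'(t) - 1 = 0$ has at most one solution in $(p^*,1)$, combined with $f'(1) > 1$ and $f(p^*) = p^*$, rules out the ``bad case'' of Figure \ref{fig:convex2}).

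Next I would run a two-line induction: if $p_j \in (p^*,1)$, then by monotonicity of $f$ and $f(p^*) = p^*$ we get $p_{j+1} = f(p_j) > f(p^*) = p^*$, while by (iii) $p_{j+1} = f(p_j) < p_j < 1$. Since the base case $p_2 = 1 - \delta \in (p^*,1)$ holds by Corollary \ref{cor:giant} (for sufficiently small error floor $p^*$, which is the regime chosen via $d$), the sequence $\{p_j\}_{j\geq 2}$ is strictly decreasing and bounded below by $p^*$. Monotone convergence yields a limit $p_\infty \in [p^*, 1-\delta]$, which by continuity of $f$ must satisfy $f(p_\infty) = p_\infty$. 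Since $p_\infty < 1$, Lemma \ref{lem:unstable} forces $p_\infty = p^*$. The statement $p_\ell \leq p^* + \epsilon$ for some finite $\ell(\epsilon)$ is then immediate from the definition of convergence.

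The argument has essentially no hard step — it is the standard cobweb/monotone-convergence analysis for a scalar map with a single attracting fixed point in the interval of interest. The only thing to be a little careful about is to invoke fact (iii) cleanly from Lemma \ref{lem:unstable} rather than reproving it, and to observe explicitly that the number of iterations depends only on $\epsilon$, $d$, $\lambda$, and $\delta$, all of which are constants independent of the problem size $K$. A quantitative version of $\ell(\epsilon)$ can, if desired, be read off from the contraction factor $f'(p^*) = \lambda(d-1)e^{-\lambda p^*}(1+e^{-\lambda}-e^{-\lambda p^*})^{d-2}$, which is strictly less than $1$ near the stable fixed point $p^*$, giving geometric convergence; but the qualitative statement of the corollary does not require this refinement.
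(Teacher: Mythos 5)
Your proof is correct and rests on the same core observation the paper uses -- namely that $f(t)<t$ on $(p^*,1)$ (the ``good case'' of Lemma~\ref{lem:unstable}) together with $f(p^*)=p^*$ and monotonicity of $f$ pins the sequence inside $(p^*,p_2)$ and forces it to decrease to $p^*$ -- but your conclusion is reached along a slightly different final step. The paper explicitly flags that ``convergence to $p^*$ is not sufficient,'' and then patches this by bounding the per-iteration decrease from below: it sets $\eta(\epsilon)=\min_{p\in[p^*+\epsilon,\,p_2]}\bigl(p-f(p)\bigr)>0$ (positive by compactness and the strict inequality $f<\mathrm{id}$ on that interval) and reads off the explicit bound $\ell(\epsilon)<1/\eta(\epsilon)$. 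You instead answer the same worry by a parametricity observation: since $f$ is fully determined by the $K$-free constants $(d,\lambda)$ and the starting point $p_2=1-\delta$ is likewise $K$-free (Corollary~\ref{cor:giant}), the sequence $(p_j)_{j\geq 2}$ is a single deterministic sequence independent of $K$, and so the $\ell(\epsilon)$ that monotone convergence produces is automatically a constant. Both arguments are sound; the paper's version buys an explicit, computable $\ell(\epsilon)$ (useful when one wants to trace constants), while yours is cleaner and makes the logical structure more transparent. Your handling of the base case is also slightly more careful than the paper's: you note explicitly that if $1-\delta\le p^*$ there is nothing to prove, whereas the paper tacitly assumes $p_2>p^*$.
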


Table \ref{tab:2} illustrates how the error floor $p^*$ and the minimum ratio of right nodes to active left nodes $c = M/K$ change for different values of $d$. If our reliability target allows the error floor to be set at $1.1\times 10^{-3}$, then $d=5$ minimizes the number of required right nodes. Recall that the total number of measurements is $m = 4M$ which matches the result of Table \ref{tab:1}. (See Section \ref{sec:ed}) If one wants to achieve smaller error floor, then $d$ and $c$ should be both increased.
%\textcolor{red}{Add a figure c versus d p versus d!}

In the density evolution analysis so far, we have shown that the \emph{average} fraction of active signal components that cannot be recovered will be arbitrarily close to the error floor after a fixed number of iterations, provided that the tree-like assumption is valid. It remains to show that the actual fraction of left nodes that are not in the giant component after $\ell$ iterations is highly concentrated around $p_\ell$. Towards this end, first in Lemma \ref{lem:tree} we show that a neighborhood of depth $\ell$ of a typical edge is a tree with high probability for a constant $\ell$. Second, in Lemma \ref{lem:concentration}, we use the standard Doob's martingale argument \cite{RU01}, to show that the number of active signal components that are not recovered after $\ell$ iterations of the algorithm is highly concentrated around $K p_{\ell}$. 

Consider a directed edge $\vec{e} = (v,c)$ from a left-node $v$ to a right-node $c$. Define the directed neighborhood of depth $\ell$ of $(\vec{e})$ as $\mathcal{N}_{\vec{e}}^{\ell}$, that is the subgraph of all the edges and nodes on paths having length less than or equal to $\ell$, that start from $v$ and the first edge of the path is not $\vec{e}$. As an example, the directed neighborhood of depth $2$ of $(\vec{e})$ is shown in Figure \ref{fig:tree}.
 
\begin{lemma}\label{lem:tree}
For a fixed $\ell^*$, $\mathcal{N}_{\vec{e}}^{2\ell^*}$  is a tree-like neighborhood with probability at least $1 - \mathcal{O}(\log(K)^{\ell^*}/K)$.
\end{lemma}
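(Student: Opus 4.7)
The plan is to expose $\mathcal{N}_{\vec{e}}^{2\ell^*}$ one edge at a time via a breadth-first search (BFS) starting from $\vec{e} = (v,c)$, and to control two things in turn: (i) the total size of the revealed neighborhood, and (ii) the probability that any newly exposed edge closes a cycle. Two distinct random ingredients must be handled: the left-degree is deterministically $d$, but the right-degree is $\mathrm{Poisson}(\lambda)$ with $\lambda = Kd/M$ constant, and is therefore not bounded a priori. This is precisely the feature that distinguishes the ensemble $\mathcal{C}_1^K(d,m)$ from the regular-regular ensemble of \cite{RU01}, and it is what forces the polylogarithmic factor in the stated bound.

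First I would establish a high-probability size bound on $\mathcal{N}_{\vec{e}}^{2\ell^*}$. Since there are only $M = \Theta(K)$ right nodes and each has a $\mathrm{Poisson}(\lambda)$ degree with constant $\lambda$, a standard Poisson tail estimate combined with a union bound shows that the maximum right-node degree in the entire graph is $\mathcal{O}(\log K)$ with probability $1 - o(1/K)$. Call this event $\mathcal{E}_{\mathrm{deg}}$. Conditioning on $\mathcal{E}_{\mathrm{deg}}$, the BFS grows layer by layer: each explored left node contributes $d-1$ new edges to right nodes, and each explored right node contributes at most $\mathcal{O}(\log K)$ new edges to left nodes. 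Iterating this $\ell^*$ times yields $|\mathcal{N}_{\vec{e}}^{2\ell^*}| \le C(\log K)^{\ell^*}$ for some constant $C = C(d,\lambda,\ell^*)$.

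Next I would bound the probability that the revealed subgraph contains a cycle. Since $H$ is chosen uniformly from the $d$-left-regular ensemble, as we reveal edges in BFS order the far endpoint of each new edge is (conditionally on the history) approximately uniform over the corresponding side of the graph. Hence the probability that this endpoint already lies inside the current partial $\mathcal{N}$ is $\mathcal{O}(|\mathcal{N}|/K)$. A cycle forms only at such a coincidence, and a union bound over the at most $\mathcal{O}((\log K)^{\ell^*})$ edges exposed yields a cycle probability of $\mathcal{O}((\log K)^{2\ell^*}/K)$, which can be absorbed into $\mathcal{O}(\log(K)^{\ell^*}/K)$ by redefining $\ell^*$ inside the $\mathcal{O}$. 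Adding the $o(1/K)$ failure probability of $\mathcal{E}_{\mathrm{deg}}$ gives the claimed bound $1 - \mathcal{O}(\log(K)^{\ell^*}/K)$.

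The main obstacle is the Poisson right-degree, whose unbounded support prevents one from reading off a constant-size neighborhood as in the regular-regular case of \cite{RU01}; the polylogarithmic factor in the final bound is exactly the price paid for truncating via the max-degree event $\mathcal{E}_{\mathrm{deg}}$. Once that step has been handled, what remains is a textbook edge-revealing union bound, and the overall structure of the argument (explore the neighborhood, bound its size, bound the cycle probability) mirrors the standard tree-like-neighborhood lemmas in density evolution.
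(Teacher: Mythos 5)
Your proof is correct and follows the same two-stage strategy as the paper's: first bound the size of the depth-$2\ell^*$ neighborhood by $\mathcal{O}\bigl((\log K)^{\ell^*}\bigr)$ using a Poisson tail estimate, then, conditional on that size bound, argue that each newly exposed edge closes a cycle with probability $\mathcal{O}\bigl(|\mathcal{N}|/K\bigr)$ and take a union bound. The only substantive difference is in how the size bound is obtained. The paper proceeds level by level: it introduces $\alpha_\ell = \PP\bigl(V_\ell > c_1 (\log K)^\ell\bigr)$, conditions on $V_{\ell-1}$ being controlled, observes that the total degree of the right nodes at level $\ell$ is a Poisson random variable with parameter $\Theta\bigl((\log K)^{\ell-1}\bigr)$, applies the Poisson tail bound $\PP(Y \geq y) \leq (e\lambda/y)^y$ to this sum, and telescopes the resulting recursion $\alpha_\ell \leq \alpha_{\ell-1} + c_5/K$. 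You instead condition once on a global event $\mathcal{E}_{\mathrm{deg}}$ that the maximum right-node degree over all $M=\Theta(K)$ bins is $\mathcal{O}(\log K)$, which holds with probability $1 - o(1/K)$ by the same Poisson tail plus a union bound over the $M$ bins, and then the BFS size bound becomes deterministic. Your variant is arguably cleaner to state, at the cost of a slightly cruder bound (worst-case degree per bin rather than the total degree per level); since both give the same $(\log K)^{\ell^*}$ order, either works. You also correctly note the same small slack the paper has: the cycle bound yields $(\log K)^{2\ell^*}/K$, which is absorbed into the $\mathcal{O}(\log(K)^{\ell^*}/K)$ statement because $\ell^*$ is a constant.
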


The proof is provided in Appendix \ref{app:tree}. 

\begin{lemma}\label{lem:concentration}
Over the probability space of the ensemble of $d$-left-regular graphs $\mathcal{C}^K_1(d,M)$, let $Z$ be the number of uncolored edges\footnote{An edge is colored if its corresponding left node is colored.} after $\ell$ iterations of the PhaseCode algorithm. Then, for any $\epsilon > 0$, there exist a large enough $K$ and constants $\beta$ and $\gamma$ such that 
\begin{align}\label{eq:ctcf}
|\mathbb{E}[Z] - Kdp_\ell|& <  Kd\epsilon /2\\ \label{eq:mg}
\mathbb{P}(|Z - Kdp_\ell| &> Kd\epsilon) < 2e^{-\beta \epsilon^2 K^{1/(4\ell + 1)}},
\end{align} 
where $p_\ell$ is derived from the density evolution equation \eqref{eq:density}.
\end{lemma}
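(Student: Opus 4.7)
The plan is to follow the classical Richardson--Urbanke concentration framework \cite{RU01}, adapted to our irregular (Poisson right-degree) ensemble, with two principal steps: (i) bound the bias $|\mathbb{E}[Z] - Kd p_\ell|$ using Lemma \ref{lem:tree}, and (ii) establish sharp concentration of $Z$ around $\mathbb{E}[Z]$ via an edge-exposure Doob martingale with bounded differences.

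For \eqref{eq:ctcf}, I would use linearity of expectation to write $\mathbb{E}[Z] = Kd \cdot \mathbb{P}(\vec{e} \text{ is uncolored after } \ell \text{ iterations})$ for a uniformly random directed edge $\vec{e}$. Conditioned on $\mathcal{N}_{\vec{e}}^{2\ell}$ being a tree, the density-evolution recursion \eqref{eq:density} is \emph{exact} because the messages entering $\vec{e}$ are independent, so the conditional probability of being uncolored is exactly $p_\ell$. By Lemma \ref{lem:tree}, the complementary (non-tree) event has probability at most $\mathcal{O}(\log(K)^\ell/K)$, contributing at most $\mathcal{O}(d \log(K)^\ell)$ to $\mathbb{E}[Z]$, which is smaller than $Kd\epsilon/2$ once $K$ is sufficiently large.

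For \eqref{eq:mg}, I would introduce the Doob (edge-exposure) martingale: order the $Kd$ half-edges of the induced graph and let $Z_i = \mathbb{E}[Z \mid F_i]$, where $F_i$ is the $\sigma$-algebra generated by the pairings of the first $i$ half-edges. Since the coloring status of any edge depends only on its depth-$2\ell$ neighborhood, switching one edge can alter the status of only those edges lying in its depth-$2\ell$ neighborhood (in either the original or modified graph). If every right node had degree bounded by $\Delta$, this would yield $|Z_i - Z_{i-1}| \leq (d\Delta)^{2\ell}$ and Azuma--Hoeffding would immediately produce an exponential bound.

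The main obstacle is that the right-degree distribution is Poisson and hence unbounded, unlike in \cite{RU01}. To get around this, I would truncate: choose $\Delta = \Theta(K^{1/(4\ell+1)})$ and condition on the event $\mathcal{B}$ that no right node has degree exceeding $\Delta$. By Chernoff's bound on the Poisson tail, $\mathbb{P}(\text{Poisson}(\lambda) > \Delta) \leq (e\lambda/\Delta)^\Delta$, and a union bound over the $M = \Theta(K)$ bins makes $\mathbb{P}(\mathcal{B}^c)$ super-polynomially small in $K$, easily absorbed into the stated bound. On $\mathcal{B}$ the bounded-difference constants satisfy $\sum_{i=1}^{Kd} c_i^2 \leq Kd\,(d\Delta)^{4\ell} = \Theta(Kd \cdot K^{4\ell/(4\ell+1)})$, so applying Azuma--Hoeffding with deviation $t = Kd\epsilon/2$ yields $\exp(-\Omega(\epsilon^2 K^{1/(4\ell+1)}))$. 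Combining this deviation bound with the bias estimate from step (i) via the triangle inequality produces \eqref{eq:mg} with suitable constants $\beta,\gamma$. The most delicate part will be justifying the uniform martingale-difference bound on $\mathcal{B}$ while ensuring that the truncation does not affect either $\mathbb{E}[Z]$ or the recursive validity of \eqref{eq:density} at the claimed precision.
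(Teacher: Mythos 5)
Your proposal follows the same two-step Richardson--Urbanke strategy as the paper: bound the bias $|\mathbb{E}[Z]-Kdp_\ell|$ by conditioning on the tree-like event and invoking Lemma~\ref{lem:tree}, then establish concentration via an edge-exposure Doob martingale, truncating the Poisson right-degree at a polynomial-in-$K$ threshold so that a Poisson tail bound plus a union bound over bins lets Azuma's inequality apply with controlled bounded differences. Your particular choice of truncation level $\Delta=\Theta(K^{1/(4\ell+1)})$ paired with the (somewhat loose) difference bound $(d\Delta)^{2\ell}$ differs cosmetically from the paper's $\Delta=O(K^{2/(4\ell+1)})$ with difference $O(\Delta^\ell)$, but both yield the same exponent $K^{1/(4\ell+1)}$ in \eqref{eq:mg}, so this is essentially the paper's proof.
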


The proof is provided in Appendix \ref{app:mg}. 

Now gathering the results of Corollary \ref{cor:de} and Lemmas \ref{lem:giant} and \ref{lem:concentration} completes the proof of Theorem \ref{thm:main}. Note that since the construction of the bipartite graph is random, the fraction $p$ of the non-zero components that can be missed are distributed uniformly at random among the $K$ non-zero components. Indeed, the missed components are only a function of the graph structure that has a distribution which is oblivious to the indices of the left nodes by construction. Further, note that the dominant probability of error is due to the event that the giant component is not formed in the second iteration which happens with probability $\mathcal{O}(1/K)$. It is worth mentioning that Lemma \ref{lem:tree} is used only to prove Lemma \ref{lem:concentration}. Thus, the event that an edge does not have a tree-like neighborhood, which happens with probability $\mathcal{O}(\frac{\log(K)^{\ell^*}}{K})$, is not an error event of the algorithm. Given that a giant component has been formed after the second step of the algorithm, the error event of the algorithm is the event that more than a fraction $p$ of the non-zero signal components are missed, and the probability of such event is upper bounded in \eqref{eq:mg}.

\subsection{Proof of Theorem \ref{thm:2}}\label{sec:proof2}

Recall that we design the left degree distribution $\lambda(x) = \sum_{i \geq 1} \lambda_i x^{i-1}$ of Irregular PhaseCode as follows: 
%RAMTIN, CAN YOU GIVE SOME MOTIVATION FOR THE HARMONIC DISTRIBUTION?  IT SEEMS TO COME OUT OF THIN AIR
$\lambda_i = 0$ for $i \geq D+1$ and
\begin{align}
\lambda_i = \frac{1}{i-1} \times \frac{1}{h(D-1)}, ~ 2 \leq i \leq D,
\end{align}
where $D$ is a (large) constant and $h(x) = \sum_{i=1}^x 1/i$.

We design the number of right nodes to be $M = K/(1-\epsilon)\simeq K(1+\epsilon)$. How to choose constants $D$ and $\epsilon$ will be shortly clarified in Lemma \ref{lem:ir}. The average degree of left nodes (of the pruned graph with $K$ active left nodes) is $\bd = \frac{1}{\sum_i \lambda_i/i}$. To see this, let $E$ be the number of edges of the graph. Then, the number of left nodes of degree $i$ is $E\lambda_i/i$ since $\lambda_i$ is the fraction of edges with degree $i$ on the left. Thus, the number of left nodes is $\sum_i E\lambda_i/i$. So the average left degree is 
$$
\bar{d} = \frac{E}{\sum_i E\lambda_i/i} = \frac{1}{\sum_i \lambda_i/i}.
$$ 
Thus, with our design, 
$$
\bar{d} = (\sum_{i=2}^D \frac{\lambda_i}{i})^{-1} = h(D-1) \frac{D}{D-1}.
$$
Consequently, the Poisson density parameter of the right-node degree distribution is:
$$
\eta = \frac{K\bd}{M} = h(D-1) \frac{D}{D-1} (1-\epsilon).
$$

%\begin{figure*}[t]
%        \centering
%        \begin{subfigure}{0.35\textwidth}
%                \includegraphics[width=7.5cm,height=5cm]{}
%                \caption{The density evolution curve for parameters $K = 10^5$, $\epsilon=0.3$ and $D = 10^3$.}
%                \label{fig:de3}
%        \end{subfigure}
%        \qquad
%        \begin{subfigure}{0.35\textwidth}
%                \includegraphics[width=7.5cm,height=5cm]{}
%                \caption{The evolution of $p_j$ after each iteration for parameters $K = 10^5$, $\epsilon=0.3$ and $D = 10^3$.}
%                \label{fig:de4}
%        \end{subfigure}
%        \caption{Figure $(a)$ illustrates the density evolution equation: $p_{j+1} = f(p_j)$. In order to track the evolution of $p_{j}$, pictorially, one draws a vertical line from $(p_j,p_j)$ to $(p_j,f(p_j))$, and then a horizontal line between $(p_j,f(p_j))$ and $(f(p_j),f(p_j))$. Since the two curves meet at $(1,1)$ if $p_0 =1$, then $p_j$ gets stuck at $1$. However, if $p_0 = 1 - \delta$, for some $\delta > 0$, $p_j$ decreases after each iteration, and it gets very close to $0$ in finite number of iterations. Figure $(b)$ illustrates the same phenomenon by showing the evolution of $p_j$ versus the iteration, $j$. Note that in this example, since $\epsilon = 0.3$ and we are operating further away from the capacity, $p_j$ gets very close to $0$ after only $25$ iterations.}\label{fig:density2}
%\end{figure*}

\begin{lemma}\label{lem:de}
Let $f(x) = \lambda(1 + e^{-\eta} - e^{-\eta x})$. The fixed point equation $x = f(x)$ has exactly two solutions, $x^*_1 = 1$ and $0 < x^*_2 < 1$, in the interval $x \in [0,1]$. Furthermore, if $f'(1) > 1$, then $f(x) < x$ for $x \in (x^*_2,1)$. 
\end{lemma}

%\textcolor{red}{[Move to proof to appendix?]}
See Appendix \ref{app:de2} for the proof.

\begin{figure*}[t]
        \centering
        \begin{subfigure}{0.35\textwidth}
                \includegraphics[width=7.5cm,height=5cm]{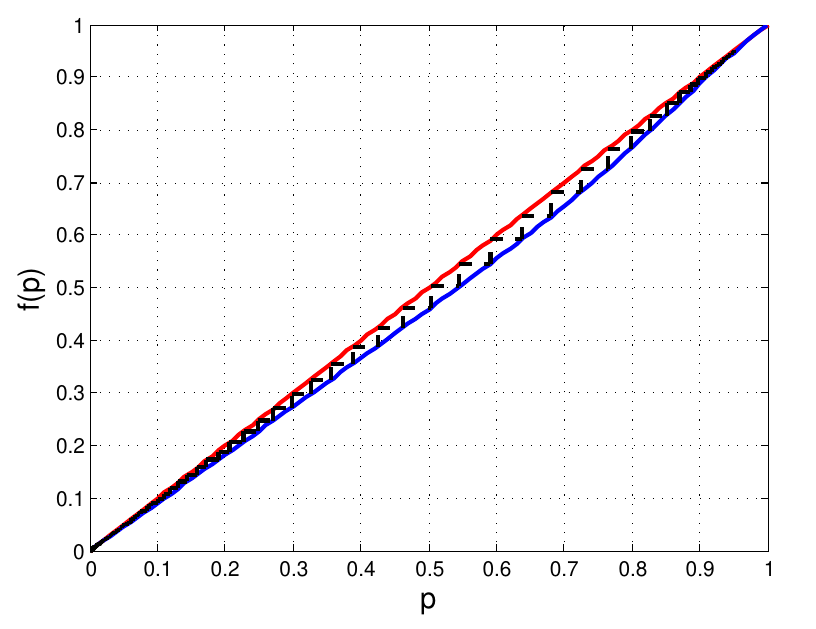}
                \caption{The density evolution curve for parameters $K = 10^5$, $\epsilon=0.1$ and $D = 10^3$.}
                \label{fig:IRde1}
        \end{subfigure}
        \qquad
        \begin{subfigure}{0.35\textwidth}
                \includegraphics[width=7.5cm,height=5cm]{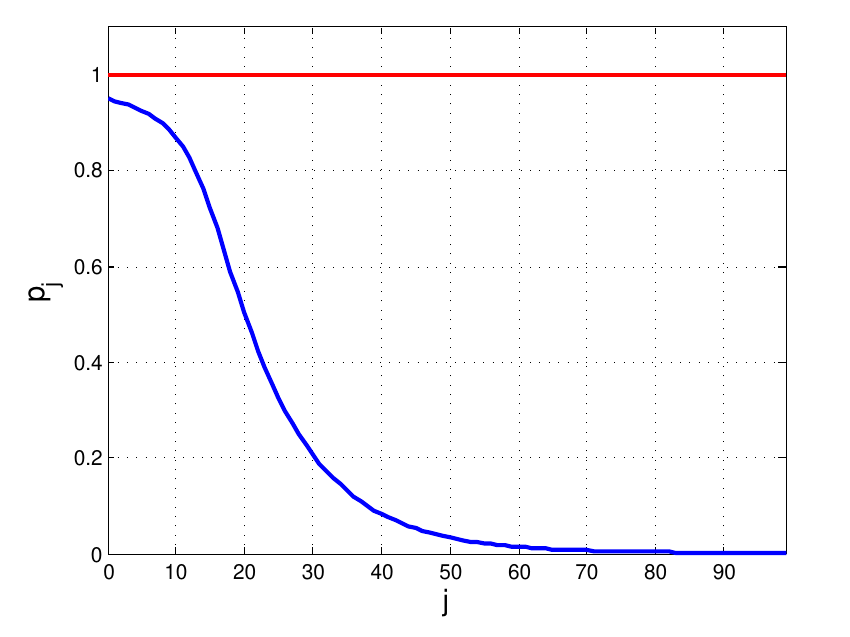}
                \caption{The evolution of $p_j$ after each iteration for parameters $K = 10^5$, $\epsilon=0.1$ and $D = 10^3$.}
                \label{fig:IRde2}
        \end{subfigure}
        \caption{Figure $(a)$ illustrates the density evolution equation for Irregular PhaseCode, which is similar to Figure \ref{fig:de1}. Figure $(b)$ illustrates the same phenomenon showing the evolution of $p_j$ versus the iteration, $j$.  Note that in this example, since $\epsilon = 0.1$ and we are operating very close to the capacity, $p_j$ gets very close to $0$ after around $90$ iterations, which is much larger than around $20$ iterations needed by Regular PhaseCode so that $p_j$ gets very close to $0$. The reason is that the gap between the two curves in $(a)$ gets smaller once the number of measurements is close to the capacity.}\label{fig:density1}
\end{figure*}

As shown in Lemma \ref{lem:de}, given that $f'(1)>1$, the density evolution has a fixed point at $1$, and the other fixed point of the equation is approximately $p^* \simeq \lambda(e^{-\eta})$, which corresponds to the error floor of the algorithm. In the following lemma, we show that for any arbitrarily small numbers $p^*$ and $\epsilon$, there exists a large enough constant $D(p^*,\epsilon)$ such that $f'(1) > 1$. This shows that with only $4M = 4K/(1-\epsilon) \simeq  4K(1+\epsilon)$ measurements, Irregular PhaseCode algorithm can recover an arbitrarily-close-to-one fraction of the non-zero signal components. So given that the coloring procedure starts (the density evolution equation can be started from $1- \delta$), Irregular PhaseCode is capacity-approaching. 

Now we show that a linear size giant component of colored left nodes can be formed similar to Lemma \ref{lem:giant} using a second stage of only $m' = \epsilon' K$ extra measurements. 
%Note that these measurements are still non-adaptive. 
By assumption of Theorem \ref{thm:2}, the support of the non-zero components of the signal is uniformly random. Now fix some arbitrarily small constant $\delta' > 0$. Let $\tilde{\vect{x}}$ be the vector of the first $\delta' n$ components of the signal. By the law of large numbers, the number of non-zero elements of $\tilde{\vect{x}}$ is $\delta' K + o(K)$. Consider the sub-problem of forming a giant component of size linear in $K$ in $\tilde{\vect{x}}$. By Lemma \ref{lem:giant}, one can design $m'  = 14 \delta' K$ measurements to form the giant component. Thus, $\epsilon' = 14 \delta'$. Since $\delta'$ can be made arbitrarily small, $\epsilon'$ can also be made arbitrarily small.  

The main lemma for establishing the proof of Theorem \ref{thm:2} is as follows.

\begin{lemma}\label{lem:ir}
For any $p^* > 0$ and any $\epsilon > 0$, there exists a large enough constant $D(\epsilon,p^*)$ such that $M = K(1-\epsilon)^{-1}\simeq K(1+\epsilon)$ is the number of right nodes (bins), and $p_j$ converges to $p^*$ as $j$ goes to infinity.
\end{lemma}

See Appendix \ref{app:ir} for the proof.

\begin{corollary}\label{cor:IRde}
Given that $p_2 = 1 - \delta$, for any $\epsilon_1 > 0$, there exists a constant $\ell(\epsilon_1)$ such that $p_\ell \leq p^* + \epsilon_1$. 
\end{corollary}
%The proof is similar to Corollary \ref{cor:de}.

The rest of the proof is similar to Theorem \ref{thm:main}. It remains to show that the actual fraction of active signal components that are not recovered after $\ell$ iterations is highly concentrated around $p_\ell$. Since the maximum degree of left nodes is again a constant $D$, the exact procedure in Section \ref{sec:proof1} (Lemmas \ref{lem:tree} and \ref{lem:concentration}) can be used to get a similar concentration bound as in Lemma \ref{lem:concentration}. Now the total number of measurements is $m = 4K(1+ \epsilon) + m' = 4K(1 + \epsilon + \epsilon')$. Since $\epsilon$ and $\epsilon'$ can be made arbitrarily small, the proof of Theorem \ref{thm:2} is complete.

\section{Fourier-Friendly PhaseCode}\label{sec:practical}

In some applications such as optical imaging \cite{Rodenburg,Popov}, the design of the measurement matrix cannot be arbitrary. In optical imaging, the object of interest, signal $\vect{x}$, can be passed through an optical diffraction pattern or a mask and an optical Fourier lens. A typical setup for optical imaging is shown in Figure \ref{fig:masklens}. With a complex-valued mask, we can modulate each component of the signal $x_i$ by some complex number $d_i$, while the lens takes the Fourier transform of the signal. For example, consider passing the signal through a mask and then Fourier lens which is common in optical imaging. The output of this transform is $\mat{F}\mat{D}\vect{x}$, where $\mat{F}$ is the DFT matrix of length $n$ and $\mat{D} \in \mathbb{C}^{n\times n}$ is a diagonal mask matrix (Figure \ref{fig:MF}). In general, it is possible to have multiple stages of masks and lenses. While increasing the number of stages can make the system more complex, in many optical systems, having up to two stages is considered practical \cite{Wang,Pavani}. In our proposed solution, we will have two masks for all measurements.

In this section, we show how one can have a Fourier-friendly implementation of the set of measurements described in previous sections. We first provide an overview of the result of \cite{Sameer} on constructing a sparse-graph code using ``Chinese Remainder Theorem'', in Subsection \ref{sec:CRT}. In Subsection \ref{sec:FFS}, we show how our proposed measurements can be obtained in a Fourier-friendly setup, with the aid of the result of \cite{Sameer}.

\begin{figure}
\centering
    \includegraphics[width= 0.45\textwidth]{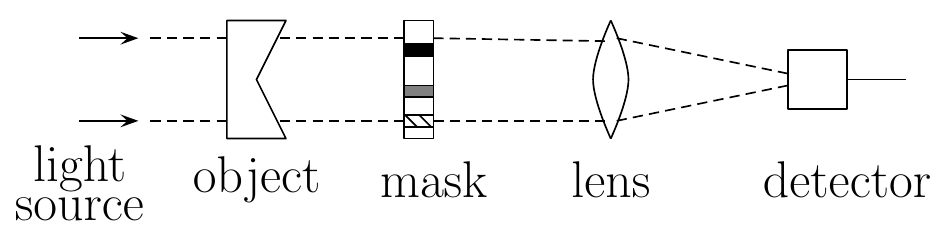}
  \caption{A typical setup for many optical system where the object of interest is passed through a coded diffraction pattern or a mask , and then through a Fourier lens.\label{fig:masklens}} 
\end{figure}

\begin{figure}
\centering
    \includegraphics[width= 0.45\textwidth]{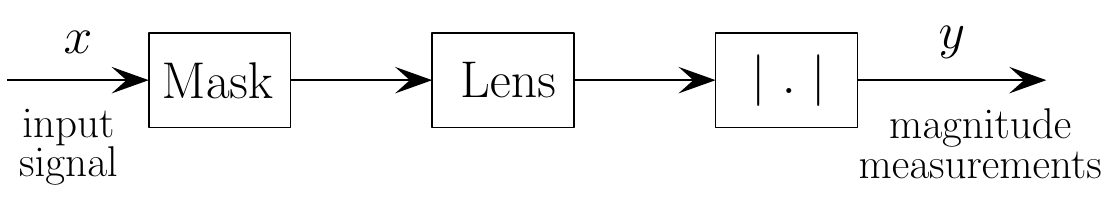}
  \caption{The block diagram of an optical imaging system where signal $x$ is passed through a mask (modulated by a diagonal matrix), and then passed through a lens (DFT matrix). The magnitude block, $|.|$, is showing that the phase information is not available in the measurements. \label{fig:MF}} 
\end{figure}

\subsection{Ensemble of Graphs Constructed by Chinese Remainder Theorem}\label{sec:CRT}
In this subsection, we provide a brief overview of the result in \cite{Sameer} that uses the ``Chinese Remainder Theorem" (CRT) to construct a deterministic and well-structured coding matrix that is also of practical interest. We use this construction to design a Fourier-friendly measurement matrix. For more details about the theory of the ensemble of graphs constructed by the CRT, we refer the readers to \cite{Sameer}.

In Section \ref{sec:proof1}, we analyzed the performance of PhaseCode for the ensemble of graphs $\mathcal{C}^K_1(d,M)$. In this ensemble, each left node is connected to exactly $d$ right nodes randomly. Now we consider another ensemble $\mathcal{C}^K_2(\mathcal{F},m)$. Define the set $\mathcal{F}$ as $\mathcal{F} = \{ f_1,f_2,\ldots,f_d \}$. Partition the right nodes into $d$ sets. Let the number of right nodes in stage $i$ be $f_i$; thus, $\sum_{i=1}^d f_i = m$. In this construction, each left node is connected to exactly one right node per stage randomly. Therefore, we again end up with having a bipartite graph with left regular degree $d$. Assuming that $f_i = F + \Theta(1)$ for all $i$ and consequently $F = \Theta(K)$, the edge degree distribution of the right nodes does not change for large enough $K$ and is given in \eqref{eq:edge}. Therefore, the tree analysis and the density evolution equation stated in \eqref{eq:density} remain the same, and one can essentially get all the previous results using this ensemble. 

Note that sampling a graph from $\mathcal{C}^K_2(\mathcal{F},m)$ has no practical advantage over sampling from the ensemble $\mathcal{C}^K_1(d,M)$. However, we use the CRT to show that if the $K$ non-zero components of the signal is chosen uniformly at random with replacement from the $n$ components, and if $K$ is in the sub-linear regime (more specifically, $K = n^\delta$ for some $\delta \in (0,1)$), one can design a deterministic coding matrix which consists of $d$ stages of sub-matrices with rows that are circularly-shifted versions of a deterministic \emph{subsampling} pattern. The subsampling rate at stage $i$ is $f_i$. In the following example, we demonstrate how the deterministic matrix is constructed. 

\begin{example}
Suppose that the coding matrix has two stages with $f_1 = 2$ and $f_2 = 3$. Assume that $n=6$. Then, the coding matrix is 
$$
\left( \begin{tabular}{ccccccccc}
1 & 0 & 1 & 0 & 1 & 0  \\
0 & 1 & 0 & 1 & 0 & 1  \\
\hline
1 & 0 & 0 & 1 & 0 & 0  \\
0 & 1 & 0 & 0 & 1 & 0  \\
0 & 0 & 1 & 0 & 0 & 	1
\end{tabular}
\right).
$$ 
\end{example}

Now, we formally define the ensemble of graphs constructed by the CRT. First, assume $n = \prod_{i=1}^d f_i$ (i.e. $K = \Theta(n^{1/d})$). Partition the set of $m = \sum_{i=1}^d f_i$ right nodes to $d$ stages in the trivial way. Suppose that the $K$ non-zero components of the signal are chosen uniformly at random with replacement from the $n$ components. Note that the ``with replacement" assumption might lead to having a signal with less than $K$ non-zero components, but this is only a technical assumption that we need to make, and via simulations we will show the good performance of the CRT-based code for exactly $K$-sparse signal. Let $\mathcal{I} = (i_1,i_2,\ldots,i_K)$ denote the non-zero components where $1 \leq i_k \leq n, ~ 1 \leq k \leq K$. We associate the integers from $0$ to $n-1$ to $d$ numbers $(r_1,r_2,\ldots,r_d)$ using the CRT, where $0 \leq r_i \leq f_i -1$; thus, $i_k$ uniquely determines one right node per stage. The way this association is done will be explained shortly. Then, each active left node $i_k$ is connected to the associated set of right nodes that are determined by $(r_1,r_2,\ldots,r_d)$. The ensemble $\mathcal{C}^K_3(\mathcal{F},m)$ is the collection of all the graphs that are constructed as described. Furthermore, the uniformly at random selection of $\mathcal{I}$ makes sure that all these graphs occur with equal probability. See \cite{Sameer} for details.

To show how we associate $\mathcal{I}$ to $(r_1,r_2,\ldots,r_d)$, we need to review the Chinese Remainder Theorem. Let $n = \prod_{i=1}^d f_i$ and $f_i$'s are pairwise co-prime positive integers. The theorem states that every integer $n'$ between $0$ and $n-1$
is uniquely represented by the sequence $(r_1, r_2, . . . , r_d)$ of its remainders modulo $f_1,f_2,\ldots,f_d$ respectively and vice-versa. We use this unique CRT mapping to associate the active left nodes with $d$ right nodes. 

\begin{lemma}\label{lem:CRT}
\cite{Sameer} The ensembles $\mathcal{C}^K_2(\mathcal{F},m)$ and $\mathcal{C}^K_3(\mathcal{F},m)$ are identical.
\end{lemma}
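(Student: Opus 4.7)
The plan is to show that both ensembles induce the same distribution over bipartite graphs by reducing the comparison to the joint law of the active balls' stage assignments. Both ensembles share the same graph architecture---$n$ left nodes and $m = \sum_{i=1}^d f_i$ right nodes split into $d$ stages, with each active ball having exactly one neighbor per stage---so it suffices to show that the $d$-tuples $(r_1^{(k)}, \ldots, r_d^{(k)})_{k=1}^K$ encoding, for each of the $K$ active balls, the chosen bin in each stage, have the same joint distribution in both ensembles.

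In $\mathcal{C}_2^K(\mathcal{F}, m)$, by construction each active ball independently picks one bin uniformly at random in each stage, and distinct active balls are independent; thus the stage-assignment tuples are i.i.d.\ uniform on the product space $\prod_{i=1}^d \{0, \ldots, f_i - 1\}$. For $\mathcal{C}_3^K(\mathcal{F}, m)$, the key step is to invoke the CRT bijection: since $f_1, \ldots, f_d$ are pairwise coprime with $\prod_i f_i = n$, the map
\[
\Phi : \{0, \ldots, n-1\} \to \prod_{i=1}^d \{0, \ldots, f_i - 1\}, \qquad \Phi(n') = (n' \bmod f_1, \ldots, n' \bmod f_d),
\]
is a bijection between two finite sets of equal cardinality. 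The pushforward of the uniform distribution on $\{0, \ldots, n-1\}$ under $\Phi$ is therefore the uniform distribution on the product space, which factorizes as a product of uniforms on each $\{0, \ldots, f_i - 1\}$. Concretely, for any $(r_1, \ldots, r_d)$, one has $\Pr(i_k \bmod f_i = r_i \text{ for all } i) = 1/n = \prod_i 1/f_i$. Since the $K$ active indices $i_1, \ldots, i_K$ in $\mathcal{C}_3^K$ are drawn i.i.d.\ uniformly from $\{0, \ldots, n-1\}$, their stage-assignment tuples are i.i.d.\ uniform on $\prod_i \{0, \ldots, f_i - 1\}$, which exactly matches the law of $\mathcal{C}_2^K$.

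No significant technical hurdles arise here; the proof is really just the observation that the CRT map is a bijection transporting a uniform distribution to a uniform distribution, together with the fact that uniformity on a finite product space equals a product of uniform marginals. The one prerequisite to keep straight is the pairwise coprimality of the $f_i$'s, which is what licenses the CRT bijection in the first place.
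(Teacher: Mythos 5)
Your proof is correct and rests on the same key ingredient as the paper's, namely the CRT bijection between $\{0,\ldots,n-1\}$ and $\prod_i \{0,\ldots,f_i-1\}$. The paper's own argument is more terse---it establishes mutual set inclusion via CRT and relies on the preceding paragraph's remark that the uniform choice of $I$ makes all graphs equiprobable---whereas you fold those two steps into a single clean pushforward-of-measure argument, which is a more explicit way of saying the same thing.
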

\begin{proof}
Clearly, $\mathcal{C}^K_3(\mathcal{F},m) \subset \mathcal{C}^K_2(\mathcal{F},m)$. The reverse is also true by CRT since there is a unique integer between $0$ to $n - 1$ with remainders $r_i$ modulo $f_i$ for all $i$.
\end{proof}

\begin{figure}[t]        
    \centering
    \includegraphics[width=0.45\textwidth]{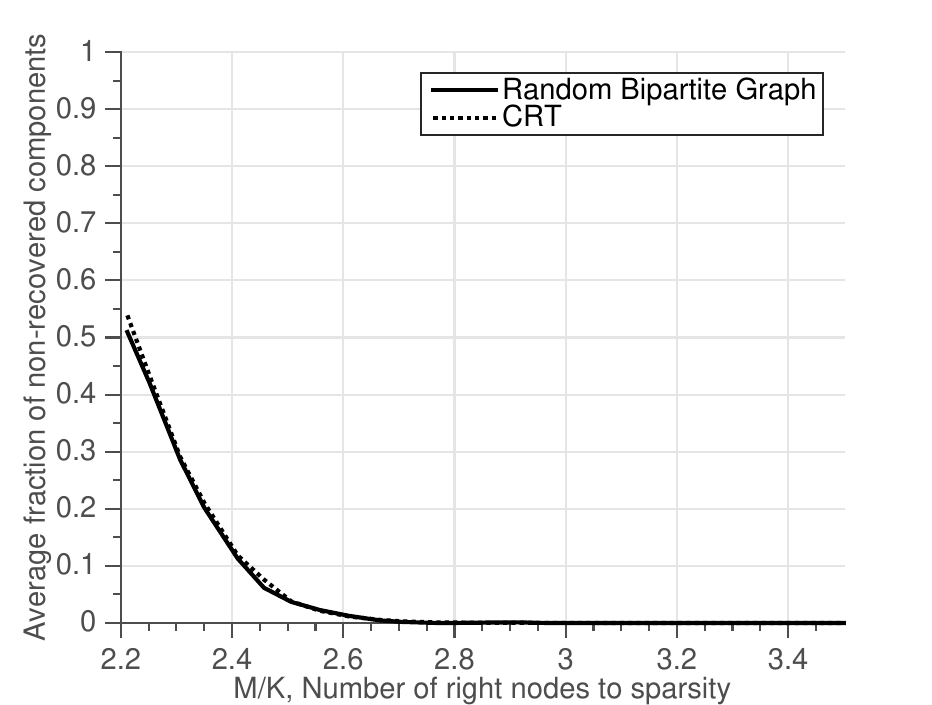}
    \caption{\textbf{Comparison of random left-regular bipartite graph ensemble and CRT ensemble.} We choose the left degree $d=7$, and construct an appropriate CRT ensemble based on $\mathcal{F} = \{47,49,50,53,57,59,61\}$. The number of right nodes is determined by $\mathcal{F}$, i.e., $M = \sum_{i=1}^{d}{f_i} = 376$. 
%By varying $K$ from $107$ to $170$, we effectively control $M/K$, the ratio of bins to balls. Then, we also evaluate performance of algorithms with Balls and Bins ensembles, and compare results. 
Each operating point is averaged over $10000$ runs. We observe negligible difference in performance between the two ensembles.}
    \label{fig:fig2_crt}
\end{figure}

Figure \ref{fig:fig2_crt} demonstrates the performance of PhaseCode with two ensembles: $\mathcal{C}^K_1(d,M)$ and $\mathcal{C}^K_3(\mathcal{F},m)$. We choose $d=7$ and $\mathcal{F} = \{47, 49, 50, 53, 57, 59, 61\}$. Thus, $M = \sum_{i=1}^{d}{f_i} = 376$. We varied the value of $K$ ($107 \leq K \leq 170$) such that $M/K$ varies between $2.2$ and $3.5$. Each point is averaged over $10000$ runs to determine the error probability. One can observe negligible difference between the performance of the algorithm for the two ensembles.

In the following we provide remarks of how one can extend the above construction of CRT.
%\begin{Extension}
\begin{remark}
In the above example of CRT construction, we implicitly assumed $K = \Theta(n^{1/d})$. The technique can be extended to cases where $K = \Theta(n^{\alpha/d})$ for $0 \leq \alpha < d$. Instead of using $\mathcal{F}$ as heights of the $d$ stages of the bipartite graph, we use $\mathcal{F}' = \{f'_1, ..., f'_d\}$, where
\[
f'_i = \prod_{j=0}^{\alpha-1} f_{\left((i+j)\bmod d\right) + 1}.
\]
For example, if $\alpha = 2$ and $d=7$, one can convert a set of coprimes 
$$
\{f_1, f_2, f_3, f_4, f_5, f_6, f_7\}
$$ 
to the set 
$$\mathcal{F} = \{f_1 f_2, f_2 f_3, f_3 f_4, f_4 f_5, f_5 f_6, f_6 f_7, f_7 f_1\}.
$$
Then, $M = \sum_{i=1}^{d} \prod_{j=0}^{\alpha-1} n_{\left((i+j)\bmod d\right) + 1} = \Theta(n^{2/d})$, which is in the order of $K$. Because $\mathcal{F}$ can be chosen from a dense set of coprimes, one can always choose it carefully to induce a right number of measurements. For the most general case where $K = \Theta(n^{p/q})$ and $0 \leq p/q < 1$, one can use a similar extension and construction by finding $q$ coprimes and stacking $p$ of them in each stage. We omit details of the technique and refer interested readers to \cite{Sameer}.
\end{remark}

\subsection{Fourier-Friendly Compressive Phase Retrieval}\label{sec:FFS}

Without loss of generality, we consider only a 1-D case for $\vect{x}$ here, though our arguments extend in a straight-forward way to 2-D images as well. 
Suppose that the signal of interest $\vect{x}$ is sparse in the Fourier domain, which is of interest in many optical imaging settings. 
Let $\vect{X} = \mat{F}\vect{x}$ be the Fourier transform of the signal. In Subsection \ref{sec:CRT}, we showed that the coding matrix $\mat{H}$ can be realized using $d$ stages of circulant matrices without changing the performance of sparse-graph codes. To have a Fourier-friendly implementation of the CRT code matrix, we expand each stage of the $f_i \times n$ matrix to a circulant $n \times n$ matrix. Let $\mat{C}$ denote this circulant coding matrix for one stage. In the following, we show that how using our proposed CRT code matrix, one can have access to all the necessary measurements using \emph{only diagonal masks and lenses}. 
%Note that there are two types of measurements that we are interested in. The first type is the measurement obtained by modulating the coding matrix with complex exponentials such as $e^{\bi \om \ell}$. The second type involves modulating by $\cos(\om \ell)$. 
Note that we are interested in measurements of the modulated signal by complex exponentials such as $e^{\bi \om \ell}$ or by magnitude modulators $\cos(\om \ell)$.
First let us see how the \emph{plain} measurements without these modulations can be obtained if the coding matrix is circulant. The plain measurements are $|\sum_j C_{ij}X_j|$. Since $\mat{C}$ is circulant, the eigenvectors of $\mat{C}$ are the columns of a unitary Fourier matrix \cite{DSP}. Thus, the eigenvalue decomposition of $\mat{C}$ is $\mat{C} = \mat{F} \mat{D} \mat{F}^{-1}$ for some diagonal matrix $\mat{D}$. Hence, we construct our measurements by modulating the signal $\vect{x}$ with the diagonal mask $\mat{D}$ and then taking a Fourier transform by using an optical lens: 
\begin{align*}
|\mat{F}\mat{D}\vect{x}| &= |\mat{F}\mat{F}^{-1}\mat{C}\mat{F}\vect{x}| \\
&= |\mat{C}\mat{F}\vect{x}| \\
&= |\mat{C}\vect{X}|.
\end{align*}  
For each stage of the CRT code matrix (there are $d$ stages overall), we need one physical experiment. The physical experiment corresponding to the $i$-th stage, where $1 \leq i \leq d$, gives us $n/f_i$ replicas of $f_i$ unique measurements in one shot. As illustrated in Figure \ref{fig:CRT}, for each experiment, the camera measures only one copy of the $f_i$ measurements. Let $\vect{y}_i \in \mathcal{C}^{f_i}$ be the measurements corresponding to stage $i$. Then, the measurements of the different stages are gathered to form the measurement vector $\vect{y} \in \mathcal{C}^m$ as follows: 
$$
\vect{y} = [\vect{y}_1^T,\vect{y}_2^T,\ldots,\vect{y}_d^T]^T.
$$
Thus, the actual sample complexity is still $m = \sum_{i=1}^d f_i = \Theta(K)$.

\begin{figure} 
\centering
    \includegraphics[width= 0.45\textwidth]{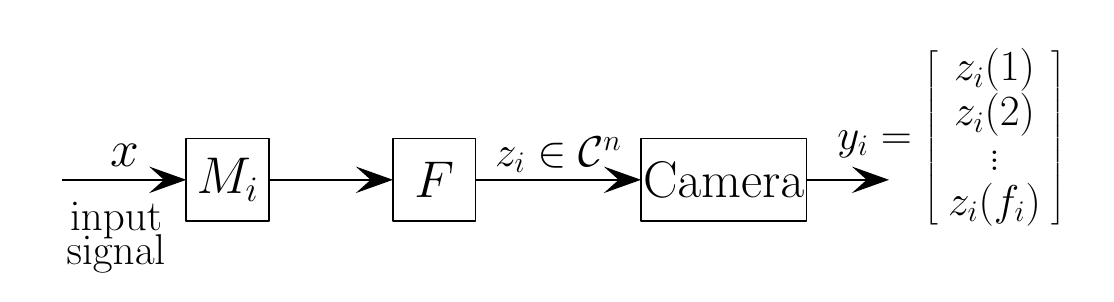}
  \caption{The block diagram of Fourier-friendly compressive phase retrieval using the CRT matrix. The figure shows stage $i$ of the CRT matrix ($1 \leq i \leq d$). The signal of interest, $x$, is passed through a binary mask corresponding to stage $i$, and then the Fourier lens. The output of this experiment is signal $z_i$ of length $n$. However, these $n$ measurements are not unique; they are $n/f_i$ replicas of $f_i$ unique measurements. Thus, the camera only reads the first $f_i$ components of $z_i$. \label{fig:CRT}} 
\end{figure}

Now we explain how one can get access to all the necessary measurement $y_{1,i}$ to $y_{4,i}$. We explain the construction for $y_{1,i}$. Other measurements can be similarly realized. We use $3$ blocks of Fourier transforms (lenses) and $2$ masks as follows. Let $\tilde{\mat{D}}$ be a diagonal matrix such that $\tilde{d}_{\ell \ell} = e^{\bi \om \ell}$. We are interested in constructing the measurements of the form $|\mat{C} \tilde{\mat{D}} \vect{X}|$. This can be done by using two masks, $\tilde{\mat{D}}$ and $\mat{D}$, with three Fourier lenses as follows. 
\begin{align} \label{eq:mask}
|\mat{F} \mat{D} \mat{F} \tilde{\mat{D}} \mat{F} \vect{x}| &= |\mat{F} \mat{F} \mat{C} \mat{F}^{-1} \mat{F} \tilde{\mat{D}} \vect{X}| \\
&= |\mat{F}^2 \mat{C} \tilde{\mat{D}} \vect{X}|.
\end{align}   
Note that $\mat{F}^2$ is just a permutation matrix so we can construct all the measurements $y_{1,i}$ using only two masks and Fourier lenses.  

\begin{remark}
Since each optical lens is equivalent to a Fourier transform, we can also implement a compressive Fourier-friendly phase retrieval algorithm, when $\vect{x}$ is sparse (and $\vect{X}$ is not sparse) by just adding an optical lens to the measurement system as follows. Suppose that $\mat{A}$ is a Fourier-friendly measurement system that is able to recover $\vect{x}$, when $\vect{X}$ is sparse. That is, one is measuring the sparse signal $\vect{X}$ with measurement matrix $\mat{A}\mat{F}^{-1}$. Then, $\mat{A}\mat{F}$ is a Fourier-friendly measurement matrix that is able to recover $\vect{x}$, when $\vect{x}$ is sparse since $\mat{A}\mat{F}\vect{x} = \mat{A}\mat{F}^{-1} \mat{F}^2 \vect{x}$. Note that $\mat{F}^2$ is just a permutation matrix; thus, $\mat{F}^2 \vect{x}$ is a sparse signal that is again measured by $\mat{A} \mat{F}^{-1}$.
\end{remark}

\section{Robust PhaseCode}\label{sec:noisy}
In this section, we consider the noisy compressive phase retrieval problem.
The noisy compressive phase retrieval problem is to recover a $K$-sparse complex signal $\vect{x}$,
from a set of quadratic measurements 
$$
y_i=\ABS{\vect{a}_i\HET\vect{x}}^2+w_i,~~i\in[m],
$$
where $\vect{a}_i\HET\in\CMP^n$ are rows of the measurement matrix $\mat{A}\in\CMP^{m\times n}$, $w_i$'s are noise, and $[m]$ denotes the set $\{1,2,\ldots,m\}$. We consider the regime where there exist two constants $\beta$ and $\delta$ such that $K=\beta n^\delta$, $\delta\in(0,1)$. We assume that $w_i$'s are independent, zero-mean, sub-exponential \cite{nonasym} random variables. This model is considered in many phase retrieval literatures \cite{Candes1, Polar, Sastry}. 

We also assume that signal $\vect{x}$ is quantized, which means that the components of $\vect{x}$ lie in a finite set of complex numbers. 
More specifically, let $L_m$ and $L_p$ be the number of possible magnitudes and phases of the non-zero components, respectively. Then, each component of $\vect{x}$ is in the set 
$$\SET=\{u\varepsilon e^{\IMG \frac{2\pi (v-1)}{L_p}} | u\in[L_m], v \in[L_p] \}\cup\{0\}\subset\CMP,$$
where $\varepsilon>0$. Quantized signals can be good approximations of the real world signals and are natural for signal processing with computers \cite{love2004value, candy1974use}. 
%Additionally, we assume $\vect{s}$ is $K$-sparse\footnote{We define the \emph{support}, denoted by $\SUPP{\vect{s}}$, to be the set of the indices of the non-zero components of $\vect{s}$.}, i.e., $\ABS{\SUPP{\vect{s}}}=K$.  
%In this paper, we consider the regime where there exist two constants $\beta$ and $\delta$ such that $K=\beta n^\delta$, $\delta\in(0,1)$.

We propose two schemes to robustify PhaseCode in the presence of noise: almost-linear scheme and sublinear scheme. The main results of this section are the following theorems.

\begin{theorem}\label{thm_exh_overall}
The almost-linear scheme can recover a random fraction $1-p$, for arbitrarily small $p$, of the non-zero elements of $\vect{x}$ with probability $1-\BIGO(1/K)$, 
with $\Theta(K\log(n))$ measurements. The computational complexity of the algorithm is $\Theta(L_m L_p n\log(n))$.
\end{theorem}

\begin{theorem}\label{thm_fast_overall}
The sublinear scheme can recover a random fraction $1-p$, for arbitrarily small $p$, of the non-zero elements of $\vect{x}$ with probability $1-\BIGO(1/K)$,
with $\Theta(K\log^3(n))$ measurements. The computational complexity of the algorithm is $\Theta(L_m L_p K\log^3(n))$.
\end{theorem}
See the proofs of Theorems \ref{thm_exh_overall} and \ref{thm_fast_overall} in Appendix \ref{prf_exh_overall} and \ref{prf_thm_fast}. Details of the measurement design and the decoding algorithm are shown in the following subsections.

\subsection{Almost-linear Scheme}\label{sec:exh_search}

The idea of the almost-linear scheme is to encode the columns as different patterns. With the number of measurements in each right node being $\Theta(\log(n))$, the patterns are guaranteed to be different enough, so that we can successfully resolve singletons, mergeable multitons, and resolvable multitons.
% by comparing all possible patterns with the measurements we have.
\subsubsection{Design of Measurements}

Instead of using the 4-by-$n$ trigonometric modulation matrix, we use a new random matrix $\mat{A}_0=\{a_{ij}\}_{P\times n}$ whose entries are i.i.d. with the following distribution:
\begin{equation}\label{distribution}
 a_{ij}=
\begin{cases}
0, & \text{with probability } 1/2 \\
e^{\IMG \theta_{ij}}, & \text{with probability } 1/2,
\end{cases}
\end{equation}
where $\theta_{ij}$'s are i.i.d. and uniformly distributed in $[0,2\pi)$. We call $\mat{A}_0$ the \emph{test matrix}, and we can show that we need $P=\Theta(\log(n))$ for each right node to achieve successful recovery. 

%Therefore, there exists $\eta>0$ with $\SGS{{\rm Re}(a_{ij})}\le\eta$ and $\SGS{{\rm Im}(a_{ij})}\le\eta$
For the almost-linear algorithm, the measurement matrix of the $l$th right node is
$
\mat{A}_l=\mat{A}_0\DIAG{\vect{h}_l}.
$
%where $\vect{h}_l\in \{0,1\}^n$ is the $l$th row of the coding matrix $\mat{H}$. 
Without loss of generality, we omit index $l$, and simply use $\vect{h}$ to denote the coding pattern (the left nodes connected to the right node) of a right node.
Then the measurements of this right node are
\begin{equation}\label{measurement}
y_i=\ABS{\vect{a}_i\HET \DIAG{\vect{h}}\vect{x}}^2+w_i,\ i\in[P],
\end{equation}
where $\vect{a}_i\HET$ is the $i$th row of $\mat{A}_0$, and the noise $w_i\in\REAL$, $i\in[n]$ 
satisfies the properties mentioned earlier.
To simplify notation, we define a linear map $\LNR$ from $\CMP^{n\times n}$ to $\REAL^P$:
\begin{equation}\label{lineardef}
\LNR:\ \mat{Z}\mapsto \{\vect{a}_i\HET \mat{Z} \vect{a}_i\}_{i\in[P]}.
\end{equation}
Now according to (\ref{measurement}), by defining $\vect{z}=\DIAG{\vect{h}}\vect{x}$, we have $\vect{y}=\LNR(\vect{z}\vect{z}\HET)+\vect{w}$, 
where $\vect{y}=\{y_i\}_{i\in[P]}$ and $\vect{w}=\{w_i\}_{i\in[P]}$ are the measurement vector and noise vector, respectively. We call $\vect{z}$ the \emph{true signal} corresponding to this right node. 
%It is easy to see that whether a bin is a singleton, doubleton, or multiton is determined by the size of support of $\vect{x}$.

\subsubsection{Decoding Algorithm}
As mentioned earlier, the PhaseCode algorithm requires the measurements in each right node to enable three operations: 
detecting singletons, resolving strong doubletons, and detecting resolvable multitons. Using our new measurement system, these operations can be done reliably by a simple guess-and-check method: we guess all possible indices, magnitudes, and relative phases, and use an energy test to decide whether our guess is correct. For any of the three operations, we make a hypothesis on the unknown index, magnitude, and phase of the true signal $\vect{z}$ and construct the corresponding hypothesis signal $\hat{\vect{z}}$. For example, when we do singleton detecting, if our hypothesis is that the right node is a singleton, and that the location index of the active component is 5 with the magnitude being $3\varepsilon$, 
we construct $\hat{\vect{z}}=3\varepsilon \vect{e}_5$, where $\vect{e}_i$ denotes the $i$th vector of the canonical basis.
Similarly, we can resolve strong doubletons. For instance, suppose that we know that a right node is connected to two active components which are located at positions $2$ and $5$, respectively, and we also know the magnitudes of the two components are $2\varepsilon$ and $3\varepsilon$, respectively. Then, if we can make a hypothesis that the relative phase is $\frac{\pi}{4}$, we can construct $\hat{\vect{z}}=2\varepsilon\vect{e}_2+3\varepsilon e^{\IMG\frac{\pi}{4}}\vect{e}_5$. Then, we need to check whether our hypothesis is correct. To do this, we perform an $\ell_1$ norm energy test shown in (\ref{energy_test}):
\begin{equation}\label{energy_test}
\begin{aligned}
\hat{\vect{z}}\sim\vect{z}&,\text{ if } \frac{1}{P}\ONEN{\vect{y}-\LNR(\hat{\vect{z}}\hat{\vect{z}}\HET)}<t_0, \\
\hat{\vect{z}}\nsim\vect{z}&,\text{ otherwise},  
\end{aligned}
\end{equation}
where $\hat{\vect{z}}\sim\vect{z}$ means $\hat{\vect{z}}$ and $\vect{z}$ are equal up to a global phase, and $t_0$ is the threshold. The intuitive reason why we do this test is that when $\hat{\vect{z}}\sim\vect{z}$, $\LNR(\hat{\vect{z}}\hat{\vect{z}}\HET)=\LNR(\vect{z}\vect{z}\HET)$, 
then $\vect{y}-\LNR(\hat{\vect{z}}\hat{\vect{z}}\HET)=\vect{w}$, whose energy should be small. 
Conversely, when $\hat{\vect{z}}\nsim\vect{z}$, the energy of $\vect{y}-\LNR(\hat{\vect{z}}\hat{\vect{z}}\HET)$ should be large. Here, we give a result on the error probability of the energy test.
\begin{lemma}\label{lem:energy_test}
When $P=\Theta(\log(n))$ and $\varepsilon$ is appropriately large, with proper threshold $t_0$, the error probability of the energy test shown in (\ref{energy_test}) is $\BIGO(1/n^2)$.
\end{lemma}
The proof of this lemma follows the similar idea which appears in Lemma 14 in \cite{chen2014convex}. We can also show that we need to perform $\Theta(n)$ energy tests before the algorithm stops. Then, using Lemma \ref{lem:energy_test} and some basic principles in probability theory, we can show that the failure probability of the almost-linear scheme is $\BIGO(1/K)$. As for the sample and computational complexity, since we have $\Theta(\log(n))$ measurements for each right node and $\Theta(K)$ right nodes, the sample complexity of the almost-linear scheme would be $\Theta(K\log(n))$;
and since the computational cost of each test is $\Theta(L_m L_p \log(n))$ and there are $\Theta(n)$ tests, the computational complexity of the almost-linear scheme is $\Theta(L_m L_p n\log(n))$.

\subsection{Sublinear Scheme}\label{sec:fast_search}
Although the $\BIGO(n\log(n))$ computational complexity of almost-linear scheme is compelling, we can further improve the computational complexity. Recall that in the noiseless scenario, we get the location index of the active component in a singleton and the non-recovered active component in resolvable multitons by only decoding the measurements of a recoverable right node. Based on this idea, we propose the sublinear scheme for the noisy scenario, which can achieve much lower computational cost compared to the almost-linear scheme, at the cost of slightly larger sample complexity.

\subsubsection{Design of Measurements}

In the sublinear scheme, the measurement matrix for each right node is designed to be a concatenation of the test matrix $\mat{A}_0$ defined in the almost-linear scheme and $R$ \emph{index matrices} $\mat{F}_1,\ldots,\mat{F}_R$. The test matrix $\mat{A}_0$ is still used to perform the energy tests and the index matrices are used to find the location indices. 

Now we show how to design the index matrices. The main idea is to encode each column as a binary code such that we can directly decode the column index of the component to get recovered from the measurements. A similar idea is also used in the Chaining Pursuit method\cite{chainpursuit}. First, we define a deterministic matrix $\mat{B}=\{b_{ij}\}\in\{0,1\}^{R\times n}$, where $ R=\lceil \log n \rceil$, and the $i$th column of $\mat{B}$ is the binary representation of the integer $i-1$. For example, when $n=4$, we have, 
$$
\mat{B} = \left[
\begin{array}{cccc}
0 & 0 & 1 & 1 \\
0 & 1 & 0 & 1 
\end{array}
\right].
$$
We use $\vect{b}_i$ and $\vect{B}_j$ to denote the $i$th row and $j$th column of $\mat{B}$, respectively. Let $\mat{F}_0\in\CMP^{Q\times n}$ be a random matrix whose elements are i.i.d. and uniformly distributed on the unit circle, and $\mat{F}=\mat{F}_0\otimes\mat{B}\in\CMP^{RQ\times n}$.
This means we have $\mat{F}=[ \mat{F}_1\HET \  \mat{F}_2\HET \ \cdots \mat{F}_R\HET  ]\HET$, where $\mat{F}_i=\mat{F}_0\DIAG{\vect{b}_i}\in\CMP^{Q\times n}$.
By concatenating with the test matrix, the measurement matrix of the $l$th right node is $\mat{A}_l=[\mat{A}_0\HET\ \mat{F}\HET]\HET\DIAG{\vect{h}_l}\in\CMP^{(P+QR)\times n}$. Here, we give a simple example of $\mat{A}_l$. Let $n=4$ and thus $R=2$. We have
\begin{equation}\label{ex:matrix_fast}
\mat{A}_l = 
\left[
\begin{array}{cccc}
\vect{A}_{0,1} & \vect{A}_{0,2} & \vect{A}_{0,3} & \vect{A}_{0,4} \\ \hline
\vect{0}  & \vect{0} & \vect{F}_{0,3} & \vect{F}_{0,4} \\
\vect{0} & \vect{F}_{0,2} & \vect{0} & \vect{F}_{0,4}
\end{array}
\right]\DIAG{\vect{h}_l},
\end{equation}
where $\vect{A}_{0,i}$'s and $\vect{F}_{0,i}$'s are the columns of $\mat{A}_0$ and $\mat{F}_0$. We can show that we need $Q=\Theta(\log^2(n))$ to reliably find the correct location index and we also need $P=\Theta(\log(n))$ to perform energy tests.

Consequently, there are $R+1$ sets of measurements. The first set $\vect{y}_0=\{y_{0,i}\}_{i\in[P]}$ is the same as the measurements in almost-linear scheme and is called the \emph{test measurements}:
$$
y_{0,i}=\ABS{\vect{a}_i\HET\vect{z}}^2+w_{0,i},\ i\in[P],
$$
where $\vect{z}=\DIAG{\vect{h}}\vect{x}$ and is still called the true signal. The other $R$ sets $\vect{y}_j=\{y_{j,i}\}_{i\in[Q]}$, $j\in[R]$ correspond to the index matrices and are called the \emph{index measurements}. Each set is composed of $Q$ measurements:
$$
y_{j,i}=\ABS{\vect{f}_{j,i}\HET \vect{z}}^2+w_{j,i},\ i\in[Q],\ j\in[R],
$$
where $\vect{f}_{j,i}\HET$ is the $i$th row of $\mat{F}_j$. We also let $\vect{w}_j$'s be the noise vectors, $j\in\{0\}\cup[R]$.

%Here the noise $w_{0,i}\in\REAL$, $i\in[P]$ and $w_{j,i}\in\REAL$, $j\in[R]$, $i\in[Q]$ are independent sub-exponential random variables with $\EXP{\ABS{w_{j,i}}}\le\mu$ and $\EXP{w_{j,i}^2}\le\sigma^2$, and $\SEP{w_{j,i}} \le \nu$. The total number of measurements is $P+QR=\Theta(\log^3(n))$.

%Similar to the almost-linear algorithm, we define $R+1$ linear mappings, $\LNR_0, \LNR_1, \ldots, \LNR_R$, where
%$$
%\LNR_0:\ \mat{X}\mapsto \{\vect{a}_i\HET \mat{X} \vect{a}_i\}_{i\in[P]},
%$$
%$$
%\LNR_j:\ \mat{X}\mapsto \{\vect{f}_{j,i}\HET \mat{X} \vect{f}_{j,i}\}_{i\in[Q]}, \ {\rm for\ } j\in[R].
%$$
% Let $\vect{y}_0$ and $\vect{w}_0$ be the vectors of testing measurements and corresponding noise; we have $\vect{y}_0=\LNR_0(\vect{x}\vect{x}\HET)+\vect{w}_0$. Let $\vect{y}_j$ and $\vect{w}_j$ denote the vectors of $j$th set of index measurements and the corresponding noise, $j\in[R]$. 
%We have $\vect{y}_j=\LNR_j(\vect{x}\vect{x}\HET)+\vect{w}_j$.

\subsubsection{Decoding Algorithm}

The sublinear scheme can find the location index by only looking at the measurements. For example, assume that a right node with measurement matrix in (\ref{ex:matrix_fast}) is a singleton whose non-zero component is at position 2. Then, the decoder can see that the elements of the first set of index measurements $\vect{y}_1$ have small absolute value since these measurements only contain noise. Now the decoder knows that the non-zero element should be in the first half of the signal. Then, the decoder observes that the elements in $\vect{y}_2$ have large energy. The decoder knows that if the right node is indeed a singleton, the only possible index of the non-zero component would be 2. Actually this procedure is a binary search on all the $n$ indices of the signal. After this indexing process, the decoder can use the same procedure as the almost-linear scheme to construct a signal $\hat{\vect{z}}$ as the hypothesis of the true signal of this right node, and then use the testing measurements to perform the same energy test.

Now we formally show the details of the fast index search. Assume that $\ABS{\SUPP{\vect{z}}}=T$, and there are $T_s$ non-recovered active components connected to the right node. More specifically,
$\vect{z}=\vect{z}_c+\vect{z}_s$, $\ABS{\SUPP{\vect{z}_s}}=T_s$, $\SUPP{\vect{z}_c}\cap\SUPP{\vect{z}_s}=\emptyset$,
and we know a vector $\hat{\vect{z}}_c\sim\vect{z}_c$. Note that when $T=T_s=1$, we have $\hat{\vect{z}}_c=\vect{z}_c=0$.
Our goal is to find the index $l_s$ of the non-zero element in $\vect{z}_s$ when $T_s=1$ and $\SUPP{\vect{z}_s}=\{l_s\}$.
When $T=1$ and $T>1$, we are looking for non-zero component in a singleton and non-recovered non-zero component in a resolvable multiton, respectively. We subtract the measurements contributed by the signal components which are already known as follows. Let $\hat{y}_{j,i}=\ABSL{\vect{f}_{j,i}\HET \hat{\vect{z}}_c}^2$; then, $\tilde{y}_{j,i}=y_{j,i}-\hat{y}_{j,i}$. We perform the following index tests for $j\in[R]$ with threshold $t_1>0$ to get $l_s$:
\begin{equation}\label{fast_index_test}
\begin{aligned}
\tilde{b}_j=0&,\text{ if } \ABS{\frac{1}{Q}\sum_{i=1}^Q\tilde{y}_{j,i}} <t_1, \\
\tilde{b}_j=1&,\text{ otherwise}. 
\end{aligned}
\end{equation}
The index tests output a binary string $\tilde{\vect{b}}=\{\tilde{b}_j\}_{j\in[R]}$. Note that if $T_s>1$, we still get an output after the index tests, 
but the energy test with the test measurements prevents us from making mistakes. 
Lemma \ref{lem:fast_one_section} states that with high probability $\tilde{b}_j=b_{jl_s}$.
\begin{lemma}\label{lem:fast_one_section}
When $Q=\Theta(\log^2(n))$, with proper threshold $t_1$, if $\SUPP{\vect{x}_s}=\{l_s\}$, then $\mathbb{P}\{\tilde{b}_j\neq b_{jl_s}\}=\BIGO(1/K^3)$.
\end{lemma}
Similar to the almost-linear scheme, using Lemma \ref{lem:fast_one_section}, we can prove that the failure probability of the sublinear scheme is $\BIGO(1/K)$. Since the total number of measurements per each right node is $P+RQ=\Theta(\log^3(n))$, the sample complexity of the sublinear scheme is $\Theta(K\log^3(n))$. In terms of the computational complexity, since there are $\Theta(K)$ right nodes and a constant number of iterations, the computational complexity of the sublinear algorithm is $\Theta(L_m L_p K\log^3(n))$.
\subsection{Simulation Results}\label{sec:experiment}
In this subsection, we show simulation results for the noisy case that validate our theoretical results. The simulations are conducted in Python. Since the sublinear scheme has much lower computational complexity than the almost-linear scheme, we only conduct simulations on the sublinear scheme here. We define the signal-to-noise ratio (SNR):
$$
\text{SNR}=10\log_{10}{\frac{\sum_{j=0}^R\TWON{\vect{y}_j-\vect{w}_j}^2}{\sum_{j=0}^R\TWON{\vect{w}_j}^2}},
$$
and use Gaussian noise. Since the fraction of non-recovered non-zero components $p$ can be made arbitrarily small, in the simulations, we simply define a successful recovery as the cases when \emph{all} the non-zero components are correctly recovered up to a global phase. In all the simulations, we set $P=5\log(n)$, $d=15$, $M=8K$, and $\varepsilon=1$.
\begin{figure}[h]
\centering
\vspace{-0.15in}
\includegraphics[width=0.5\textwidth]{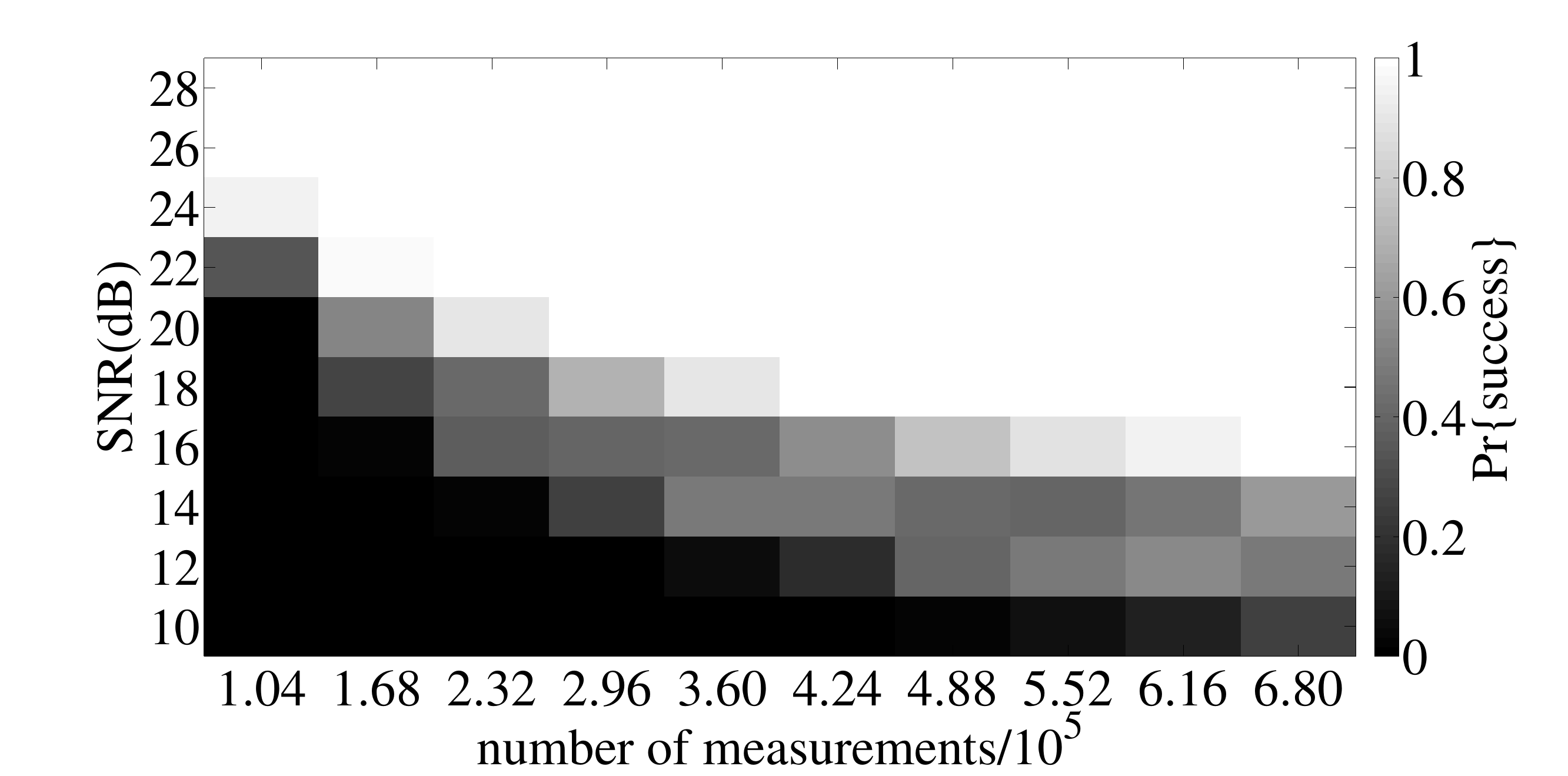}
\caption{{\bf Probability of successful recovery.} We choose $n=2^{20}$, $K=50$, $L_m=3$, and $L_p=6$. Different values of SNR are tested, and for each set of parameters, 1000 experiments are conducted.}
\vspace{-0.2in}
\label{fig_sim_snr}
\end{figure}
\begin{figure}[h]
\centering
\includegraphics[width=0.5\textwidth]{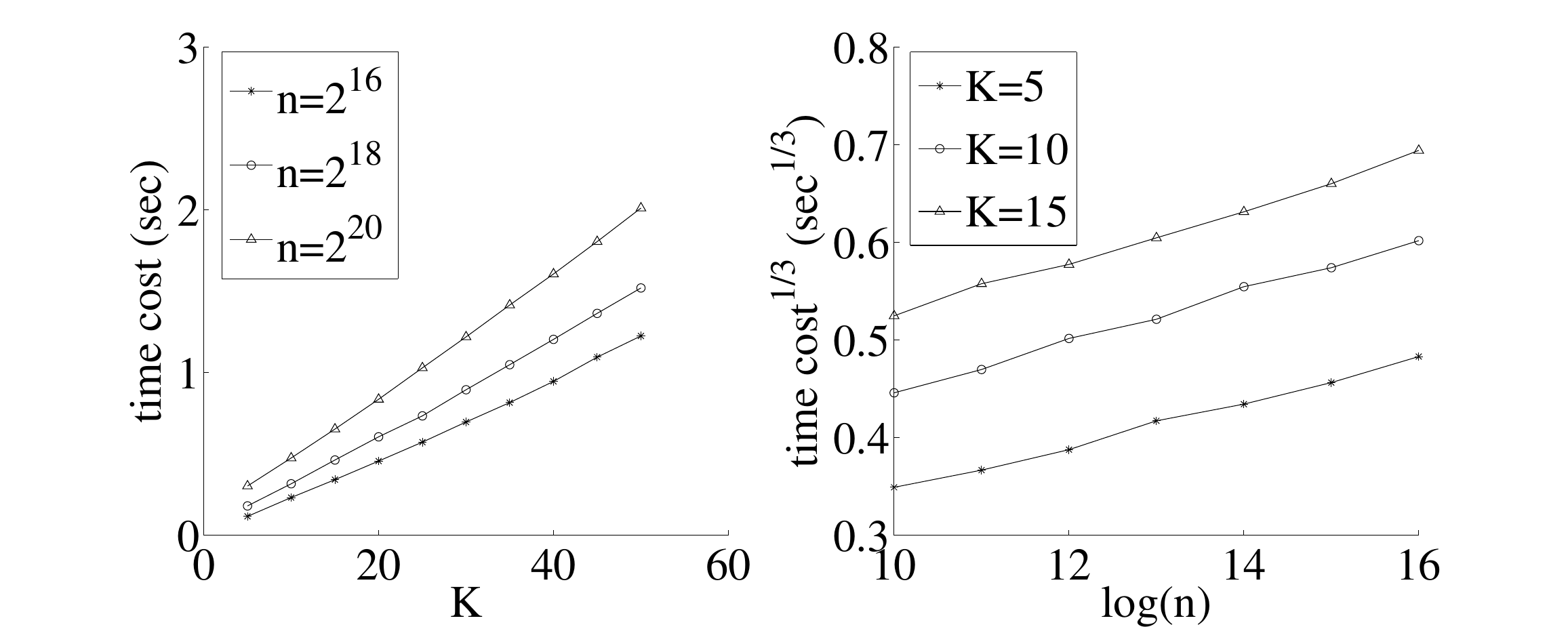}
\caption{{\bf Decoding complexity.} We choose $Q=2\log^2(n)$, $\text{SNR}=20\text{dB}$, $L_m=3$, and $L_p=6$. Different values of $n$ and $K$ are tested, and for each set of parameters, 100 experiments are conducted and the average time cost is shown.}
\label{fig_sim_time}
\end{figure}
\begin{figure}[!h]
\centering
\includegraphics[width=0.5\textwidth]{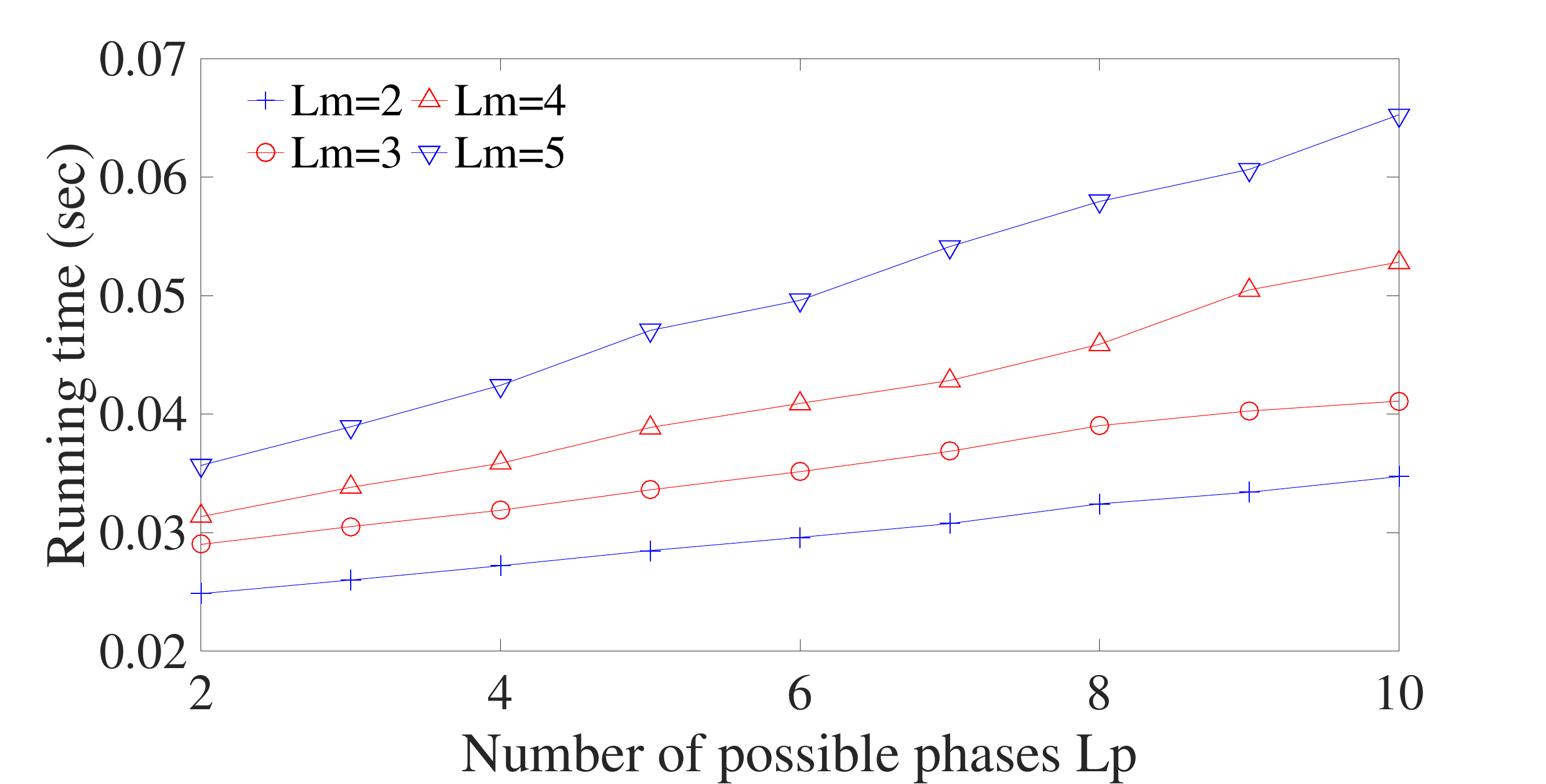}
\caption{{\bf Decoding complexity vs. number of possible magnitudes and phases.} We choose $n=4096$, $K=10$, $Q=5\log^2(n)$ and $\text{SNR}=24\text{dB}$. Different values of $L_m$ and $L_p$ are tested, and for each set of parameters, 100 experiments are conducted and the average time cost is shown.}
\vspace{-0.2in}
\label{fig_lmlp}
\end{figure}

In Figure \ref{fig_sim_snr}, we show the simulation results on the probability of successful recovery as a function of the number of measurements and the SNR.
In Figure \ref{fig_sim_time}, we show the simulation results on the decoding complexity of the sublinear scheme.\footnote{The simulations are conducted on a laptop with 2.8 GHz Intel Core i7 CPU and 16 GB memory.} It can be seen that the time cost of sublinear scheme is indeed low and only linear in $K$ and $\Theta(\log^3(n))$.
In Figure \ref{fig_lmlp}, we show empirical results on the decoding complexity of the sublinear scheme as a function of the number of possible magnitudes and phases ($L_m$ and $L_p$). One can observe that the time cost grows linearly in $L_m$ and $L_p$.

\section{Conclusion}\label{sec:con}
We have considered the problem of recovering a $K$-sparse complex signal $\vect{x} \in \mathbb{C}^n$ from $m$ intensity measurements of the form $|\mat{A} \vect{x}|$, where $\mat{A} \in \mathbb{C}^{m \times n}$ is the measurement matrix. %We addressed multiple settings corresponding to whether the measurement vectors are unconstrained choices or not, and to whether the signal to be recovered is sparse or not. 
Our main focus was on the case where the measurement vectors are unconstrained and noiseless. We proposed the PhaseCode algorithm that is based on a sparse-graph codes framework. We showed that for any signal $\vect{x} \in \mathbb{C}^n$, using order-optimal sample and decoding complexity of $\Theta(K)$, PhaseCode can provably recover all but an arbitrarily small random fraction of the non-zero signal components with high probability. We also showed that PhaseCode can recover almost all the $K$ non-zero signal components using only slightly more than $4K$ measurements if the support of the non-zero components of signal is uniformly random. To the best of our knowledge, our work is the first capacity-approaching low-complexity compressive phase retrieval algorithm.
We furthermore showed that PhaseCode can be used for practical systems such as optical systems with proper modifications. Finally, we demonstrated how PhaseCode can be robustified in the presence of noise. Via extensive simulation results, we validated the performance of PhaseCode for various settings. 

\section*{Acknowledgment}
The authors would like to thank the anonymous reviewers for many helpful comments.

\bibliographystyle{ieeetr}
\bibliography{journal_abbr,ref_new}

\begin{IEEEbiography}{Ramtin Pedarsani}
Ramtin Pedarsani is an Assistant Professor in ECE Department at the University of California, Santa Barbara.
He received the B.Sc. degree in electrical engineering from the University of Tehran, Tehran, Iran, in 2009, 
the M.Sc. degree in communication systems from the Swiss Federal Institute of Technology (EPFL), Lausanne, Switzerland,
in 2011, and his Ph.D. from the University of California, Berkeley, in 2015. His research interests include networks, 
machine learning, information and coding theory, and transportation systems. 
Ramtin is a recipient of the IEEE international conference on communications (ICC) best paper award in 2014.
\end{IEEEbiography}

\begin{IEEEbiography}{Dong Yin	}
Dong Yin is a PhD student in Department of Electrical Engineering and Computer Sciences at UC Berkeley, working with Prof. Kannan Ramchandran. He is interested in information and coding theory, machine learning, and signal processing. Before coming to Berkeley, he obtained his B.S. from Tsinghua University in China in 2014.
\end{IEEEbiography}

\begin{IEEEbiography}{Kangwook Lee}
Kangwook Lee is a postdoctoral scholar at Information and Electronics Research Institute at KAIST. He obtained his PhD degree in May 2016 from the EECS department at UC Berkeley. He also obtained his MS degree in EECS from UC Berkeley in 2012, and before that he obtained his BS degree in EE from KAIST in 2010.  He is a recipient of the KFAS Fellowship 2010-15. His research interests lie in information theory and machine learning.
\end{IEEEbiography}

\begin{IEEEbiography}{Kannan Ramchandran}
(Ph.D.: Columbia University, 1993) is a Professor of Electrical Engineering and Computer Sciences at UC Berkeley, where he has been since 1999. He was on the faculty at the University of Illinois at Urbana-Champaign from 1993 to 1999, and with AT\&T Bell Labs from 1984 to 1990. He is an IEEE Fellow, and a recipient of the 2017 IEEE Kobayashi Computers and Com- munications Award, which recognizes outstanding contributions to the integration of computers and communications. His research awards include an
IEEE Information Theory Society and Communication Society Joint Best Paper award for 2012, an IEEE Communication Society Data Storage Best Paper award in 2010, two Best Paper awards from the IEEE Signal Processing Society in 1993 and 1999, an Okawa Foundation Prize for outstanding research at Berkeley in 2001, an Outstanding Teaching Award at Berkeley in 2009, and a Hank Magnuski Scholar award at Illinois in 1998. His research interests are at the intersection of signal processing, coding theory, communications and networking with a focus on theory and algorithms for large-scale distributed systems.
\end{IEEEbiography}

\appendix
\subsection{Guess and Check Strategy for Resolvable Multitons}\label{app:quad}
Recall the equations:

\begin{align}\label{eq1-1}
y_{i,1} &= |a + e^{\bi \om \ell} x_\ell| = |u|,\\ \label{eq2-1}
y_{i,2} &= |b + e^{-\bi \om \ell} x_\ell| = |v|, \\ \label{eq3-1}
y_{i,3} &= |c + 2\cos(\om \ell) x_\ell| = |w|, \\ \label{eq4-1}
y_{i,4} &= |d + e^{\bi \om' \ell} x_\ell|,
\end{align}
where complex numbers $a$, $b$, $c$ and $d$ are known values that depend on the values and locations of the known colored active left nodes.
We want to solve the first $3$ equations \eqref{eq1-1}-\eqref{eq3-1} to find $\ell$ and $x_\ell$, and use \eqref{eq4-1} to check if our guess is correct. Since $e^{\bi \om \ell} + e^{-\bi \om \ell} = 2\cos(\om \ell)$, we know that $u+v = w$. Let $\alpha$ be the angle between complex numbers $u$ and $v$. Then,
\begin{align*}
|u+v|^2 = |u|^2 + |v|^2 + 2 |u| |v| \cos(\alpha).
\end{align*}
Thus, one can find $\alpha$ up to a plus-minus sign as,
\begin{align*}
\alpha &= \cos^{-1}(\frac{|u+v|^2 - |u|^2 - |v|^2}{2 |u| |v|}) \\
&= \cos^{-1}(\frac{y_{i,3}^2 - y_{i,1}^2 - y_{i,2}^2}{2 y_{i,1} y_{i,2}}).
\end{align*}

We find possible $x_\ell$'s for two different signs of $\alpha$. If our guess is true, the check measurement $y_{i,4}$ will determine which solution is the right one. Define a known variable $z$ as
$$z = u/v = \frac{|u|}{|v|}e^{\bi \om \alpha}.$$
Thus,
$$
a + e^{\bi \om \ell}x = z (b + e^{-\bi \om \ell}x),
$$
or 
\begin{align}\label{eq:trig}
x = \frac{z b - a}{e^{\bi \om \ell} - z e^{-\bi \om \ell}}.
\end{align}
Replacing $x$ from \eqref{eq3-1} in \eqref{eq:trig}, we have
\begin{align} \nonumber
y_{i,3} & = |c + 2 \cos(\om \ell) \frac{zb - a}{e^{\bi \om \ell} - z e^{-\bi \om \ell}}| \\ \label{eq5}
& = |c\frac{\cos(\om \ell) (1 - z + \frac{2zb - 2a}{c}) + \bi \sin(\om \ell) (1+z)}{\cos(\om \ell) (1 - z ) + \bi \sin(\om \ell) (1+z)}|.
\end{align}
Define the following known complex variables:
\begin{align*}
k_1 &= 1 - z + \frac{2zb - 2a}{c}; \\
k_2 &= 1 + z; \\
k_3 &= 1 - z; \\
k_4 &= y_{i,3}/|c|.
\end{align*}
Also let $k_1 = k_{1r} + \bi k_{1i}$ and use similar notation for the real and imaginary parts of other variables. Then, one can square \eqref{eq5} to get
\begin{align*}
&(k_{1r}\cos(\om \ell) - k_{2i} \sin(\om \ell))^2 + (k_{1i} \cos (\om \ell) + k_{2r} \sin(\om  \ell))^2 \\
& \qquad = k_4^2[(k_{3r} \cos(\om \ell) - k_{2i} \sin(\om \ell))^2 
\\
& \qquad \qquad \qquad + (k_{3i} \cos(\om \ell) + k_{2r} \sin(\om \ell))^2].
\end{align*}
Now defining appropriate new known real variables $k_5$, $k_6$ and $k_7$, we get an equation of the form
$$
k_5 \cos^2(\om \ell) + k_6 \sin^2(\om \ell) = k_7 \sin(\om \ell) \cos(\om \ell).
$$
Squaring the above equation and using $\sin^2(\om \ell) = 1- \cos^2(\om \ell)$, we get a quadratic equation in $\cos^2(\om \ell)$ that one can easily solve to find at most $2$ possible values for $\ell$. Note that $\cos(\om \ell)$ is positive by construction. Now since there are two possible values of $\alpha$, one can get at most $4$ solutions for $\ell$ and $x_\ell$. Those solutions can be checked by \eqref{eq4-1}. If the guess is true, the probability that the check fails is $0$; thus, one can recover the resolvable multiton with probability $1$.

\subsection{Proof of Corollary \ref{cor:l1}}\label{app:l1}

Let $(|x_{(1)}|, |x_{(2)}|, \ldots, |x_{(K)}|)$ be the magnitudes of the non-zero components that are ordered increasingly. We partition the $K$ components to $g = \lfloor K^{(1+\gamma)/2}\rfloor$ subgroups as follows: 
\begin{align*}
&(|x_{(1)}|, \ldots, |x_{(K/g)}|), (|x_{(K/g + 1)}|, \ldots, |x_{(2K/g)}|), \\
&~~\ldots, (|x_{(K - K/g + 1)}|, \ldots, |x_{(K)}|).
\end{align*}
Let $b_i$ be the largest number in subgroup $i$. By Azuma-Hoeffding's inequality, the probability that more than $(p+\epsilon)K / g$ components are missed in a subgroup is upper bounded by $2e^{-2\epsilon^2 K /g}$. Taking $\epsilon = 1/\log(K)$ and using union bound, we have
\begin{align}\label{eq:cor1}
\| \hat{\vect{x}} - \vect{x} \|_1 \leq (p + 1/\log(K)) (\sum_{i=1}^{g} b_i) K/g,
\end{align}
with probability $\mathcal{O}(ge^{-\frac{2K}{g\log^2(K)}}) $. Further, 
\begin{align}
(\sum_{i=1}^{g} b_i) K/g &\leq (|x_{(1)}| + \sum_{i=1}^{g} b_i) K/g\\
& \leq \| \vect{x} \|_1 + b_{g} K/ g \\ 
&\leq \| \vect{x} \|_1 (1 + \Theta(\frac{K^\gamma}{g})) \\ \label{eq:cor2}
&= \| \vect{x} \|_1 (1 + \Theta(K^{-\frac{1-\gamma}{2}})) .
\end{align}

Gathering \eqref{eq:cor1} and \eqref{eq:cor2}, we conclude that with probability $1 - \mathcal{O}(K^{\frac{1+\gamma}{2}}e^{-\frac{2K^{(1-\gamma)/2}}{\log^2(K)}})$, 
\begin{align}
\| \hat{\vect{x}} - \vect{x} \|_1 &\leq p\| x\|_1 (1 + \Theta(\frac{1}{\log(K)}) + \Theta(K^{-\frac{1-\gamma}{2}}))\\
& = p\| x\|_1 (1 + \Theta(\frac{1}{\log(K)})).
\end{align}

\subsection{Proof of Lemma \ref{lem:giant}}\label{app:gc}
\begin{proof}
We form a graph with nodes that are active left nodes which are in singleton right nodes. We construct edges between these nodes if the corresponding active left nodes are connected to a strong doubleton, and we use an Erdos-Renyi random graph model \cite{ER} to find parameters $d$ and $M$ for which there is a giant component of size linear in $K$ after the second step of the algorithm. The Erdos-Renyi random graph model is characterized by $2$ parameters: $n$, the number of nodes in the graph and $p$ which is the probability that each of the ${n \choose 2}$ possible edges are connected. Note that each edge is connected in the graph with probability $p$ independently from every other edge. There is another variant of Erdos-Renyi random graph model which is parametrized by $(n,M)$, where $M$ is the total number of edges. Then, the graph is chosen uniformly at random from the collection of all graphs with $n$ nodes and $M$ edges. By the law of large numbers, the two models are equivalent for $M = {n \choose 2}p$ as long as $n^2p \to \infty$. It is well known that in an Erdos-Renyi model if $np \to c > 1$, as $n \to \infty$, where $c$ is some constant, then the graph will have a unique giant component of size linear in $n$ \cite{ER}. 

Define $K_s$ to be the random variable representing the number of active left nodes that are connected to singletons. We form an Erdos-Renyi random graph model with parameters $(K_s,p_s)$ or equivalently parameters $(K_s,M_s)$ where $p_s$ is the probability that an edge is connected, and $M_s$ is the total number of edges. Thus, as $K_s$ gets large, $M_s$ approaches ${K_s \choose 2}p_s$. Now we compute the parameters $K_s$ and $p_s$ as follows. The probability of an active left node being connected to a singleton right node is the probability that at least one of its $d$ neighbors is a singleton, that is:
\begin{equation}\label{eq:qs}
q_s = 1 - (1- \rho_1)^d.
\end{equation}
Thus, by the law of large numbers as $K$ gets large, there are $K q_s + o(K)$ distinct active left nodes in singleton right nodes. Let $M = cK$ for some constant $c$. As $K$ gets large, the number of doubleton right nodes approaches $M \frac{\eta^2 e^{-\eta}}{2!} + o(K)$ since the degree of right nodes (on the pruned graph with active left nodes) is Poisson distributed with parameter $\eta = Kd/M = d/c$. However, we want to count only distinct doubleton right nodes. It is easy to see that as $K$ gets large, essentially all but a vanishing fraction of the doubleton right nodes are distinct. To this end, fix a doubleton right node with neighbors $(v_1,v_2)$. The probability that a randomly chosen doubleton right node is connected to $(v_1,v_2)$ is $1/{K \choose 2}$. Since the number of doubleton right nodes is linear in $K$, only a vanishing $\Theta(1/K)$ fraction of them are non-distinct. 

Let $M_s$ be the number of strong doubletons (for which both left nodes are also in other singletons). Thus, $M_s$ is the number of edges in our constructed Erdos-Renyi graph. Consider a random left node $i$. Let $D$ be the event that $i$ is connected to a doubleton right node and $S$ be the event that $i$ is connected to a singleton right node. We compute the following $2$ relevant conditional probabilities:
\begin{align*}
p_1 \triangleq \PP(D|S) &= \frac{\PP(D \cap S)}{\PP(S)} \\
&= \frac{1 - \PP(\bar{S}) - \PP(\bar{D}) + \PP(\bar{S}\cap \bar{D})}{1-\PP(\bar{S})} \\
&= \frac{1 - (1 - \rho_1)^d - (1 - \rho_2)^d + (1 - \rho_1 - \rho_2)^d}{1-(1 - \rho_1)^d}.
\end{align*}
\begin{align*}
p_2 \triangleq \PP(D|\bar{S}) &= 1 - \PP(\bar{D}|\bar{S})\\
&= 1 - \frac{\PP(\bar{S}\cap \bar{D})}{\PP(\bar{S})} \\
&= 1 - \frac{(1 - \rho_1 - \rho_2)^d}{(1 - \rho_1)^d}.
\end{align*}
Now we use Bayes' rule to find that
\begin{align*}
q \triangleq \PP(S|D) &= \frac{\PP(D|S)\PP(S)}{\PP(D|S)\PP(S) + \PP(D|\bar{S})\PP(\bar{S})} \\
&= \frac{p_1 q_s}{p_1 q_s + p_2(1-q_s)}.
\end{align*}
Thus, 
\begin{equation}\label{eq:M'}
M_s = M\frac{\eta^2 e^{-\eta}}{2!}q^2.
\end{equation}
The random graph is constructed with $K_s = K (1 - (1- \rho_1)^d)$ nodes and $M_s$ edges chosen uniformly at random among ${K_s \choose 2}$ possible edges. The probability of a randomly chosen edge being connected is thus:
$$ 
p_s = \frac{M \frac{\eta^2 e^{-\eta}}{2!}q^2}{{K_s \choose 2}}.
$$
%In the following lemma, we address, given a particular choice of parameter $d$, for what values of $M$, a giant component of size linear in $K$ (or equivalently $K_s$) is formed. We pick $2$ particular left degrees: $d = 5$ and $d =8$. As we will see, choosing $d = 5$ is optimal in terms of having the smallest number of required measurements. 
%\footnote{It is interesting to contrast this with the phase-aware left-regular sparse-graph codes case, where the optimal choice is $d = 3$ \cite{Sameer}.} 
%However, picking $d = 5$ is not optimal in terms of having the smallest possible error floor. Indeed, as $d$ increases the error floor decreases. (See Table \ref{tab:2}.) Therefore, depending on the required reliability, one can first pick parameter $d$, and then pick the required number of measurement for that parameter. Choosing $d =8$ leads to recovering almost all the non-zero components but a small fraction of $10^{-7}$ of them with only $14K$ measurements. Note that these are only particular choices from a whole family of trade-offs. We picked $d=8$ for highlighting our result as it  represents a good trade-off between reliability and measurements cost.  
 
From the well-known Erdos-Renyi random graph result \cite{ER} (also see \cite{Bollobas}), a linear size giant component exists if $K_s p_s > 1$ with probability $1 - \mathcal{O}(1/K_s)$. More precisely, let $Z$ be the size of the giant component. Then, one has
$$
\PP\left(|\frac{Z}{K_s} - \zeta|< \varepsilon \right) = 1 - \mathcal{O}\left(\frac{1}{\varepsilon^2 K_s}\right),
$$
where $\zeta \in (0,1)$ is the unique solution of $\zeta + e^{- 2\zeta M_s/K_s} = 1$, if $2M_s/K_s > 1$ or equivalently $K_s p_s > 1$ \cite{Bollobas,Jaggi}.
Thus, a linear-size giant component exists if
\begin{align*}
\frac{K q_s M_s}{{Kq_s \choose 2}} > 1.
\end{align*}
We present two concrete examples to complete the proof of the lemma. Let $d = 5$. Replacing $M_s$ and $q_s$ by \eqref{eq:M'} and \eqref{eq:qs}, one can check that the inequality holds if $3.11 \leq c \leq 19.24$ (See Figure \ref{fig:gianteq}). 
Similarly, one can set $d = 8$ and see that the inequality holds if $3.48 \leq c \leq 55.36$.
\end{proof}

\begin{figure}
\centering
    \includegraphics[width= 0.35\textwidth]{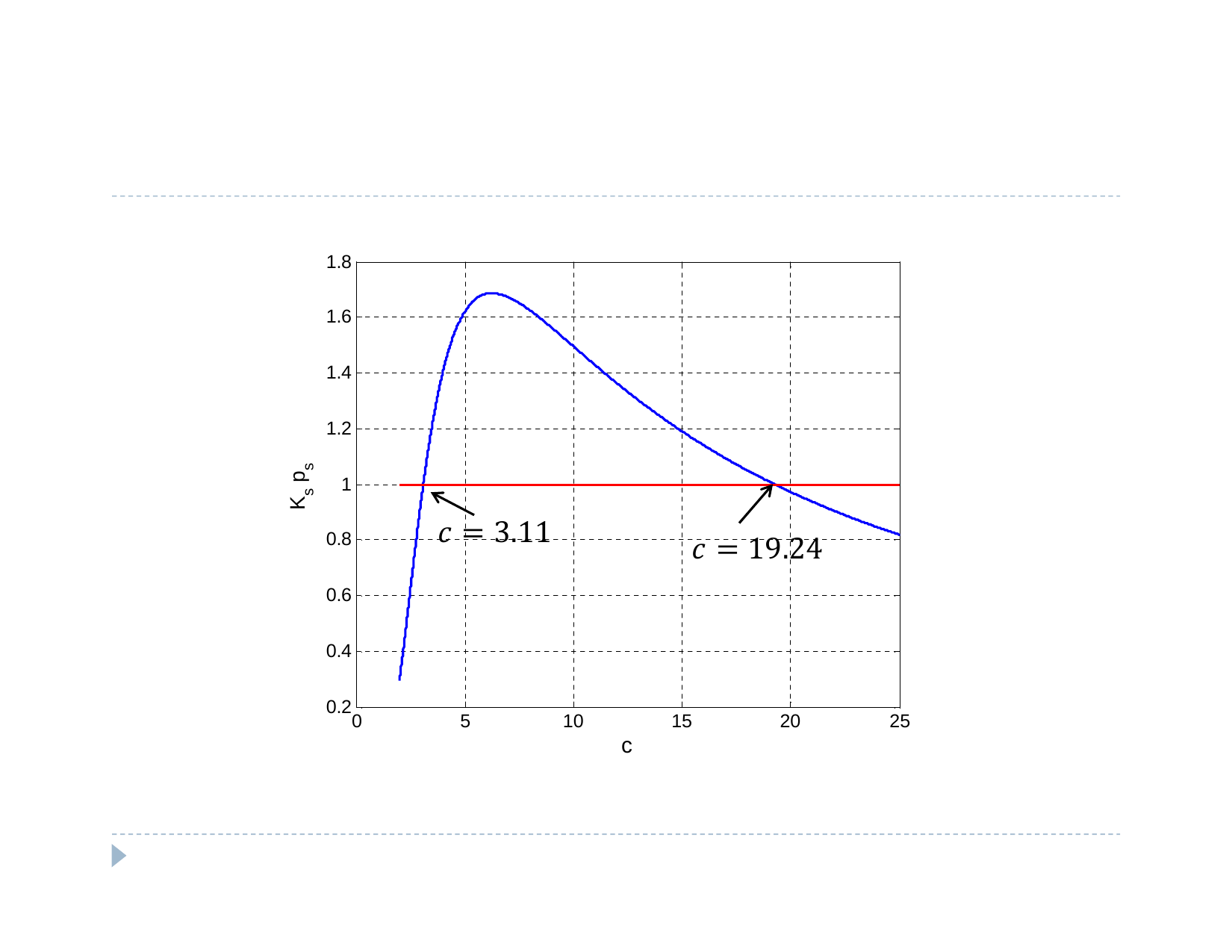}
  \caption{The diagram shows the values of $c$ for which the giant component is formed after step $2$ of the algorithm. Note that $c=M/K$. In the random graph model the giant component is formed if $K_s p_s >1$, where $K_s$ is the number of nodes in the random graph, and $p_s$ is the probability that an edge is connected. From the diagram, one can see that if $3.11 <c < 19.24$, the condition for having a giant component is satisfied. \label{fig:gianteq}} 
\end{figure}

\subsection{Proof of Lemma \ref{lem:unstable}}\label{app:fp}

\begin{figure}
        \centering
        \begin{subfigure}[b]{0.25\textwidth}
                \includegraphics[width=\textwidth]{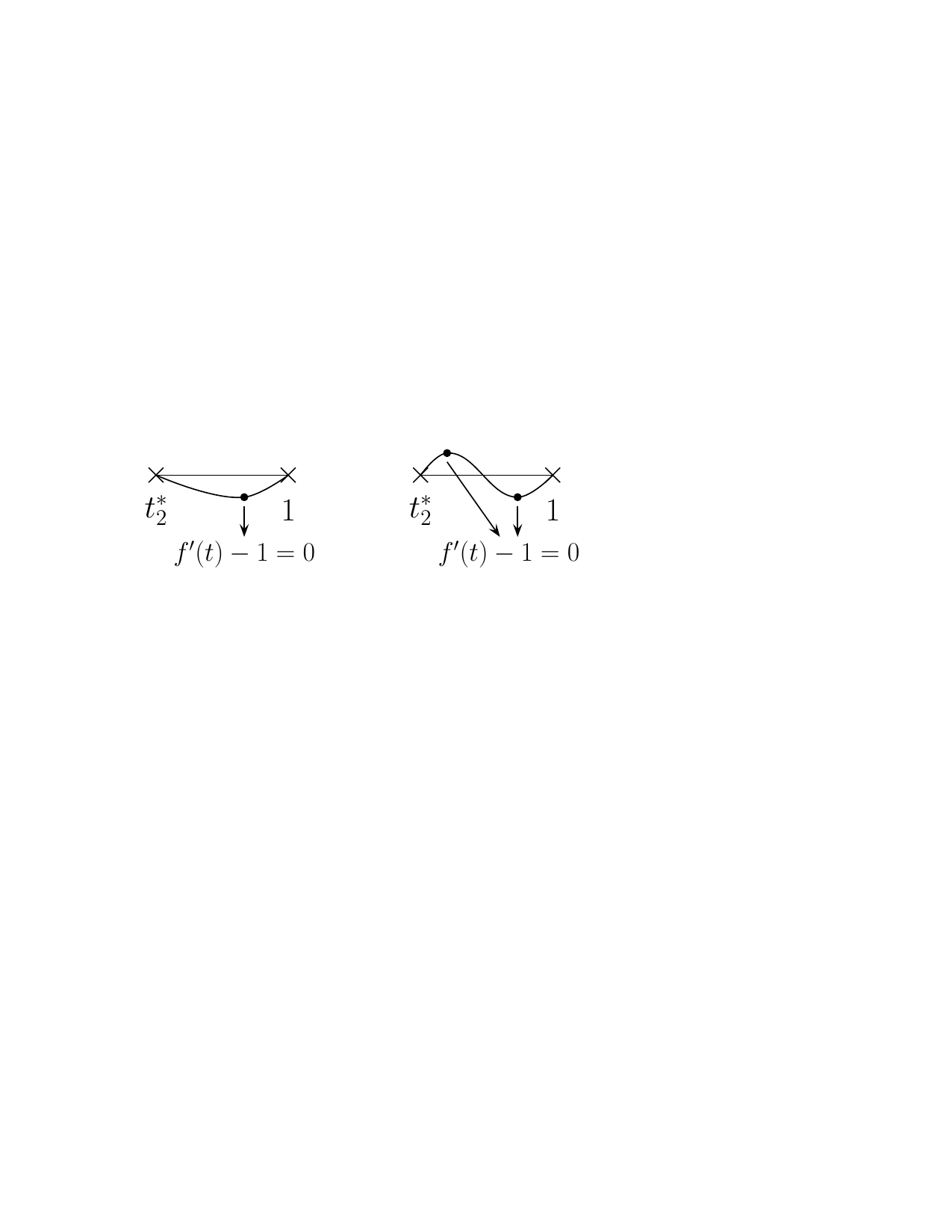}
                \caption{The good case.}
                \label{fig:convex1}
        \end{subfigure}
        \begin{subfigure}[b]{0.25\textwidth}
                \includegraphics[width=\textwidth]{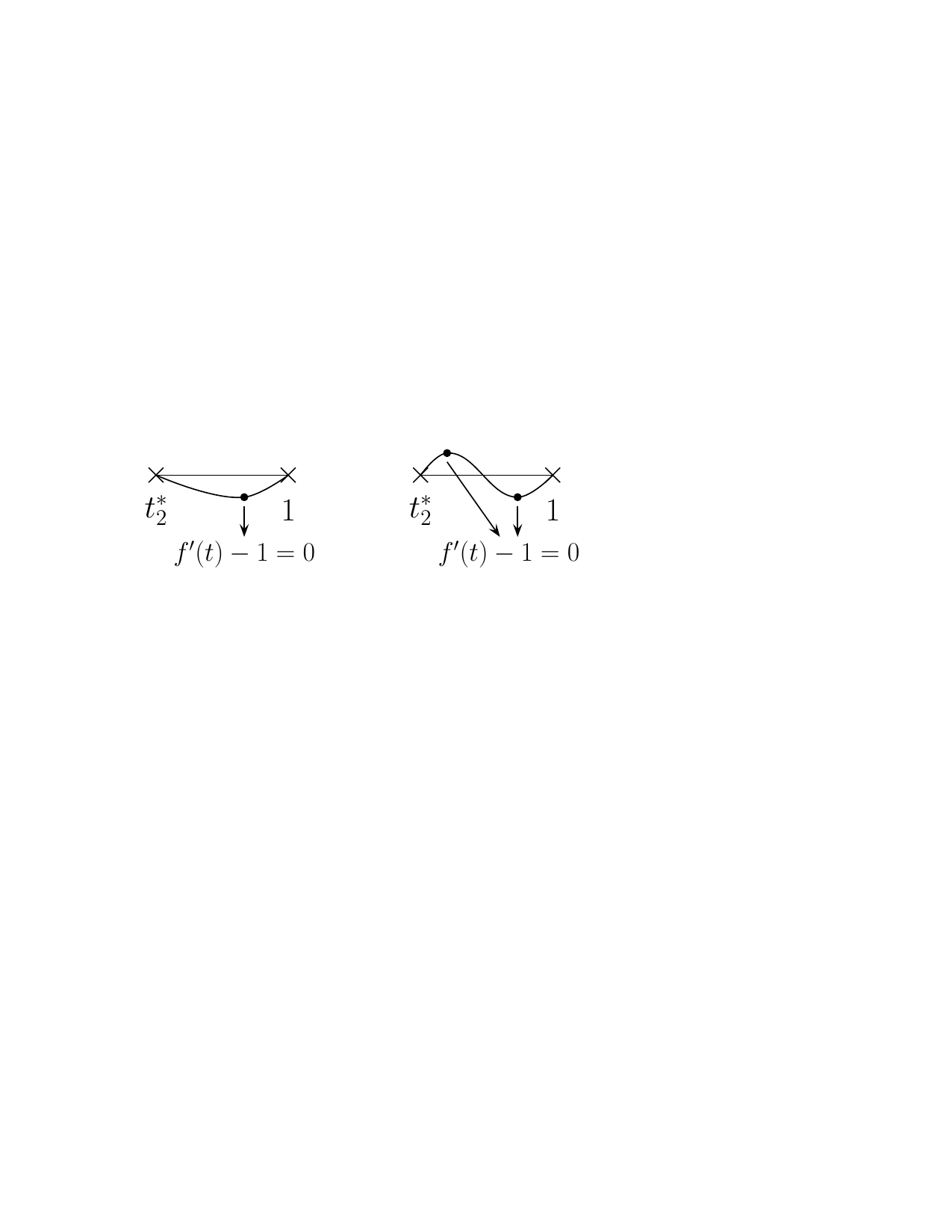}
                \caption{The bad case.}
                \label{fig:convex2}
        \end{subfigure}
        \caption{Figure $(a)$ illustrates the good case that there are no fixed points other than $1$ and $t^*_2$. Figure $(b)$ illustrates the bad case that there is another fixed point in the interval $(t^*_2,1)$. In this case, $f'(t) = 1$ has two solutions for $t \in (t^*_2,1)$, as shown in Figure $(b)$.}\label{fig:convex}
\end{figure}

First, let us consider a small neighborhood around $t^*_1 = 1$. We want 
$$
f(t^*_1-h) < t^*_1 - h = f(t^*_1) -h,
$$
for some small $h > 0$. Equivalently, we want
$$
\frac{f(t^*_1) - f(t^*_1-h)}{h} > 1.
$$
Letting $h \to 0$, the condition becomes $f'(t)|_{t=1} > 1$. This is a necessary and sufficient condition for instability of point $t = 1$. In other words, this condition makes sure that \eqref{eq:decreasing} holds for $p_j$ close to 1. Thus, in picking parameters $d$ and $\eta$, one makes sure that
$$
f'(t)|_{t=1} = (d-1)\eta e^{-\eta} > 1.
$$
For $d = 5$, this leads to $0.3574 < \eta < 2.1533$ or $2.32K < M < 13.99K$. For $d=8$, this leads to $0.17 < \eta < 3.06$ or $2.62K < M < 47.06K$. To complete the proof, we need to show that $f(t) - t < 0$ for $t \in (t^*_2,1)$. Note that $f(t)$ is continuous and continuously differentiable. Thus to show that $f(t) - t < 0$ for $t \in (t^*_2,1)$, it is enough to show that $f'(t) - 1 = 0$ has only one solution in that interval (the ``good'' case: See Figure \ref{fig:convex1}). To see this, suppose that $f(t) - t = 0$ for some $t$ in the interval $(t^*_2,1)$.  Since $1$ and $t^*_2$ are also solutions of $f(t) - t =0$, then $f'(t) - 1$ must change sign at least twice in the interval $(t^*_2,1)$ (the ``bad'' case: See Figure \ref{fig:convex2}).  Therefore, to ensure that $f(t) < t, ~ \forall t \in (t^*_2,1)$  it is sufficient to show that
$$
f'(t) = \eta e^{-\eta t}  (d-1)(1+ e^{-\eta} - e^{-\eta t})^{d-2} = 1,
$$
has only one solution in the interval $t \in (t^*_2,1)$. After some algebra, one can re-write the above equation as
$$
(\eta(d-1))^{-\frac{1}{d-2}}e^{\eta t(1/(d-2)+1)} = e^{\eta t}(1+e^{-\eta}) - 1. 
$$
Replacing $x = e^{\eta t}$, we get an equation of the form $x^a = bx - c$ for $a > 1$ and $b,c > 0$. This equation has clearly at most two solutions for $x \geq 0$. On the other hand, $f'(1) > 1$ and $f'(\infty) = 0$. Thus, $f'(1) = 1$ has a solution for $t > 1$, which shows that $f'(t) = 1$ has at most one solution in $t \in [0,1]$.

\subsection{Probability of Tree-like Neighborhood}\label{app:tree}
In this section, we give a short proof of Lemma \ref{lem:tree}. Let $C_\ell$ be the number of right-nodes and $V_\ell$ be the number of left-nodes in $\mathcal{N}_{\vec{e}}^{2\ell}$. Since the ensemble of the graphs that we consider is only left-regular (and not right-regular), we cannot immediately use the result of \cite{RU01}. Note that the degree distribution of right nodes is Poisson distribution with constant rate. The key idea is to show that the size of the tree is bounded by $\mathcal{O}(\log(K)^\ell)$ with high probability. This is intuitively clear since Poisson distribution has a tail decaying faster than exponential decay. To formally show this, we keep unfolding the tree up to level $\ell^*$, and at each level $\ell$  we upper bound the probability that the size of the tree grows larger than $\mathcal{O}(\log(K)^\ell)$. Fix some constant $c_1$. We upper bound the probability of not having a tree as follows.
\begin{align*}
&\PP (\mathcal{N}_{\vec{e}}^{2\ell^*} \text{~ is not a tree}) \leq \PP(V_{\ell^*} > c_1 \log(K)^{\ell^*}) +\\ 
&\PP(C_{\ell^*} > c_1 \log(K)^{\ell^*}) +\\
& \PP(\mathcal{N}_{\vec{e}}^{2\ell^*} \text{~ is not a tree}|V_{\ell^*} <  c_1 \log(K)^{\ell^*}, ~ C_{\ell^*} <  c_1 \log(K)^{\ell^*}).
\end{align*}

Note that since the left degree is a constant, $d$, if $V_{\ell^*}$ is $\mathcal{O}(\log(K)^{\ell^*})$, $C_{\ell^*}$ is also $\mathcal{O}(\log(K)^{\ell^*})$. Let $\alpha_{\ell} = \PP(V_{\ell} > c_1 \log(K)^{\ell})$. Then, 
\begin{align}\label{eq:alpha1}
\alpha_{\ell} & \leq \alpha_{\ell - 1} + \PP(V_{\ell} > c_1 \log(K)^{\ell}|V_{\ell - 1} < c_1 \log(K)^{\ell - 1}) \\ \label{eq:alpha2}
& \leq \alpha_{\ell - 1} + \PP(V_{\ell} > c_1 \log(K)^{\ell}|C_{\ell} < c_2 \log(K)^{\ell - 1}),
\end{align}
where \eqref{eq:alpha2} is due to the fact that every left node has exactly $d$ edges connected to right nodes so if $V_{\ell - 1} < c_1 \log(K)^{\ell - 1}$, there exists some constant $c_2$ such that $C_{\ell} < c_2 \log(K)^{\ell - 1}$. To count the number of left nodes in depth $\ell$, let $n_\ell < C_{\ell}$ be the number of right nodes exactly at depth $\ell$ after unfolding the tree. Let $X_i, ~1 \leq i \leq n_\ell$ be the degree of these right nodes. Given that $V_{\ell -1} < c_1 \log(K)^{\ell - 1}$, one has $V_{\ell} > c_1 \log(K)^{\ell}$, only if $X = \sum_{i = 1}^{n_\ell} X_i > c_3 \log(K)^{\ell}$ for some constant $c_3$. The distribution of $X$ is Poisson distribution with parameter  $n_{\ell} \lambda$. We know that the tail probability of a Poisson random variable $Y$ with parameter $\lambda$ can be upper bounded as follows: $\PP(Y \geq y) \leq \left( \frac{e \lambda}{y}\right)^y$. Thus, 
$$
\PP(X > c_3 \log(K)^{\ell}) \leq \left(\frac{c_4}{\log(K)}\right)^{c_3 \log(K)^{\ell}} \leq \mathcal{O}(\frac{1}{K}).
$$
Thus, 
\begin{equation}\label{eq:alpha3}
\alpha_{\ell} \leq \alpha_{\ell-1} + \frac{c_5}{K},
\end{equation}
for some constant $c_5$. Now since $\ell^*$ is a constant, summing up the inequalities in \eqref{eq:alpha3}, we show that 
$$\alpha_{\ell^*}  = \PP(V_{\ell^*} > c_1 \log(K)^{\ell^*})\leq \mathcal{O}(\frac{1}{K}).$$ 
Similarly, one can show that 
$$
\PP(C_{\ell^*} > c_1 \log(K)^{\ell^*}) \leq \mathcal{O}(\frac{1}{K}).
$$
To complete the proof, we need to show that with high probability, we have a tree-like neighborhood, given that the number of nodes is bounded by $\mathcal{O}(\log(K)^{\ell^*})$. First, we find a lower bound on the probability that $\mathcal{N}_{\vec{e}}^{2\ell + 1}$ is a tree-like neighborhood if $\mathcal{N}_{\vec{e}}^{2\ell}$ is a tree-like neighborhood, when $\ell < \ell^*$.  Assume that $t$ additional edges have been revealed at this stage without forming a cycle. The probability that the next edge from a left node does not create a cycle is the probability that it is connected to one of the right nodes that is not already in the subgraph which is lower bounded by $1 - \frac{C_{\ell^*}}{m}$. Thus, the probability that $\mathcal{N}_{\vec{e}}^{2\ell + 1}$ is a tree-like neighborhood if $\mathcal{N}_{\vec{e}}^{2\ell}$ is a tree-like neighborhood, is lower-bounded by $(1 - \frac{C_{\ell^*}}{M})^{C_{\ell + 1} - C_\ell}$. Similarly, the probability that $\mathcal{N}_{\vec{e}}^{2\ell + 2}$ is a tree-like neighborhood if $\mathcal{N}_{\vec{e}}^{2\ell + 1}$ is a tree-like neighborhood, is lower-bounded by $(1 - \frac{V_{\ell^*}}{K})^{V_{\ell + 1} - V_\ell}$. Therefore, the probability that $\mathcal{N}_{\vec{e}}^{2\ell^*}$ is a tree-like neighborhood is lower-bounded by
$$
(1 - \frac{V_{\ell^*}}{K})^{V_{\ell^*}}(1 - \frac{C_{\ell^*}}{M})^{C_{\ell^*}}.
$$
For large $M$ and $K$, the above expression is approximately 
$$
e^{-(V^2_{\ell^*}/K + C^2_{\ell^*}/M)} \geq 1 - (V^2_{\ell^*}/K + C^2_{\ell^*}/M).
$$
Now since $V_{\ell^*}$ and $C_{\ell^*}$ are upper-bounded by $\mathcal{O}(\log(K)^{\ell^*})$, the probability of having a tree-like neighborhood is at least $1 - \mathcal{O}(\log(K)^{\ell^*}/K)$. 

\subsection{Convergence to Cycle-free Case}\label{app:mg}
In this section, we give a short proof of Lemma \ref{lem:concentration}. The proof follows similar steps as in \cite{RU01}, with the difference that the right degree is irregular and Poisson-distributed. 

First, we prove \eqref{eq:ctcf}. Let $Z_i = 1_{\{\vec{e}_i ~\text{is colored}\}}, ~ 1 \leq i \leq Kd ~$ be the indicator that $\vec{e}_i$ is colored after $\ell$ iterations of the algorithm. Let $B$ be the event that $\mathcal{N}^{2\ell}_{\vec{e}_1}$ is tree-like. Then,
\begin{align*}
\EE[Z_1] &= \EE[Z_1 | B]\PP(B) + \EE[Z_1|\bar{B}]\PP(\bar{B}) \\
&\leq \EE[Z_1 | B] + \PP(\bar{B}) \\
& \leq p_\ell + \frac{\gamma \log(K)^{\ell}}{K},
\end{align*}
for some constant $\gamma$, where the last inequality is by Lemma \ref{lem:tree}. Trivially, $|\EE[Z_1 | B]| \leq 1$. Furthermore, $\EE[Z] = Kd \EE[Z_1]$. Hence,
$$ Kd(1 - \frac{\gamma \log(K)^{\ell}}{K})<\EE[Z] < Kd(p_\ell + \frac{\gamma \log(K)^{\ell}}{K}).$$ 
Then, \eqref{eq:ctcf} follows from choosing $K$ large enough such that $\frac{K}{\log(K)^{\ell}} > \frac{2 \gamma}{\epsilon}$. 

Second, we prove that
\begin{equation}\label{eq:mg2}
\PP(|Z - Kdp_\ell| > Kd\epsilon/2) < 2e^{-\beta \epsilon^2 K^{1/(2\ell + 1)}}.
\end{equation}
Then, \eqref{eq:mg} follows from \eqref{eq:ctcf} and \eqref{eq:mg2}.
To prove \eqref{eq:mg2}, we use the standard Martingale argument and Azuma's inequality provided in \cite{RU01} with some modifications to account for the right irregular degree. Suppose that we expose the $Kd$ edges of the graph one at a time. Let $Y_i = \EE[Z|e^{i}_1]$. By definition, $Y_0,Y_1,\ldots,Y_{Kd}$ is a Doob's martingale process, where $Y_0 = \EE[Z]$ and $Y_{Kd} = Z$. To use Azuma's inequality, we find the appropriate upper bound: $|Y_{i+1} - Y_i| \leq \alpha_i$. If the right degree is regular and equal to $d_c$, it is shown in \cite{RU01} that $\alpha_i$ can be chosen as $8(d_v d_c)^{\ell}$. We show that when the right degree has Poisson distribution with constant rate, the degree of all of the right nodes can be upper bounded by $\mathcal{O}(K^{\frac{1}{2\ell + 0.5}})$ with probability at least $c_6K(e^{-\beta_1 K^{\frac{1}{2\ell + 0.5}}})$ for some constants $c_6$ and $\beta_1$. To show this, let $X$ be a Poisson random variable with parameter $\lambda$ and $c_7$ be some constant. Then,
\begin{align*}
\PP(X > c_7K^{\frac{1}{2\ell + 0.5}}) &\leq \left( \frac{e\lambda}{c_7K^{\frac{1}{2\ell + 0.5}}} \right)^{c_7K^{\frac{1}{2\ell + 0.5}}}\\
& \leq c_6(e^{-\beta_1 K^{\frac{1}{2\ell + 0.5}}}).
\end{align*}
Now considering $M = \Theta(K)$ right nodes and using union bound, one can see that the probability that all the right nodes have degree less than $\mathcal{O}(K^{\frac{1}{2\ell + 0.5}})$ is at least $1 - \mathcal{O}(K(e^{-\beta_1 K^{\frac{1}{2\ell + 0.5}}}))$. Let $E$ be the event that at least one right node has degree larger than $c_6K(e^{-\beta_1 K^{\frac{1}{2\ell + 0.5}}})$. Given that $E$ has not happened, one can upper bound $\alpha_i^2$ by $\mathcal{O}(K^{\frac{2\ell}{2\ell + 0.5}})$. Then, 
\begin{align*}
\PP(|Z - Kdp_\ell| &> Kd\epsilon/2) \\
&\leq \PP(|Z - Kdp_\ell| > Kd\epsilon/2 | \bar{E}) + \PP(E) \\
&\leq 2e^{-\frac{K^2d^2 \epsilon^2/4}{2\sum_i \alpha_i^2}} + c_6K(e^{-\beta_1 K^{\frac{1}{2\ell + 0.5}}}) \\
& \leq 2e^{-\beta \epsilon^2 K^{1/(4\ell + 1)}}.
\end{align*}

\subsection{Proof of Lemma \ref{lem:de}}\label{app:de2}

First note that it is easy to prove the lemma for specific parameters by plotting the function. See for example Figure \ref{fig:de1}. To formally show it, note that $f(1) = 1$ is one solution of the fixed point equation, since $\lambda(1) = 1$. Also $f(0) = \lambda(e^{-\eta}) > 0$. Thus, by continuity of $f(x)$ and using the assumption that $f'(1) > 1$, there is another fixed point $x^*_2$. Now since $f'(1) > 1$, $f(x) < x$ for $x$ close to 1. In order to show that $f(x) < x$ for all $x \in (x^*_2,1)$, it is enough to show that $f'(x) - 1 = 0 $ has only one solution in $x \in (0,1)$. To this end, see that
$$
f'(x) = \eta e^{-\eta x} \lambda'(1 + e^{-\eta} - e^{-\eta x}).
$$
For ease of notation, let $y = 1 + e^{-\eta} - e^{-\eta x}$ and $y \in (e^{-\eta},1)$. Equivalently, we want to show that 
$$
C(1 + e^{-\eta} - y)(1 + y + y^2 + \ldots + y^{D-2}) =1 
$$
has only one solution where $C = \eta/h(D-1)$. This is easy to see since $D$ is large so $y \simeq  \frac{1-C-Ce^{-\eta}}{1-C}$. 

\subsection{Proof of Lemma \ref{lem:ir}}\label{app:ir}

We show that if 
\begin{align} \label{eq:D}
D=\max\{(\frac{e}{1-\epsilon})^{2/\epsilon},(1+\frac{1}{p^*})^{1/(1-\epsilon)}\},
\end{align}
then, 
\begin{equation}\label{eq:stability}
f'(1)  = \eta e^{-\eta} \sum_{i \geq 1} \lambda_i (i-1) > 1,
\end{equation} 
and the error floor which is approximately $\lambda(e^{-\eta})$ is at most $p^*$; that is,
\begin{equation}\label{eq:error}
\sum_{i \geq 1} \lambda_i e^{-\eta(i-1)} \leq p^*.
\end{equation} 
This shows that in the density evolution equation, $p_j$ converges to $p^*$ as $j$ goes to infinity. This is illustrated in Figure \ref{fig:density1}. 

Recall that 
$$
\bar{d} = (\sum_{i=2}^D \frac{\lambda_i}{i})^{-1} = h(D-1) \frac{D}{D-1}.
$$
Thus, since $M = K/(1-\epsilon)$,
$$
\eta = \frac{K \bar{d}}{M} = h(D-1) \frac{D}{D-1} (1-\epsilon). 
$$ 
First, we show \eqref{eq:error} in the following.

\begin{align*}
\sum_{i=2}^D \lambda_i e^{-\eta(i-1)} &= \frac{1}{h(D-1)} \sum_{i=2}^D \frac{1}{i-1}e^{-\eta(i-1)} \\
& \leq \frac{1}{h(D-1)} \sum_{i=1}^\infty e^{-\eta i} \\
& = \frac{e^{-\eta}}{h(D-1)(1-e^{-\eta})}. 
\end{align*}
It is enough to show that $h(D-1)(e^{\eta} - 1) \geq \frac{1}{p^*}$. We have
\begin{align*}
h(D-1)(e^{\eta} - 1) &\geq e^{\eta} - 1 \\
& \geq e^{\log(D).\frac{D}{D-1}(1-\epsilon)} -1 \\
& \geq D^{1- \epsilon} - 1 \\
& \geq \frac{1}{p^*},
\end{align*}
where the last inequality is due to \eqref{eq:D}.

Second, we show that \eqref{eq:stability} is satisfied in the following.

\begin{align}
\eta e^{-\eta} \sum_{i=2}^D \mu_i (i-1) &= \eta e^{-\eta} \frac{D-1}{h(D-1)} \\
&= D(1-\epsilon)e^{-h(D-1)\frac{D}{D-1}(1-\epsilon)} \\
& \geq D(1-\epsilon)e^{-(1+\log(D))\frac{D}{D-1}(1-\epsilon)} \\
& = \frac{1-\epsilon}{e} D^{\frac{\epsilon D - 1}{D-1}} \\ \label{eq:D1}
& \geq \frac{1-\epsilon}{e} D^{\epsilon/2} \\ \label{eq:D2}
& \geq 1,
\end{align}
where \eqref{eq:D1} is due to \eqref{eq:D} since $D \geq (\frac{e}{1-\epsilon})^{2/\epsilon} \geq \frac{2}{\epsilon}$ implies that $\frac{\epsilon D - 1}{D-1} \geq \frac{\epsilon}{2}$, and \eqref{eq:D2} is due to \eqref{eq:D}. This shows that $p_j, ~j\geq 1$ is a strictly decreasing sequence which is lower bounded by $p^*$. Thus, $p_j \to p^*$ as $j \to \infty$. This completes the proof.

%This is the appenix

%Note that $a_{ij}$'s are bounded random variables and consequently $\ABS{a_{ij}}$ has finite sub-gaussian norm, $\SGS{\ABS{a_{ij}}}=\eta$. 
\subsection{Proof of Theorem \ref{thm_exh_overall}}\label{prf_exh_overall}

We first introduce some notation. Here, $\FBNL{\cdot}$ denotes the Frobenius norm of a matrix, 
$\OPNL{\cdot}$ denotes the operator norm of a matrix.
For a sub-exponential random variable, $\SEPL{\cdot}$ denotes the sub-exponential norm of it; 
for a sub-gaussian random variable, $\SGSL{\cdot}$ denotes the sub-gaussian norm of it \cite{nonasym}.
%For a matrix $\mat{X}$, $\DIAG{\mat{X}}$ denotes a diagonal matrix whose diagonal entries are the same as $\mat{X}$. 
The notations $c$, $c_i$, $C$, and $C_i$ represent absolute constants with positive value. 

In our model, we also assume that the noise $w_i$ satisfies $\EXPL{\ABS{w_i}}=\mu$, $\EXPL{w_i^2}=\sigma^2$, and $\SEPL{w_i} = \nu$.
Since the entries in $\mat{A}_0$ and $\mat{F}_0$ are bounded and thus sub-gaussian, we let $\eta=\SGSL{\ABS{a_{ij}}}$ 
and $\eta_0=\SGSL{\ABS{f_{0,ij}}}$, where $a_{ij}$ and $f_{0,ij}$ are entries of $\mat{A}_0$ and $\mat{F}_0$.

In order to prove Theorem \ref{thm_exh_overall}, we need to prove Lemma \ref{lem:energy_test} first. Here, we restate Lemma \ref{lem:energy_test} with more details.
\renewcommand\thelemma{1}
\begin{lemma}\label{the_lemma_1}
There exists $\zeta>0$, determined by $\eta$, $\nu$, and $\sigma$, such that when $\phi>\mu/\zeta$, for any $t_0\in(\mu, \zeta\phi)$,
\begin{equation}\label{exh_test1}
\PRO{ \frac{1}{P}\ONEN{\vect{w}}\ge t_0}
= \BIGO(1/n^2),
\end{equation}
and
\begin{equation}\label{exh_test2}
\PRO{ \frac{1}{P}\ONEN{\vect{y}-\LNR(\hat{\vect{z}}\hat{\vect{z}}\HET)}< t_0}
= \BIGO(1/n^2),
\end{equation}
when $\hat{\vect{z}}\nsim\vect{z}$.
\end{lemma}

See the proof of Lemma \ref{lem:energy_test} in Appendix \ref{prf_lem_energy}. Now we can analyze the failure probability of the almost-linear scheme. Recall that the bipartite graph is $d$-left-regular; thus, there are $dn$ edges in the graph. In the first iteration, we need to check every edge and detect the singletons. Therefore, we need to do $\Theta(n)$ tests in the first iteration. Similarly, in the following iterations, we need to do at most $\Theta(n)$ tests. Since the number of iterations is a constant, we need to do $N_t=\Theta(n)$ tests.
Lemma \ref{lem:energy_test} tells us that, for any energy test, if no error has been made in the previous tests, the error probability of the energy test is
$\BIGO(1/n^2)$. More specifically, let $E_i$ be the event that there is an error in the $i$th test, while the tests $1,\ldots,i-1$ are all correct. The event $E_{\text{test}}$ that there exists an error in at least one energy test can be decomposed as
$$
E_{\text{test}}=\bigcup_{i=1}^{N_t}{E_i}.
$$
By union bound, we have
$$
\PRO{E_{\text{test}}}\le N_t\sum_{i=1}^{N_t}\PRO{E_i}=\Theta(n)\BIGO(1/n^2)=\BIGO(1/n).
$$
Another possibility of making an error lies in the  coloring algorithm itself. When there is no error in energy tests, this probability is $\BIGO(1/K)$ as analyzed in the noiseless case. Therefore the failure probability of the almost-linear scheme is
\begin{align}
\PRO{E_a}&=\PRO{E_a|E_{\text{test}}} \PRO{E_{\text{test}}} + \PROL{E_a|E_{\text{test}}^\complement} \PROL{E_{\text{test}}^\complement} \nonumber\\
&\le \PRO{E_{\text{test}}}+\PROL{E_a|E_{\text{test}}^\complement} \nonumber\\
&= \PRO{E_{\text{test}}}+\PRO{E_{\text{coloring}}} \nonumber\\
&=\BIGO(1/n)+\BIGO(1/K) \nonumber\\
&=\BIGO(1/K) \nonumber
\end{align}
The sample and computational complexity are already analyzed in Section \ref{sec:exh_search}. This completes the proof of Theorem \ref{thm_exh_overall}.

\subsection{Proof of Lemma \ref{the_lemma_1}}\label{prf_lem_energy}

To prove Equation (\ref{exh_test1}), we use the Bernstein's inequality in \cite{nonasym} as follows. For any $t>0$,
\begin{align*}
\PRO{\frac{1}{P}\sum_{i=1}^P(\ABS{w_i}-\EXP{\ABS{w_i}})>t}	 
\le \exp\left[-C_1P\min\left\{ \frac{t^2}{\nu^2},\frac{t}{\nu} \right\} \right]. 
\end{align*}
Therefore, by choosing $t_0>\EXP{\ABS{w_i}}=\mu$ and $t=t_0-\mu$, we have
$$
\PRO{\frac{1}{P}\ONEN{\vect{w}}\ge t_0}\le \exp\left[ -\delta_1 P \right].
$$
Since $\delta_1$ is a constant and $P=\Theta(\log(n))$, (\ref{exh_test1}) is proved.

Now we prove Equation (\ref{exh_test2}). Before getting into the details of the proof, we give the definition of a new notation $\phi$. For two vectors $\vect{p}, \vect{q}\in\SET^n$, it is easy to see that $\vect{p}\nsim\vect{q}\Leftrightarrow\vect{p}\vect{p}\HET-\vect{q}\vect{q}\HET\neq0$.
Since the entries of $\vect{p}$ and $\vect{q}$ lie in the quantized set $\SET$,
we know that there exists $\phi>0$, such that $\FBNL{\vect{p}\vect{p}\HET-\vect{q}\vect{q}\HET}>\phi$,
when $\vect{p}\nsim\vect{q}$, where $\phi$ depends on $\varepsilon$, $L_m$, and $L_p$.

\renewcommand\thelemma{3}
\begin{lemma}\label{RIP}
Given two vectors $\vect{x}_1,\vect{x}_2\in\CMP^N$, let $\mat{X}=\vect{x}_1\vect{x}_1\HET-\vect{x}_2\vect{x}_2\HET\neq0$. $\LNR$ is the linear function defined in (\ref{lineardef}), and $\vect{w}$ is the noise. Then, for any $s>0$, we have,
\begin{align*}
&\PRO{ \frac{1}{P}\ONEN{\LNR(\mat{X})+\vect{w}}< (\zeta-s\eta_d)\FBN{\mat{X}}-2s\nu }\\
&\le \exp\left[ -C_0P\min{\{s^2,s\}} \right],
\end{align*}
where $\zeta>0$ depends on $\eta$, $\sigma$, and $\nu$, $\eta_d>0$ only depends on $\eta$.
\end{lemma}
See the proof of Lemma $\ref{RIP}$ in Appendix $\ref{prf_RIP}$. Note that $\vect{y}-\LNR(\hat{\vect{z}}\hat{\vect{z}}\HET)=\LNR(\vect{z}\vect{z}\HET-\hat{\vect{z}}\hat{\vect{z}}\HET)+\vect{w}$, and that $\FBNL{\vect{z}\vect{z}\HET-\hat{\vect{z}}\hat{\vect{z}}\HET}>\phi$. Now using Lemma $\ref{RIP}$, conditioning on $\vect{h}$, we have for any $s>0$,
\begin{align}
&\PRO{ \frac{1}{P}\ONEN{\vect{y}-\LNR(\hat{\vect{z}}\hat{\vect{z}}\HET)}< \zeta\phi-(\eta_d\phi+2\nu)s \ |\ \vect{h} } \nonumber \\
&\le \exp\left[ -C_0P\min{\{s^2,s\}} \right]. \label{rip_s}
\end{align}
Since (\ref{rip_s}) holds for any $\vect{h}$, we know that it also holds without conditioning on $\vect{h}$. If $\zeta\phi>t_0$, we can choose $s=\frac{\zeta\phi-t_0}{\eta_d\phi+2\nu}$. Then
$$
\PRO{ \frac{1}{P}\ONEN{\vect{y}-\LNR(\hat{\vect{z}}\hat{\vect{z}}\HET)}< t_0}\le \exp\left[ -\delta_2P \right].
$$
Since $\delta_2$ is a constant and $P=\Theta(\log(n))$, Equation (\ref{exh_test2}) is proved.

We conclude that there exists $\zeta$, determined by the statistics of noise, such that when $\phi>\mu/\zeta$, for any threshold $t_0\in(\mu, \zeta\phi)$,
the energy test fails with probability $\BIGO(1/n^2)$. This completes the proof of Lemma \ref{the_lemma_1}. 

\subsection{Proof of Lemma \ref{RIP}}\label{prf_RIP}
The proof of Lemma \ref{RIP} is based on similar ideas in \cite{chen2014convex}. 
Let $\vect{\xi}=\LNR(\mat{X})+\vect{w}$, then $\xi_i=\vect{a}_i\HET \mat{X} \vect{a}_i+w_i$. 
By the definition of matrix $\mat{A}$, we know that the Hanson-Wright inequality for complex random variables (shown in Appendix \ref{prf_comp_hw}) holds for $\vect{a}_i\HET \mat{X} \vect{a}_i$. That is, for every $t>0$,
\begin{align}
&\PRO{\ABS{\vect{a}_i\HET \mat{X} \vect{a}_i-\EXP{\vect{a}_i\HET \mat{X} \vect{a}_i}}>t}\\
&\le  6\exp{\left[-c\min\left\{ \frac{t^2}{\eta^4\FBN{\mat{X}}^2}, \frac{t}{\eta^2\OPN{\mat{X}}} \right\}\right]} \nonumber \\
&\le  6\exp{\left[-c\min\left\{ \frac{t}{\eta^2\FBN{\mat{X}}}-\frac{1}{4}, \frac{t}{\eta^2\FBN{\mat{X}} }\right\}\right]} \nonumber\\
&\le  6\exp{\left[c\left(\frac{1}{4}-\frac{t}{\eta^2\FBN{\mat{X}}}\right)\right]}, \nonumber
\end{align}
where the second inequality is due to the fact that $(a-1/2)^2\ge 0$ and $\|\mat{X}\|\le \FBNL{\mat{X}}$. From \cite{nonasym}, we know that $\vect{a}_i\HET \mat{X} \vect{a}_i-\EXP{\vect{a}_i\HET \mat{X} \vect{a}_i}$ is a sub-exponential random variable with sub-exponential norm 
\begin{equation}\label{subexp1}
\SEP{\vect{a}_i\HET \mat{X} \vect{a}_i-\EXP{\vect{a}_i\HET \mat{X} \vect{a}_i}}\le C\eta^2\FBN{\mat{X}}.
\end{equation}
On the other hand, 
$$
\ABS{\EXP{\vect{a}_i\HET \mat{X} \vect{a}_i}}=\frac{1}{2}\ABS{\TWON{\vect{x}_1}^2-\TWON{\vect{x}_2}^2}\le\frac{1}{2}\FBN{\mat{X}}.
$$
Thus, 
\begin{align}
\SEP{\xi_i} =& \SEP{\vect{a}_i\HET \mat{X} \vect{a}_i-\EXP{\vect{a}_i\HET \mat{X} \vect{a}_i} + \EXP{\vect{a}_i\HET \mat{X} \vect{a}_i} + w_i}  \nonumber\\
\le&\SEP{\vect{a}_i\HET \mat{X} \vect{a}_i-\EXP{\vect{a}_i\HET \mat{X} \vect{a}_i}} + \ABS{\EXP{\vect{a}_i\HET \mat{X} \vect{a}_i}} +\nu \nonumber\\
 \le& (C\eta^2+1/2)\FBN{\mat{X}}+\nu, \label{subexp2}
\end{align}
where the first inequality is due to the fact that $\EXP{\vect{a}_i\HET \mat{X} \vect{a}_i}$ is a constant, $\SEPL{\EXP{\vect{a}_i\HET \mat{X} \vect{a}_i}} = \ABSL{\EXP{\vect{a}_i\HET \mat{X} \vect{a}_i}}$, and that $\SEPL{w_i} = \nu$. Then,
\begin{equation}\label{subexp3}
\SEP{\ABS{\xi_i}-\EXP{\ABS{\xi_i}}} \le 2\SEP{\xi_i}\le \eta_d\FBN{\mat{X}}+2\nu,
\end{equation}
where $\eta_d=2C\eta^2+1$. Now by Bernstein's inequality in \cite{nonasym}, for every $t>0$,
\begin{align*}
&\PRO{ \frac{1}{P}\sum_{i=1}^P(\ABS{\xi_i}-\EXP{\ABS{\xi_i}})<-t } \\
&\le \exp\left[-C_0P\min\left\{\frac{t^2}{(\eta_d\FBN{\mat{X}}+2\nu)^2},\frac{t}{\eta_d\FBN{\mat{X}}+2\nu}\right\}\right].
\end{align*}
Let $t=s(\eta_d\FBN{\mat{X}}+2\nu)$. For any $s>0$,
\begin{align}
&\PRO{ \frac{1}{P}\sum_{i=1}^P(\ABS{\xi_i}-\EXP{\ABS{\xi_i}})<-s(\eta_d\FBN{\mat{X}}+2\nu) } \nonumber\\
&\le \exp\left[ -C_0P\min{\{s^2,s\}} \right].	\label{Bernstein}
\end{align}
By Cauchy-Schwartz inequality,  for any $i\in[P]$, we have 
$$
\left( \EXP{\xi_i^2} \right)^2\le \EXP{\ABS{\xi_i}}\EXP{\ABS{\xi_i}^3} \le \EXP{\ABS{\xi_i}}\sqrt{\EXP{\xi_i^2}\EXP{\xi_i^4}},
$$
which implies
\begin{equation}\label{expbound0}
\EXP{\ABS{\xi_i}}\ge\sqrt{\frac{(\EXP{\xi_i^2})^3}{\EXP{\xi_i^4}}}.
\end{equation}
By the definition of sub-exponential norm and the fact that $\eta_d>1$, we have
\begin{align}
\EXP{\xi_i^4}\le(4\SEP{\xi_i})^4 &\le(2\eta_d\FBN{\mat{X}}+4\nu)^4  \nonumber \\
&\le (8\eta_d^2\FBN{\mat{X}}^2+32\nu^2)^2. \label{expbound1}
\end{align}
On the other hand, we have
\begin{align}
\EXP{\xi_i^2}&=\EXP{(\vect{a}_i\HET \mat{X} \vect{a}_i)^2}+\EXP{w_i^2}  \nonumber\\
&=\EXP{(\vect{a}_i\HET \mat{X} \vect{a}_i)\TR{\vect{a}_i\vect{a}_i\HET\mat{X}}}+\sigma^2 \nonumber\\
&=\EXP{\TR{(\vect{a}_i\HET \mat{X} \vect{a}_i)\vect{a}_i\vect{a}_i\HET\mat{X}}}+\sigma^2 \nonumber\\ 
&=\TR{\EXP{(\vect{a}_i\HET \mat{X} \vect{a}_i)\vect{a}_i\vect{a}_i\HET}\mat{X}}+\sigma^2 \nonumber\\ 
&=\frac{1}{4}\TR{(\mat{X}+\TR{\mat{X}}\mat{I})\mat{X}}+\sigma^2 \label{inter1} \\ 
&\ge \frac{1}{4}\FBN{\mat{X}}^2+\sigma^2. \label{expbound2}
\end{align}
Here we give an explanation of (\ref{inter1}). Let $\mat{Y}=(\vect{a}_i\HET \mat{X} \vect{a}_i)\vect{a}_i\vect{a}_i\HET$. Then,
\begin{align}
\EXP{Y_{jk}} 
&= \EXP{\sum_{1\le g,h\le n}a_{ig}^*X_{gh}a_{ih}a_{ij}a_{ik}^*}  \nonumber\\
&= \sum_{g=1}^n\EXP{a_{ig}^*X_{gg}a_{ig}a_{ij}a_{ik}^*} + \sum_{g\neq h}\EXP{a_{ig}^*X_{gh}a_{ih}a_{ij}a_{ik}^*}.   \nonumber
\end{align}
If $j=k$, we have
\begin{align}
\EXPL{Y_{jj}}&=\sum_{g=1}^n X_{gg}\EXPL{\ABSL{a_{ig}}^2\ABSL{a_{ij}}^2} + \sum_{g\neq h} X_{gh}\EXPL{a_{ig}^*a_{ih}\ABSL{a_{ij}}^2}  \nonumber\\
&=\frac{1}{4}(\TR{X}+X_{jj}). \nonumber
\end{align}
If $j \neq k$, we have
\begin{align}
\EXPL{Y_{jk}}&=\sum_{g=1}^n X_{gg}\EXPL{\ABSL{a_{ig}}^2a_{ij}a_{ik}^*} + \sum_{g\neq h} X_{gh}\EXPL{a_{ig}^*a_{ih}a_{ij}a_{ik}^*}  \nonumber\\
&= X_{jk} \EXPL{\ABSL{a_{ij}}^2\ABSL{a_{ik}}^2}  \nonumber\\
&=\frac{1}{4}X_{jk}. \nonumber
\end{align}
Therefore, $\EXPL{\mat{Y}}=\frac{1}{4}(\mat{\mat{X}}+\TR{\mat{X}}\mat{I})$.

By combining (\ref{expbound0}), (\ref{expbound1}), and (\ref{expbound2}), we have
\begin{align}
\EXP{\ABS{\xi_i}}&\ge\sqrt{\left(\frac{\frac{1}{4}\FBN{\mat{X}}^2+\sigma^2}{8\eta_d^2\FBN{\mat{X}}^2+32\nu^2}\right)^2\left(\frac{1}{4}\FBN{\mat{X}}^2+\sigma^2\right)} \nonumber\\
&\ge \zeta\FBN{\mat{X}}, \nonumber
\end{align}
where $\zeta=\frac{1}{2}\min\left\{\frac{1}{32\eta_d^2},\frac{\sigma^2}{32\nu^2}\right\}$ is a constant determined by the distribution of $a_{ij}$ and $w_i$.
Then, by (\ref{Bernstein}), we have
\begin{align*}
&\PRO{ \frac{1}{P}\sum_{i=1}^P\ABS{\xi_i}< \zeta\FBN{\mat{X}}-s(\eta_d\FBN{\mat{X}}+2\nu) }  \\
&\le \exp\left[ -C_0P\min{\{s^2,s\}} \right], 
\end{align*}
which completes the proof.
\subsection{Proof of Theorem \ref{thm_fast_overall}}\label{prf_thm_fast}

To prove Theorem \ref{thm_fast_overall}, we make essential use of Lemma \ref{lem:fast_one_section}. Here, we restate Lemma \ref{lem:fast_one_section}, providing more details.
\renewcommand\thelemma{2}
\begin{lemma}\label{the_lem_2}
If $T_s=1$, $\SUPP{\vect{z}_s}=\{l_s\}$, and threshold $t_1\in(0,\varepsilon^2/2)$, then for any $j\in[R]$,
$$
\PRO{\tilde{b}_j\neq b_{jl_s}}=\BIGO(1/K^3).
$$
\end{lemma}
See the proof of Lemma \ref{the_lem_2} in Appendix \ref{prf_lem_2}. Then, by union bound, $\mathbb{P}\{\tilde{\vect{b}} \neq \vect{B}_{l_s}\}=\BIGO(R/K^3)\le\BIGO(1/K^2)$, since $K=\beta n^\delta$. Thus, we can reliably find $l_s$ from the measurements with probability $1-\BIGO(1/K^2)$. For a right node with $T_s=1$, the probability of error in the index tests and the probability of error in the energy test are $\BIGO(1/K^2)$ and $\BIGO(1/n^2)$, respectively. Therefore, the error probability of the tests for a right node is $\BIGO(1/K^2)$. For a bin with $T_s>1$, only the energy test needs to be considered and its error probability is $\BIGO(1/n^2)$. Then, we know the probability of error in the index and energy tests is $\BIGO(1/K^2)$. Since there are $\Theta(K)$ right nodes and a constant number of iterations, using the same decomposition method as in the proof of Theorem \ref{thm_exh_overall}, the error probability of all the tests is $\BIGO(1/K)$. Similar to the almost-linear scheme, considering the $\BIGO(1/K)$ probability of unsuccessful recovery in the coloring algorithm when there is no error in the index and energy tests, the failure probability of sublinear scheme is $\PROL{E_s}=\BIGO(1/K)$. Since the sample and computational complexity of the algorithm are already analyzed in Section \ref{sec:fast_search}, the proof of Theorem \ref{thm_fast_overall} is now complete.

\subsection{Proof of Lemma \ref{the_lem_2}}\label{prf_lem_2}
First, we define an event $E_h$ such that there are more than $C_3\log K$ active left nodes connected to a right node. As shown in \cite{pawar2013pulse}, we have $\PROL{E_h}=\BIGO(1/K^3)$. Now we condition on the coding pattern $\vect{h}$ such that $E_h^\complement$ happens, and thus $\ABS{\SUPP{\vect{z}}}=T\le C_3\log K$.
Similar to the almost-linear algorithm, we define $R+1$ linear mappings, $\LNR_0, \LNR_1, \ldots, \LNR_R$, where
$$
\LNR_0:\ \mat{Z}\mapsto \{\vect{a}_i\HET \mat{Z} \vect{a}_i\}_{i\in[P]},
$$
$$
\LNR_j:\ \mat{Z}\mapsto \{\vect{f}_{j,i}\HET \mat{Z} \vect{f}_{j,i}\}_{i\in[Q]}, \ {\rm for\ } j\in[R].
$$
Then, $\vect{y}_j=\LNR_j(\vect{z}\vect{z}\HET)+\vect{w}_j$, $j\in\{0\}\cup[R]$.

Define the matrix $\tilde{\mat{Z}}=\{\tilde{Z}_{ij}\}_{N\times N}:=\vect{z}\vect{z}\HET-\tilde{\vect{z}}_c\tilde{\vect{z}}_c\HET=\vect{z}\vect{z}\HET-\vect{z}_c\vect{z}_c\HET$. 
Then, $\tilde{\vect{y}}_j=\LNR_j(\tilde{\mat{Z}})+\vect{w}_j$ and $\tilde{y}_{j,i}=\vect{f}_{j,i}\HET \tilde{\mat{Z}} \vect{f}_{j,i}+w_{j,i}$.
Let $f_{j,i,m}$ be the $m$th element of $\vect{f}_{j,i}$. Since for a fixed $j$, $f_{j,i,m}$'s are independent, using similar argument to the one in Appendix \ref{prf_RIP}, we have 
$$
\SEP{\vect{f}_{j,i}\HET \tilde{\mat{Z}} \vect{f}_{j,i} - \EXP{\vect{f}_{j,i}\HET \tilde{\mat{Z}} \vect{f}_{j,i}}} \le C_2\eta_0^2\FBN{\tilde{\mat{Z}}}.
$$
Thus, $\SEPL{\tilde{y}_{j,i}-\EXPL{\tilde{y}_{j,i}}}\le C_2\eta_0^2\FBNL{\tilde{\mat{Z}}} + \nu$.
Since there are $2T-1$ nonzero entries in $\tilde{\mat{Z}}$, we have $\FBNL{\tilde{\mat{Z}}}\le \sqrt{2T-1}L_m\varepsilon$. 
Moreover, $T\le C_3\log K$, which implies that
$\SEPL{\tilde{y}_{j,i}-\EXPL{\tilde{y}_{j,i}}}\le C_4\eta_0^2L_m\varepsilon \sqrt{\log K}+\nu\le \zeta_0\sqrt{\log K}$, where $\zeta_0$ is determined by $\eta_0$, $L_m$, $\varepsilon$, and $\nu$.

On the other hand, since $T_s=1$ and $\SUPP{\vect{z}_s}=\{l_s\}$, $\tilde{\mat{Z}}$ has only one non-zero element on the diagonal, i.e.,
$\tilde{Z}_{l_sl_s}=\ABSL{z_{l_s}}^2$. Note that $\EXPL{\tilde{y}_{j,i}}=\EXPL{\ABSL{f_{j,i,l_s}}^2}\ABSL{z_{l_s}}^2=b_{jl_s}\ABSL{z_{l_s}}^2$. Thus, by Bernstein's inequality, for every $t\ge0$, 
\begin{align*}
&\PRO{\ABS{\frac{1}{Q}\sum_{i=1}^Q (\tilde{y}_{j,i} - b_{jl_s}\ABS{z_{l_s}}^2 )}>t\ |\ \vect{h}} \\ 
&\le 2\exp \left[ -C_5 Q \min \left\{ \frac{t^2}{\zeta_0^2\log K}, \frac{t}{\zeta_0\sqrt{\log K}} \right\} \right] \\
&\le 2\exp \left[ -\frac{C_5}{\zeta_0^2} \sqrt{Q} \min\left\{ t^2,t \right\} \right],
\end{align*}
where the last inequality is due to the fact that $Q=\Theta(\log^2N)$. We choose $t_1=t<\varepsilon^2/2$. When $b_{jl_s}=0$, we have
\begin{align}
\PRO{\ABS{\frac{1}{Q}\sum_{i=1}^Q \tilde{y}_{j,i}}>t_1\ |\ \vect{h}}  
\le 2\exp \left[ -\frac{C_5}{\zeta_0^2} \sqrt{Q} \min\left\{ t_1^2,t_1 \right\} \right],  \label{fast_test1}
\end{align}
and when $b_{jl_s}=1$, we have
\begin{align}
&\PRO{\ABS{\frac{1}{Q}\sum_{i=1}^Q \tilde{y}_{j,i}}<t_1\ |\ \vect{h}}    \nonumber \\
&\le \PRO{\frac{1}{Q}\sum_{i=1}^Q \tilde{y}_{j,i}<t_1\ |\ \vect{h}}  \nonumber\\
&\le \PRO{\frac{1}{Q}\sum_{i=1}^Q \tilde{y}_{j,i}<\ABS{z_{l_s}}^2-t_1\ |\ \vect{h}}  \label{interm2} \\
&\le \PRO{\ABS{\frac{1}{Q}\sum_{i=1}^Q \tilde{y}_{j,i}-\ABS{z_{l_s}}^2}>t_1\ |\ \vect{h}}    \nonumber\\
&\le 2\exp \left[ -\frac{C_5}{\zeta_0^2} \sqrt{Q} \min\left\{ t_1^2,t_1 \right\} \right],   \label{fast_test2}
\end{align}
where the inequality (\ref{interm2}) is due to the fact that $t_1<\varepsilon^2/2$ and $\ABS{z_{l_s}}^2\ge \varepsilon^2$. Define the error events $E_{\text{index}}=\{\ABSL{\frac{1}{Q}\sum_{i=1}^Q \tilde{y}_{j,i}}>t_1\}$, when $b_{jl_s}=0$, and $E_{\text{index}}=\{\ABSL{\frac{1}{Q}\sum_{i=1}^Q \tilde{y}_{j,i}}<t_1\}$, when $b_{jl_s}=1$.
Then, since $Q=\Theta(\log^2(n))$ and inequalities (\ref{fast_test1}) and (\ref{fast_test2}) hold for any $\vect{h}\in E_h^\complement$, we have,
$$
\PROL{E_{\text{index}}|E_h^\complement}=\BIGO(1/K^3).
$$
Now we know that
\begin{align}
\PROL{E_{\text{index}}}&=\PROL{E_{\text{index}}|E_h^\complement}\PROL{E_h^\complement}+\PROL{E_{\text{index}}|E_h}\PROL{E_h} \nonumber\\
&\le \PROL{E_{\text{index}}|E_h^\complement} + \PROL{E_h} \nonumber\\
& = \BIGO(1/K^3)+\BIGO(1/K^3) \nonumber\\
& = \BIGO(1/K^3), \nonumber
\end{align}
which completes the proof.

\subsection{Hanson-Wright Inequality for Complex Random Variables}\label{prf_comp_hw}
\begin{theorem}\label{thm_comp_hw}
Let $\vect{\gamma}=\{\gamma_i\}_{i\in[n]}\in\CMP^n$ be a random vector with independent entries $\gamma_i$, satisfying 
$\EXP{\gamma_i}=0$, and $\ABS{\gamma_i}$ is sub-gaussian with $\SGS{\ABS{\gamma_i}}\le\eta$ for all $i\in[n]$. 
Let $\mat{U}\in\CMP^{n\times n}$ be a Hermitian matrix. Then, for every $t\ge 0$,
\begin{align*}
&\PRO{\ABS{\vect{\gamma}\HET \mat{U} \vect{\gamma}-\EXP{\vect{\gamma}\HET \mat{U} \vect{\gamma}}}>t} \\
&\le 6\exp{\left[-c_0\min\left\{\frac{t^2}{\eta^4\FBN{\mat{U}}^2}, \frac{t}{\eta^2\OPN{\mat{U}}}\right\}\right]}.
\end{align*}
\end{theorem}
\begin{proof}
Let $\vect{\alpha}=\{\alpha_i\}_{i\in[n]}$ and $\vect{\beta}=\{\beta_i\}_{i\in[n]}$ be the real and imaginary parts of $\vect{\gamma}$.
Then, we know that $\alpha_i$'s and $\beta_i$'s are sub-gaussian random variables with $\SGS{\alpha_i}\le \eta$ and 
$\SGS{\beta_i}\le \eta$ for all $i\in[n]$.
Note that here, although $\gamma_i$'s are independent, the real and imaginary parts of $\gamma_i$ are not necessarily independent for a certain $i$. In other words, for any $i$, $\alpha_i$ and $\beta_i$ may not be independent.

Let $\mat{V}$ and $\mat{W}$ be the real and imaginary parts of $\mat{U}$. 
Since $\mat{U}$ is a Hermitian matrix, we have $\mat{V}=\mat{V}\TSP$ and $\mat{W}=-\mat{W}\TSP$. We also know that 
$\vect{\gamma}\HET \mat{U} \vect{\gamma}$ is a real number. Then, we have
$$
\vect{\gamma}\HET \mat{U} \vect{\gamma} = \vect{\alpha}\TSP \mat{V} \vect{\alpha} - 2\vect{\alpha}\TSP\mat{W}\vect{\beta}+\vect{\beta}\TSP\mat{V}\vect{\beta}.
$$
Therefore, $\PRO{\ABS{\vect{\gamma}\HET \mat{U} \vect{\gamma}-\EXP{\vect{\gamma}\HET \mat{U} \vect{\gamma}}}>t}$ is upper bounded by three terms,
\begin{align}
&\PRO{\ABS{\vect{\gamma}\HET \mat{U} \vect{\gamma}-\EXP{\vect{\gamma}\HET \mat{U} \vect{\gamma}}}>t} \\
&\le \PRO{\ABS{\vect{\alpha}\TSP \mat{V} \vect{\alpha}-\EXP{\vect{\alpha}\TSP \mat{V} \vect{\alpha}}}>t/4}  \nonumber\\ 
&+ \PRO{\ABS{\vect{\alpha}\TSP\mat{W}\vect{\beta}-\EXP{\vect{\alpha}\TSP\mat{W}\vect{\beta}}}>t/4} \nonumber\\
&+ \PRO{\ABS{\vect{\beta}\TSP\mat{V}\vect{\beta}-\EXP{\vect{\beta}\TSP\mat{V}\vect{\beta}}}>t/4}. \label{separating}
\end{align}
Since $\alpha_i$'s are independent and $\EXP{\alpha_i}=0$, according to the Hanson-Wright inequality for real numbers\cite{HansonWright}, we have
\begin{align*}
&\PRO{\ABS{\vect{\alpha}\TSP \mat{V} \vect{\alpha}-\EXP{\vect{\alpha}\TSP \mat{V} \vect{\alpha}}}>t/4} \nonumber \\
&\le 2\exp{\left[-c_1\min\left\{\frac{t^2}{\eta^4\FBN{\mat{V}}^2}, \frac{t}{\eta^2\OPN{\mat{V}}}\right\}\right]}. 
\end{align*}
Further, we have $\FBN{\mat{V}}\le\FBN{\mat{U}}$, $\OPN{\mat{V}}\le\OPN{\mat{U}}$. Therefore,
\begin{align}
&\PRO{\ABS{\vect{\alpha}\TSP \mat{V} \vect{\alpha}-\EXP{\vect{\alpha}\TSP \mat{V} \vect{\alpha}}}>t/4} \nonumber\\
&\le 2\exp{\left[-c_1\min\left\{\frac{t^2}{\eta^4\FBN{\mat{U}}^2}, \frac{t}{\eta^2\OPN{\mat{U}}}\right\}\right]}.  \label{alpha_part}
\end{align}
And similarly,
\begin{align}
&\PRO{\ABS{\vect{\beta}\TSP \mat{V} \vect{\beta}-\EXP{\vect{\beta}\TSP \mat{V} \vect{\beta}}}>t/4}	\nonumber \\
&\le 2\exp{\left[-c_2\min\left\{\frac{t^2}{\eta^4\FBN{\mat{U}}^2}, \frac{t}{\eta^2\OPN{\mat{U}}}\right\}\right]}. \label{beta_part}
\end{align}

Now consider the cross term. Let $W_{ij}$ be the entries of $\mat{W}$. Since $\mat{W}=-\mat{W}\TSP$, $W_{ii}=0$ for all $i\in[n]$. 
Then, $\vect{\alpha}\TSP\mat{W}\vect{\beta}=\sum_{i\neq j}{W_{ij}\alpha_i\beta_j}$, 
and $\EXP{\vect{\alpha}\TSP\mat{W}\vect{\beta}}=0$. Then, we can bound $\PRO{\ABS{\vect{\alpha}\TSP\mat{W}\vect{\beta}}>t/4}$ in the same way as in \cite{HansonWright} so that
\begin{align}
&\PRO{\ABS{\vect{\alpha}\TSP\mat{W}\vect{\beta}}>t/4} \nonumber \\
&\le 2\exp{\left[-c_3\min\left\{ \frac{t^2}{\eta^4\FBN{\mat{U}}^2}, \frac{t}{\eta^2\OPN{\mat{U}}} \right\}\right]}.  \label{cross_answer}
\end{align}
By combining (\ref{alpha_part}), (\ref{beta_part}), and (\ref{cross_answer}), Theorem \ref{thm_comp_hw} is proved.

\end{proof}

\subsection{Pseudocode}\label{app:pseudocode}
In this subsection, we provide the pseudocode of the PhaseCode algorithm. Moreover, we provide the pseudocodes of the right node processors: singleton processor, mergeable multiton processor, and resolvable multiton processor.

%\noindent\begin{minipage}{\textwidth}
%\renewcommand\footnoterule{}

\begin{algorithm*}[h!]
\floatname{algorithm}{Pseudocode}
\footnotetext{$^*$One can use Breadth-first search to find the largest component of a graph and its time complexity is $\mathcal{O}(K)$.}
\begin{algorithmic}
\State $\mathcal{I} \gets \emptyset$ \cm{No active component is found in the beginning}
\For{\textbf{each} i in $\{1, 2, ..., M\}$} \cm{Find all singletons}
    \State Singleton Processor
\EndFor

\For{\textbf{each} i in $\{1, 2, ..., M\}$} \cm{Find all doubletons and merge}
    \State Mergeable Multiton Processor
\EndFor

\State $\text{Color}_{0} \gets$ Color of the largest colored component \cm{Find the largest colored component$^*$}

\For{\textbf{each} $\ell$ in $\mathcal{I}$} \cm{Uncolor all other left nodes and delete all values of them}
    \If {$\text{Color}_{\ell} \neq \text{Color}_0$}
        \State $x_{\ell} \gets $ None
        \State $\text{Color}_{\ell} \gets $ None
        \State $\mathcal{I} \gets \mathcal{I} - \{\ell\}$
    \EndIf
\EndFor

\While{$|\mathcal{I}| < K$ and any changes are made in the previous loop} \cm{Keep resolving multitons}
    \State Resolvable Multiton Processor
\EndWhile
\end{algorithmic}
\caption{PhaseCode Algorithm}
\label{alg:unicolor_alg}
\end{algorithm*}
%\end{minipage}

%\begin{algorithm*}[h]
%\floatname{algorithm}{Pseudocode}
%\begin{algorithmic}
%\State $\mathcal{I} \gets \emptyset$ \cm{No ball is found in the beginning}
%\For{\textbf{each} i in $\{1, 2, ..., M\}$} \cm{Find all singletons}
%    \State Singleton Process
%\EndFor
%\While{$|\mathcal{I}| < k$ and any changes are made in the previous loop} \cm{Keep resolving multitons and merging colors}
%    \For{\textbf{each} i in $\{1, 2, ..., M\}$}
%        \State Resolvable Multiton Processor
%        \State Mergeable Multiton Processor
%    \EndFor
%\EndWhile
%\end{algorithmic}
%\caption{Multicolor PhaseCode Algorithm}
%\label{alg:multicolor_alg}
%\end{algorithm*}

\begin{algorithm*}[h!]
\floatname{algorithm}{Pseudocode}
\begin{algorithmic}
\If {$y_{i,1} = y_{i,2} = y_{i,4}$} \cm{Check whether this right node is a singleton or not}
    \State $\ell \gets \frac{1}{\om}\cos^{-1}(\frac{y_{i,3}}{2y_{i,1}})$ \cm{Find the index of the active left node connected to this right node}
    \State $x_\ell \gets y_{i,1}$ \cm{Assign a value to the active left node}
    \State $\mathcal{I}_0 \gets \mathcal{I}_0 \cup \{\ell\}$ \cm{Declare a new found active left node}
    \State $\text{Color}_\ell \gets $ new color \cm{Color the new active left node with a new color}
\EndIf
\end{algorithmic}
\caption{Singleton Processor}
\label{alg:singleton_processor}
\end{algorithm*}

%\noindent\begin{minipage}{\textwidth}
%\renewcommand\footnoterule{}
\begin{algorithm*}[h!]
\floatname{algorithm}{Pseudocode}
\footnotetext{$^{*}$One has to color not only blue active left nodes connected to this right node but all blue active left nodes. This can be done with a special data structure based on linked-lists.}
\begin{algorithmic}
\If {Right node $i$ is connected to no colored active left node or the number of colors connected to the right node is not exactly $2$}
    \State Return \cm{If this right node is not mergeable}
\EndIf
\State Red, Blue $\gets $ Two colors of the active left nodes connected to the right node
\State $\mathcal{R} \gets $ indices of the active left nodes that are colored with Red
\State $\mathcal{B} \gets $ indices of the active left nodes that are colored with Blue
\State $r \gets \sum_{\ell \in \mathcal{R}} x_j e^{\bi \om \ell}$
\State $b \gets \sum_{\ell \in \mathcal{B}} x_\ell e^{\bi \om \ell}$
\For{\textbf{each} $z_1$ in $\{+1,-1\}$} \cm{Consider two candidate}
    \State $\phi \gets z_1 \cos^{-1}\left(\frac{|r|^2 + |b|^2 - y_{i,1}^2}{2 |r| |b|}\right) + \angle r - \angle b$ \cm{Find a candidate for phase offset} 
    \If {$\left| \sum_{\ell \in \mathcal{R}} x_\ell e^{\bi \om' \ell} + \exp(\bi \phi) \times \sum_{\ell \in \mathcal{B}} x_\ell e^{\bi \om' \ell} \right| = y_{i,4}$} \cm{Check the candidate with $y_{i,4}$}
        \State Color Red and Color Blue are combined to a new color
        \For{\textbf{each} $\ell$ in $\mathcal{B}$} \cm{Adjust phase of the components that are colored with Color Blue $^*$}
        \State $x_{\ell} \gets x_{\ell} \times \exp(\bi \phi)$
        \EndFor
        \State Return
    \EndIf 
\EndFor
\end{algorithmic}
\caption{Mergeable Multiton Processor}
\label{alg:mergeable_multiton_processor}
\end{algorithm*}
%\end{minipage}

\begin{algorithm*}[h!]
\floatname{algorithm}{Pseudocode}
\begin{algorithmic}
\If {Right node $i$ is connected to no colored active left node or they are colored with more than $1$ color}     
\State Return\cm{If this right node is not resolvable}
\EndIf
\State $\text{Color} \gets $ Common color of the connected active left nodes
\State $\mathcal{I}' \gets \mathcal{I} \cap \{j| H_{i,j} = 1, 1 \leq j \leq n\}$ \cm{Colored active left nodes connected to this right node}
\State $a \gets \sum_{i \in \mathcal{I}'} { x_i e^{\bi \om \ell} }$
\State $b \gets \sum_{i \in \mathcal{I}'} { x_i e^{-\bi \om \ell} }$
\State $c \gets \sum_{i \in \mathcal{I}'} { 2 \cos(\om \ell) x_i  }$
\State $d \gets \sum_{i \in \mathcal{I}'} { x_i e^{\bi \om' \ell} }$

\For{\textbf{each} $z_1$ in $\{+1,-1\}$} \cm{Consider two signs of $\alpha$}
    \State $\alpha \gets z_1 \cos^{-1}(\frac{y_{i,3}^2 - y_{i,1}^2 - y_{i,2}^2}{2 y_{i,1} y_{i,2}})$
    \State $z \gets \frac{y_{i,1}}{y_{i,2}} \exp(\alpha\bi)$
    \State $k_1 \gets 1 - z + \frac{2(zb - a)}{c}$
    \State $k_2 \gets 1 + z$
    \State $k_3 \gets 1 - z$
    \State $k_4 \gets \frac{y_{i,3}}{|c|}$
    \State $k_5 \gets |k_1|^2 - k_4^2 |k_3|^2$
    \State $k_6 \gets |k_2|^2 - k_4^2 |k_2|^2$
    \State $k_7 \gets 2\rr{k_1}\ii{k_2} - 2\ii{k_1}\rr{k_2} + k_4^2 (2\rr{k_2}\ii{k_3} - 2\rr{k_3}\ii{k_2})$
    \State $k_8 \gets k_6^2 + k_7^2 - 2k_6k_7 + k_8^2$
    \State $k_9 \gets 2k_6k_7 - k_8^2 - 2k_7^2$
    \State $k_{10} \gets k_7^2$
    \For{\textbf{each} $z_2$ in $\{+1,-1\}$} \cm{Consider two solutions of a quadratic equation}
        \If {$k_9^2 - 4k_8k_{10} < 0$}
            \State Continue
        \EndIf    
        \If {$\frac{-k_9 + z_2 \sqrt{k_9^2 - 4k_8k_{10}}}{2k_8} < 0$}
            \State Continue
        \EndIf
        \State $\ell' \gets \cos^{-1} \left[ \sqrt{\frac{-k_9 + z_2 \sqrt{k_9^2 - 4k_8k_{10}}}{2k_8}} \right] / \om$ \cm{Find a candidate of $\ell$}
        \State $x' \gets \frac{z b - a}{e^{\bi \om \ell} - z e^{-\bi \om \ell}}$ \cm{Find a candidate of $x_\ell$}
        
        \If {$y_{i,4} = |d + e^{\bi \om' \ell'} x'|$} \cm{Check the validity of the candidates with $y_{i,4}$}       
            \State $x_{\ell'} \gets x'$ \cm{Assign a value to the component}
            \State $\mathcal{I}_0 \gets \mathcal{I}_0 \cup \{\ell' \}$ \cm{Declare a new found component}
            \State $\text{Color}_\ell' \gets $ Color \cm{Color the new component with the color of the other components connected to the right node}
            \State Return
        \EndIf
    \EndFor
\EndFor
\end{algorithmic}
\caption{Resolvable Multiton Processor}
\label{alg:resolvable_multiton_processor}
\end{algorithm*}

\end{document}